\newcommand{\R}{\mathbb{R}}
\newcommand{\C}{\mathbb{C}} 
\newcommand{\eps}{ {\varepsilon} }
\renewcommand{\d}{\partial}
\renewcommand{\div}{{\rm div}}
\newcommand{\kabs}{|k|}
\renewcommand{\Im}{{\rm Im}}
\renewcommand{\Re}{{\rm Re}}
\newcommand{\idmat}{\mathds{1}}
\newcommand{\op}{(1)}
\newcommand{\twop}{(2)}
\newcommand{\thp}{(3)}
\newcommand{\fp}{(4)}
\newcommand{\fip}{(5)}
\newcommand{\mue}{\mu_e}
\newcommand{\muc}{\mu_c}
\newcommand{\mumic}{\mu_{\text{micro}}}
\newcommand{\lame}{\lambda_e}
\newcommand{\lammic}{\lambda_{\text{micro}}}
\newcommand{\Lc}{L_c^2}
\newcommand{\norm}[1]{\left\lVert#1\right\rVert}
\newcommand{\sigmatil}{\widetilde{\sigma}}
\theoremstyle{remark}
\newtheorem{lemma}{Lemma}%[chapter]
\DeclareMathOperator{\tr}{tr}
\DeclareMathOperator{\curl}{curl}
\DeclareMathOperator{\Div}{Div}
\DeclareMathOperator{\Curl}{Curl}
\DeclareMathOperator{\sym}{sym}
\let\skew\relax
\DeclareMathOperator{\skew}{skew}
\numberwithin{equation}{section}
\let\@fnsymbol\@arabic
\title{Low-and high-frequency Stoneley waves, reflection and transmission at a Cauchy/relaxed micromorphic interface}
\author{Alexios Aivaliotis\thanks{Alexios Aivaliotis, corresponding author, alexios.aivaliotis@insa-lyon.fr, GEOMAS, INSA-Lyon, Universit\'e de Lyon, 20 avenue Albert Einstein,
		69621, Villeurbanne cedex, France}~, Ali Daouadji\thanks{Ali Daouadji, ali.daouadji@insa-lyon.fr, GEOMAS, INSA-Lyon, Universit\'e de Lyon, 20 avenue Albert Einstein,
		69621, Villeurbanne Cedex, France}~, Gabriele Barbagallo\thanks{Gabriele Barbagallo, gabriele.barbagallo@insa-lyon.fr, GEOMAS, INSA-Lyon, Universit\'e de Lyon, 20 avenue Albert Einstein,
		69621, Villeurbanne Cedex, France}~, Domenico Tallarico\thanks{Domenico Tallarico, domenico.tallarico@insa-lyon.fr, GEOMAS, INSA-Lyon, Universit\'e de Lyon, 20 avenue Albert Einstein,
		69621, Villeurbanne Cedex, France},\\Patrizio Neff \thanks{Patrizio Neff, patrizio.neff@uni-due.de, Head of Chair for Nonlinear Analysis and Modelling, Fakultät für Mathematik, Universität Duisburg-Essen,
		Mathematik-Carrée, Thea-Leymann-Straße 9, 45127 Essen, Germany}~ and Angela Madeo\thanks{Angela Madeo, angela.madeo@insa-lyon.fr, GEOMAS, INSA-Lyon, Universit\'e
		de Lyon, 20 avenue Albert Einstein, 69621, Villeurbanne Cedex, France}}
\date{}
\begin{document}
\maketitle
%\addtocounter{footnote}{1}
\small
\begin{abstract}
In this paper we study the reflective properties of a $2$D interface separating a homogeneous solid from a band-gap metamaterial by modeling it as an interface between a classical Cauchy continuum and a relaxed micromorphic medium. We show that the proposed model is able to predict the onset of Stoneley interface waves at the considered interface both at low and high-frequency regimes. More precisely, critical angles for the incident wave can be identified, beyond which classical Stoneley waves, as well as microstructure-related Stoneley waves appear. We show that this onset of Stoneley waves, both at low and high frequencies, strongly depends on the relative mechanical properties of the two media. We suggest that a suitable tailoring of the relative stiffnesses of the two media can be used to conceive ``smart interfaces'' giving rise to wide frequency bounds where total reflection or total transmission may occur.
% Such frequency bounds go far beyond the band-gap region when the angle of incidence is ``sufficiently far'', according to a deduced formula, from normal incidence.
\end{abstract}
\normalsize
\textbf{Keywords:} enriched continuum mechanics, metamaterials, band-gaps, wave-propagation, Rayleigh waves, Stoneley waves, relaxed micromorphic model, interface, total reflection and transmission.

\vspace{2mm}
\textbf{}\\
\textbf{AMS 2010 subject classification:} 74A10 (stress), 74A30 (nonsimple
materials), 74A60 (micromechanical theories),
74B05 (classical linear elasticity), 74J05 (linear waves), 74J10 (bulk waves), 75J15 (surface waves), 74J20 (wave scattering), 74M25 (micromechanics), 74Q15
(effective constitutive equations).
\newpage 
\tableofcontents
\section{Introduction}
In this paper, we rigorously study the two-dimensional problem of reflection and transmission of elastic waves at an interface between a homogeneous solid and a band-gap metamaterial modeled as a Cauchy medium and a relaxed micromorphic medium, respectively. 

We propose a systematic revision of the reflection/transmission problem at a Cauchy/Cauchy interface in order to set up our notation and to carry out clearly all analytical results concerning the critical incident angles, giving rise to Stoneley interface weaves. This systematic presentation allows us to readily generalize the adopted techniques to the case of reflection and transmission at an interface separating a Cauchy medium from a relaxed micromorphic one. We observe in this latter case the existence of high-frequency critical angles of incidence, which are able to determine the onset of microstructure-related Stoneley interface waves. We clearly show that both low and high-frequency interface waves are directly related to the relative stiffnesses of the two media and can discriminate between total reflection and total transmission phenomena.

\textcolor{black}{Recent years have seen the abrupt development of acoustic metamaterials whose mechanical properties allow unorthodox material behaviors such as band-gaps (\cite{wang2014harnessing,xiao2011longitudinal,liu2000locally}), cloaking (\cite{misseroni2016cymatics,chen2007acoustic,valentine2009optical,bueckmann2015mechanical}), focusing (\cite{guenneau2007acoustic,cummer2016controlling,tallarico2017tilted}), wave-guiding  (\cite{kaina2014slow,tallarico2017edge}) etc. The bulk behavior of these metamaterials has gathered the attention of the scientific community via the refinement of mathematical techniques, such as Bloch-Floquet analysis or homogenization techniques (\cite{lions1978asymptotic,craster2010high}). More recently, filtering properties of bulk periodic media, i.e. the onset of so-called band-gaps and non-linear dispersion, has been given attention in the framework of enriched continuum models (\cite{neff2014unifying,dagostino2018effective,madeo2014band,madeo2015wave,madeo2016complete,madeo2016first,madeo2016reflection,madeo2017modeling,madeo2017relaxed,madeo2017review,madeo2017role,barbagallo2017transparent,neff2015relaxed,neff2017real}). Although bulk media are investigated in detail, the study of the reflective/refractive properties at the boundary of such metamaterials is far from being well understood. Good knowledge of the reflective and transmittive  properties of such interfaces could be a key point for the conception of metamaterials systems, which would completely transform the idea we currently have about reflection and transmission of elastic waves at the interface between two solids. It is for this reason that many authors convey their research towards so-called ``metasurfaces'' (\cite{xie2014wavefront, liang2018wavefront, li2014acoustic}), i.e. relatively thin layers of metamaterials whose microstructure is able to interact with the incident wave-front in such a way that the resulting reflection/transmission patterns exhibit exotic properties, such as total reflection or total transmission, conversion of a bulk incident wave in interface waves, etc.}

Notwithstanding the paramount importance these metasurfaces may have for technological advancements in the field of noise absorption or stealth, they show limitations in the sense that they work for relatively small frequency ranges, for which the wavelength of the incident wave is comparable to the thickness of the metasurface itself. This restricts the range of applicability of such devices, above all for what concerns low frequencies which would result in very thick metasurfaces.

In this paper, we choose a different approach for modeling the reflective and diffractive  properties of an interface which separates a bulk homogeneous material from a bulk metamaterial. This interface does not itself contain any internal structure, but its refractive properties can be modulated by acting on the relative stiffnesses of the two materials on each side. The homogeneous material is modeled via a classical linear-elastic Cauchy model, while the metamaterial is described by the linear relaxed micromorphic model, an enriched continuum model which already proved its effectiveness in the description of the bulk behavior of certain metamaterials (\cite{madeo2016first,madeo2017modeling,madeo2017relaxed}).

We are able to clearly show that when the homogeneous material is ``stiffer'' than the considered metamaterial, zones of very high (sometimes total) transmission can be found both at low and high frequencies. More precisely, we find that high-frequency total transmission is discriminated by a critical angle, beyond which total transmission gradually shifts towards total reflection. Engineering systems of this type could be fruitfully exploited for the conception of wave filters and polarizers, for non-destructive evaluation or for selective cloaking.

On the other hand, we show that when the homogeneous system is ``softer'' than the metamaterial, broadband total reflection can be achieved for almost all frequencies and angles of incidence. This could be of paramount importance for the conception of wave screens able to isolate from noise and/or vibrations. We are also able to show that such total reflection phenomena are related to the onset of classical Stoneley interface waves\footnote{Interface waves propagating at the interface between an elastic solid and air are called Rayleigh waves, after Lord Rayleigh, who was the first to show their existence (\cite{rayleigh1885onwaves}). Interface waves propagating at the surface between two solids are called Stoneley waves.} at low frequencies (\cite{stoneley1924elastic}) and of new microstructure-related interface waves at higher frequencies. 

We explicitly remark that no precise microstructure is targeted in this paper, since the presented results could be generalized to any specific metamaterial without changing the overall results. This is due to the fact that the properties we unveil here only depend on the ``relative stiffnesses'' of the considered media and not on the absolute stiffness of the metamaterial itself.

The considerations drawn in this paper thus open the way for the conception of new metastructures for wave front manipulation with all the possible applications aforementioned applications. 

\subsection{Notational agreement}
Let $\R^{3\times 3}$ be the set of all real $3\times 3$ second order tensors (matrices) which we denote by capital letters. A simple and a double contraction between tensors of any suitable order is denoted by $\cdot$ and $:$ respectively, while the scalar product of such tensors by $\left\langle \cdot, \cdot \right\rangle$.\footnote{For example, $(A\cdot v)_i=A_{ij}v_j, (A\cdot B)_{ik}=A_{ij}B_{jk}, A:B=A_{ij}B_{ji},(C\cdot B)_{ijk}=C_{ijp}B_{pk},(C:B)_{i}=C_{ijp}B_{pj},\left\langle v,w\right\rangle = v\cdot w = v_i w_i, \left\langle A, B \right\rangle = A_{ij}B_{ij}$, etc.} The Einstein sum convention is implied throughout this text unless otherwise specified. The standard Euclidean scalar product on $\R^{3 \times 3}$ is given by $\left\langle X, Y \right\rangle_{\R^{3\times 3}} = \tr (X\cdot Y^T)$ and consequently the Frobenius tensor norm is $\norm{X}^2 = \left\langle X, X\right\rangle_{\R^{3\times 3}}$. From now on, for the sake of notational brevity, we omit the scalar product indices $\R^3, \R^{3\times 3}$ when no confusion arises. The identity tensor on $\R^{3\times 3}$ will be denoted by $\idmat$; then, $\tr(X)=\left\langle X, \idmat \right\rangle$.

Consider a body which occupies a bounded open set $B_L \subset \R^3$ and assume that the boundary $\d B_L$ is a smooth surface of class $C^2$. An elastic material fills the domain $B_L$ and we denote by $\Sigma$ any material surface embedded in $B_L$. The outward unit normal to $\d B_L$ will be denoted by $\nu$ as will the outward unit normal to a surface $\Sigma$ embedded in $B_L$ (see Fig. \ref{fig:BL}). 
\begin{figure}[H]
	\centering
	\includegraphics[scale=0.4]{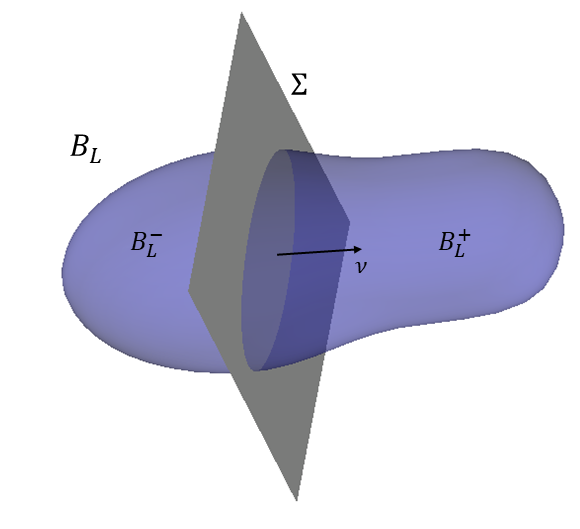}
	\caption{\small Schematic representation of the body $B_L$, the surface  $\Sigma$ and the sub-bodies $B_L^-$ and $B_L^+$.}
	\label{fig:BL}
\end{figure}
Given a field $a$ defined on the surface $\Sigma$, we set 
\begin{equation}
[[a]] = a^+ - a^-,
\end{equation}
which defines a measure of the jump of $a$ through the material surface $\Sigma$, where 
\begin{equation}
[\cdot ]^{-} := \lim_{\substack{x \in B_L^{-}\setminus \Sigma \\ x \to \Sigma}} [\cdot ], \quad [\cdot ]^{+} := \lim_{\substack{x \in B_L^{+}\setminus \Sigma \\ x \to \Sigma}} [\cdot ],
\end{equation}
with $B_L^{-}, B_L^{+}$ being the two subdomains which result from splitting $B_L$ by the surface $\Sigma$ (see again Fig. \ref{fig:BL}). 

The Lebesgue spaces of square integrable functions, vectors or tensors fields on $B_L$ with values on $\R, \R^3, \R^{3\times 3}$ respectively, are denoted by $L^2(B_L)$. Moreover we introduce the standard Sobolev spaces\footnote{The operators $\nabla$, $\curl$ and $\div$ are the classical gradient, curl and divergence operators. In symbols, for a field $u$ of any order, $(\nabla u)_i=u_{,i}$, for a vector field $v$, $(\curl v)_i = \eps_{ijk}v_{k,j}$ and for a field $w$ of order greater than $1$, $\div w = w_{i,i}$. }
\begin{align}
&H^1(B_L)=\left\lbrace u \in L^2(B_L)| \,\nabla u \in L^2(B_L), \norm{u}^2_{H^1(B_L)}:=\norm{u}^2_{L^2(B_L)} + \norm{\nabla u}^2_{L^2(B_L)}\right\rbrace, \nonumber \\ 
&H(\curl ;B_L)=\left\lbrace v \in L^2(B_L)|\, \curl v \in L^2(B_L), \norm{v}^2_{H(\curl ;B_L)}:=\norm{v}^2_{L^2(B_L)} + \norm{\curl v}^2_{L^2(B_L)}\right\rbrace, \nonumber \\
&H(\div ;B_L)=\left\lbrace v \in L^2(B_L)|\, \div v \in L^2(B_L), \norm{v}^2_{H(\div ;B_L)}:=\norm{v}^2_{L^2(B_L)} + \norm{\div v}^2_{L^2(B_L)}\right\rbrace, \label{SobolevSpaces}
\end{align}
of functions $u$ and vector fields $v$ respectively.

%For vector fields $v$ with components in $H^1(B_L)$, i.e. $v=(v_1,v_2,v_3)^T$, $v_i \in H^1(B_L), i=1,2,3$, we define $\nabla v =((\nabla v_1)^T,(\nabla v_2)^T,(\nabla v_3)^T)^T$, while for tensor fields $P$ with rows in $H(\curl ; B_L)$ (resp. $H(\div ; B_L)$), i.e. $P=(P_1^T,P_2^T,P_3^T), P_i\in H(\curl ; B_L)$ (resp. $P_i \in H(\div ; B_L)$), $i=1,2,3$ we define: 
%\[ \Curl P = ((\curl P_1)^T,(\curl P_2)^T,(\curl P_3)^T)^T, \quad \Div P =(\div P_1, \div P_2, \div P_3)^T.\]
%The corresponding Sobolev spaces are denoted by $H^1(B_L), H(\Div ; B_L), H^(\Curl ; B_L)$. 

For vector fields $v$ with components in $H^1(B_L)$ and for tensor fields $P$ with rows in $H(\curl ; B_L)$ (resp. $H(\div ; B_L)$), i.e.
\begin{equation}
v=(v_1,v_2,v_3)^T, \text{ } v_i \in H^1(B_L), \quad P=(P_1^T,P_2^T,P_3^T)^T, \text{ } P_i\in H(\curl ; B_L) \text{ resp. }P_i \in H(\div ; B_L), \quad i=1,2,3, 
\end{equation}
we define 
\begin{align} 
\nabla v &=((\nabla v_1)^T,(\nabla v_2)^T,(\nabla v_3)^T)^T,\nonumber \\
 \Curl P &= ((\curl P_1)^T,(\curl P_2)^T,(\curl P_3)^T)^T, \label{eq:nablacurldiv}
 \\ \Div P &=(\div P_1, \div P_2, \div P_3)^T. \nonumber
\end{align}
The corresponding Sobolev spaces are denoted by $H^1(B_L)$, $H(\Div ; B_L)$, $H(\Curl ; B_L)$. 

\section{Governing equations and energy flux}\label{sec:governingeqs}
Let $t_0>0$ be a fixed time and consider a bounded domain $B_L \subset \R^3$. The action functional of the system at hand is defined as 
\begin{equation}\label{eq:actionfunct}
\mathcal{A}=\int_0^{t_0} \int_{B_L} (J-W) dX dt,
\end{equation}
where $J$ is the kinetic and $W$ the potential energy of the system.
%In general, the energy flux is defined as the rate at which energy is communicated per unit area of a surface. Naturally, this definition involves a unit normal to that surface. The ''surface'' we are interested in in this scenario is the interface between the two media; for that reason, we will only need consider the first component of the energy flux vector - the one in the $x_1$ direction.
\subsection{Governing equations for the classical linear elastic Cauchy model}
Let $u$ be the classical macroscopic displacement field and $\rho$ the macroscopic mass density. Then, the kinetic energy for a classical, linear-elastic, isotropic Cauchy medium takes the form\footnote{Here and in the sequel, we denote by the subscript $,	t$ the partial derivative with respect to time.} 
\begin{equation}\label{eq:Cauchykinetic}
J=\frac{1}{2}\rho \norm{u_{,t}}^2. 
\end{equation}
The linear elastic strain energy density  for the isotropic case is given by 
\begin{equation}\label{eq:Cauchyenergy}
W = \mu \norm{\sym \nabla u}^2 + \frac{\lambda}{2}(\tr(\sym \nabla u))^2,
\end{equation}
where $\mu$ and $\lambda$ are the classical Lam\'e parameters. Applying the least action principle allows us to derive the \emph{equations of motion in strong form} which, expressed equivalently in compact and index form, are respectively given by 
\begin{equation}\label{eq:Cauchystrong}
\rho \, u_{,tt} = \Div \sigma, \qquad \rho \, u_{i,tt} = \sigma_{ij,j},
\end{equation}
where 
\begin{equation}\label{eq:Cauchystress}
\sigma = 2 \mu \sym \nabla u + \lambda \tr(\sym \nabla u)\idmat, \qquad \sigma_{ij} = \mu(u_{i,j}+u_{j,i})+\lambda u_{k,k}\delta_{ij},
\end{equation}
is the classical symmetric Cauchy stress tensor for isotropic materials.

%When a wave is incident or transmitted (i.e. when it is traveling in some fixed direction), then this normal is $\nubf = (-1,0,0)$. When a wave is reflected, however (i.e. when it is traveling in the opposite direction of the one from which it arrived on the surface element), then $\nubf=(1,0,0)$. All calculations in this paper have been performed with this in mind. 

\subsection{Governing equations for the relaxed micromorphic model}
The kinetic energy in the isotropic relaxed micromorphic model takes the following form \cite{madeo2016complete,madeo2016reflection}
\begin{equation}\label{eq:relaxedkinetic}
J = \frac{1}{2}\rho \norm{u_{,t}}^2+\frac{1}{2}\eta\norm{P_{,t}}^2,
\end{equation}
where $u$ is the macroscopic displacement field and $P\in \R^{3\times 3}$ is the non-symmetric micro-distortion tensor which accounts for independent micro-motions at lower scales, $\rho$ is the apparent macroscopic density and $\eta$ is the micro-inertia density.

The strain energy for an isotropic relaxed micromorphic medium is given by \cite{madeo2016complete,madeo2016reflection}
\begin{align}
W = &\,\mue \norm{\sym(\nabla u - P)}^2 +\frac{\lame}{2}(\tr(\nabla u - P))^2 + \muc \norm{\skew(\nabla u - P)}^2 \nonumber \\
&+ \mumic \norm{\sym P}^2+\frac{\lammic}{2}(\tr P)^2 + \frac{\mue \Lc}{2}\norm{\Curl P}^2. \label{eq:relaxedenergy}
\end{align}
Minimizing the action functional (i.e. imposing $\delta \mathcal{A} = 0$, integrating by parts a suitable number of times and taking arbitrary variations $\delta u$ and $\delta P$ for the kinematic fields), allows us to obtain the \emph{strong form of the bulk equations of motion} (\cite{ghiba2014relaxed,madeo2015wave,madeo2014band,neff2014unifying})
%,BandGaps1,BandGaps2,Ghiba1}):
%\begin{align}\label{eq:relaxedeqns}
%\rho \, u_{tt} &= \Div \widetilde{\sigma}, &\text{ } \rho \, u_{i,tt}&=\widetilde{\sigma}_{ij,j}\\
%\eta \,P_{tt} &= \widetilde{\sigma}-s-\Curl m, &\text{ } \eta \, P_{ij,tt} &= \widetilde{\sigma}_{ij} - s_{ij} - m_{ik,p}\eps_{jkp} \nonumber
%\end{align}
\begin{equation}\label{eq:relaxedeqns}
\begin{array}{clccl}
\rho \, u_{tt} = & \Div \widetilde{\sigma}, & &\quad \rho \, u_{i,tt} = & \widetilde{\sigma}_{ij,j},\\
\eta \,P_{tt} = &  \widetilde{\sigma}-s-\Curl m, & & \quad \eta \, P_{ij,tt}  = &\widetilde{\sigma}_{ij} - s_{ij} - m_{ik,p}\eps_{jkp},
\end{array}
\end{equation}
where  
\begin{align}
\widetilde{\sigma}&=2 \mue \sym(\nabla u - P) + \lame \tr (\nabla u - P)\,\idmat + 2\muc \skew(\nabla u - P), \nonumber\\
s&=2\mumic \sym P + \lammic \left(\tr P\right) \, \idmat, \label{eq:relaxedRHS} \\
m&= \mue \Lc \Curl P, \nonumber
\end{align}
$\eps_{jkp}$ is the Levi-Civita tensor and the equivalent index form of the introduced quantities reads
\begin{align}
\widetilde{\sigma}_{ij} &= \mue (u_{i,j} - P_{ij} + u_{j,i} - P_{ji}) + \lame (u_{k,k} - P_{kk})\delta_{ij} + \muc(u_{i,j} - P_{ij} -u_{j,i} + P_{ji}),\nonumber\\
s_{ij} &= \mumic(P_{ij}+P_{ji}) + \lammic P_{kk}\delta_{ij}, \\
m_{ik} &= \mue \Lc P_{ia,b}\eps_{kba}.\nonumber
\end{align}
Here, $m$ is the moment stress tensor, $\muc\geq 0$ is called the Cosserat couple modulus and $L_c \geq 0$ is a characteristic length scale.

\subsection{Conservation of total energy and energy flux}
The mechanical system we are considering is conservative and, therefore, the energy must be conserved in the sense that the following differential form of a continuity equation must hold:
\begin{equation}
E_{,t}+\div H=0,\label{EnnergyConservation}
\end{equation}
where $E=J+W$ is the total energy of the system and $H$ is the energy flux vector, whose explicit expression is computed in the following subsections for a classical Cauchy medium and a relaxed micromorphic one. 

%The conservation of energy \eqref{EnnergyConservation} plays an important role when considering a surface of discontinuity between two continuous media. The counterpart of the bulk equation \eqref{EnnergyConservation} at the interface tells that the normal part of the energy flux $H$ must be conserved at the considered interface. The tangent part of the energy flux vector $H$ is not subjected to the same restriction, so that we can observe the onset of Stoneley waves along the interface. Such interface waves do not have to satisfy the conservation of energy at the interface. On the other hand, when considering waves with non-vanishing normal component, it is necessary to check that they contribute to the normal part of the energy flux in such a way that it is conserved at the interface. The definition of the normal part of the flux for different waves will allow us to introduce the concepts of reflection and transmission coefficients as a measure of the partition of the energy of the incident wave between reflected and transmitted waves. Nevertheless, one must keep in mind that such reflection and transmission coefficients do not contain any information about the possible presence of Stoneley waves.

The conservation of energy \eqref{EnnergyConservation} plays an important role when considering a surface of discontinuity between two continuous media, i.e. it prescribes continuity of the normal component of the energy flux $H$ across the considered interface. The tangent part of the energy flux vector $H$ is not subjected to the same restriction, so that we can observe the onset of Stoneley waves along the interface. Such interface waves do not have to satisfy the conservation of energy at the interface. On the other hand, when considering waves with non-vanishing normal component, it is necessary to check that they contribute to the normal part of the energy flux in such a way that it is conserved at the interface. The definition of the normal part of the flux for different waves will allow us to introduce the concepts of reflection and transmission coefficients as a measure of the partition of the energy of the incident wave between reflected and transmitted waves. Nevertheless, one must keep in mind that such reflection and transmission coefficients do not contain any information about the possible presence of Stoneley waves.

%The conservation of energy \eqref{EnnergyConservation} plays an important role when considering a surface of discontinuity between two continuous media, i.e. it prescribes continuity of the normal component of the energy flux $H$ across the considered interface. We remark here that Stoneley interface waves do not contribute to the normal component of the flux, since they decay exponentially away from the interface. The definition of the normal part of the flux for different waves will allow us to introduce the concepts of reflection and transmission coefficients as a measure of the partition of the energy of the incident wave between reflected and transmitted waves. Nevertheless, one must keep in mind that such reflection and transmission coefficients do not contain any information about the possible presence of Stoneley waves.
\subsubsection{Classical Cauchy medium}
In the case of a Cauchy medium we have, differentiating expressions \eqref{eq:Cauchykinetic} and \eqref{eq:Cauchyenergy} with respect to time  and using definition \eqref{eq:Cauchystress}
\begin{equation*}
E_{,t}=J_{,t}+W_{,t} = \rho \left\langle u_{,t},u_{,tt}\right\rangle+\left\langle(2\mu \sym \nabla u + \lambda \tr (\sym \nabla u)\idmat),\sym \nabla (u_{,t})\right\rangle = \rho \left\langle u_{,t},u_{,tt}\right\rangle + \left\langle \sigma,\sym \nabla (u_{,t})\right\rangle.
\end{equation*}
We now replace $\rho \, u_{,tt}$ from the equations of motion \eqref{eq:Cauchystrong} and we use the fact that, given the symmetry of the Cauchy stress tensor $\sigma$, $\left\langle \sigma, \sym \nabla (u_{,t})\right\rangle = \left\langle \sigma, \nabla (u_{,t})\right\rangle = \Div (\sigma \cdot u_{,t})-\Div \sigma \cdot u_{,t}$ to get
\begin{equation}\label{eq:EdotCauchy}
E_{,t}=\Div \sigma \cdot u_{,t}+\Div(\sigma\cdot u_{,t})-\Div \sigma \cdot u_{,t} = \Div (\sigma \cdot u_{,t}).
\end{equation}
Thus, by comparing \eqref{eq:EdotCauchy} to the conservation of energy \eqref{EnnergyConservation}, we can deduce that the energy flux in a Cauchy continuum is given by (see e.g. \cite{achenbach1973wave})
\begin{equation}\label{eq:Cauchyflux}
H = -\sigma \cdot u_{,t}, \qquad H_k = - \sigma_{ik} u_{k,t}.
\end{equation}

\subsubsection{Relaxed micromorphic model}

In the case of a relaxed micromorphic medium we have, differentiating equations \eqref{eq:relaxedkinetic} and \eqref{eq:relaxedenergy} with respect to time\footnote{Let $\psi$ be a vector field and $A$ a second order tensor field. Then
	\begin{equation}\label{eq:vectorid1}
	\langle \nabla \psi, A\rangle  = \Div (\psi \cdot A) - \langle \psi, \Div A \rangle.
	\end{equation}
	Taking $\psi = u_{,t}$ and $A = \sigmatil$ we have
	\begin{equation}\label{eq:vectorid2}
	\left\langle \nabla u_{,t}, \sigmatil \right\rangle = \Div(u_{,t}\cdot \sigmatil) - \left\langle u_{,t},\Div \sigmatil \right\rangle.
	\end{equation}
	Furthermore, we have the following identity
	\begin{equation}\label{eq:vectorid3}
	\left\langle m, \Curl P_{,t} \right\rangle = \Div\left( (m^T\cdot P_{,t}):\eps\right)+\left\langle \Curl m , P_{,t}\right\rangle,
	\end{equation}
	which follows from the identity $\div( v \times w)=w \cdot\curl v - v \cdot \curl w$, where $v, w$ are suitable vector fields, $\times$ is the usual vector product and $:$ is the double contraction between tensors.} 
\begin{align}
E_{,t}=&\,\rho \left\langle u_{,t},u_{,tt}\right\rangle + \eta \left\langle P_{,t},P_{,tt}\right\rangle + \left\langle2\mue \sym (\nabla u - P),\sym(\nabla u_{,t}-P_{,t})\right\rangle + \left\langle \lame \tr (\nabla u - P) \idmat,\nabla u_{,t}-P_{,t}\right\rangle \nonumber \\
&+\left\langle 2\muc \skew(\nabla u - P),\skew (\nabla u_{,t}-P_{,t})\right\rangle + \left\langle 2\mumic \sym P, \sym P_{,t}\right\rangle + \left\langle \lammic (\tr P)\idmat,P_{,t}\right\rangle  \nonumber\\
&+ \left\langle \mue \Lc \Curl P, \Curl P_{,t}\right\rangle,\label{eq:EdotRMM1} 
\end{align}
or equivalently,\footnote{We explicitly recall that given a symmetric tensor $S$, a skew-symmetric tensor $A$ and a generic tensor $C$, we have $\left\langle S, C\right\rangle = \left\langle S, \sym C\right\rangle$ and $\left\langle A, C\right\rangle = \left\langle A, \skew C\right\rangle$.}  suitably collecting terms and using definitions \eqref{eq:relaxedRHS} for $\widetilde{\sigma}, s$ and $m$:
\begin{align}
E_{,t}=&\,\rho \left\langle u_{,t},u_{,tt}\right\rangle + \eta \left\langle P_{,t},P_{,tt}\right\rangle + \left\langle2\mue \sym (\nabla u - P)+\lame \tr (\nabla u - P),\sym(\nabla u_{,t}-P_{,t})\right\rangle \nonumber\\
&+\left\langle 2\muc \skew(\nabla u - P),\skew (\nabla u_{,t}-P_{,t})\right\rangle + \left\langle 2\mumic \sym P+\lammic (\tr P)\idmat, \sym P_{,t}\right\rangle + \left\langle \mue \Lc \Curl P, \Curl P_{,t}\right\rangle \nonumber\\
=&\,\rho \left\langle u_{,t},u_{,tt}\right\rangle + \eta \left\langle P_{,t},P_{,tt}\right\rangle + \left\langle \widetilde{\sigma},\nabla u_{,t}\right\rangle - \left\langle \widetilde{\sigma}-s,P_{,t}\right\rangle + \left\langle m,\Curl P_{,t}\right\rangle.\label{eq:EdotRMM2} 
\end{align}
We can now replace $\rho \, u_{,tt}$ and $\eta \, P_{,tt}$ by the governing equations \eqref{eq:relaxedeqns} and use \eqref{eq:vectorid2} and \eqref{eq:vectorid3} to finally get
\begin{align}
\hspace{-0.4cm}
E_{,t}&=\left\langle u_{,t},\Div \widetilde{\sigma}\right\rangle + \left\langle P_{,t}, \widetilde{\sigma} - s - \Curl m\right\rangle + \Div\left(u_{,t} \cdot \widetilde{\sigma}\right) -\left\langle u_{,t},\Div \widetilde{\sigma} \right\rangle- \left\langle \widetilde{\sigma}-s,P_{,t}\right\rangle + \left\langle m, \Curl P_{,t}\right\rangle \nonumber
\\
&=\Div (\widetilde{\sigma}^T\cdot u_{,t})-\left\langle \Curl m , P_{,t}\right\rangle + \Div \left(\left(m^T\cdot P_{,t}\right):\eps\right) + \left\langle\Curl m, P_{,t}\right\rangle = \Div\left(\widetilde{\sigma}^T\cdot u_{,t} + \left(m^T\cdot P_{,t}\right):\eps\right). \label{eq:EdotRMM3}
\end{align}
Thus, by comparison of \eqref{eq:EdotRMM3} with \eqref{EnnergyConservation}, the energy flux for a relaxed micromorphic medium is given by 
\begin{equation}\label{eq:relaxedflux}
H =-\widetilde{\sigma}^T\cdot u_{,t} - \left(m^T\cdot P_{,t}\right):\eps, \qquad H_{k} = -u_{i,t} \widetilde{\sigma}_{ik} - m_{ih}P_{ij,t}\eps_{jhk},
\end{equation}
where $:$ is the double contraction between tensors. 
\section{Boundary conditions}\label{sec:BoundCond}
\subsection{Boundary conditions on an interface between two classical Cauchy media}\label{sec:BoundCondCC}
As it is well known (see e.g. \cite{achenbach1973wave,graff1975wave,madeo2016reflection}), the boundary conditions which can be imposed at an interface between two Cauchy media are continuity of displacement or continuity of force.\footnote{It is also well known that if no surface forces are present at the considered interface, imposing continuity of displacement implies continuity of forces and vice versa. Both such continuity conditions must then be satisfied at the interface.} For the displacement, this means
\begin{equation}\label{displcont}
[[u]] =0 \Rightarrow u^{-} = u^{+},
\end{equation}
where $u^{-}$ is the macroscopic displacement on the ``minus'' side (the $x_1<0$ half-plane) and $u^{+}$ is the macroscopic displacement on the ``plus'' side (the $x_1>0$ half-plane).

As for the jump of force we have
\begin{equation}\label{jumpforcevec}
[[f]] = 0 \Rightarrow f^{-} = f^{+},
\end{equation}
where $f^{-}$ and $f^{+}$ are the surface force vectors on the ``minus'' and on the ``plus'' side, respectively. We recall that in a Cauchy medium, $f = \sigma \cdot \nu$, $\nu$ being the outward unit normal to the surface and $\sigma$ being the Cauchy stress tensor given by \eqref{eq:Cauchystress}.
\subsection{Boundary conditions on an interface between a classical Cauchy medium and a relaxed micromorphic medium}\label{sec:BoundCondCRM}
In this article we will be imposing two kinds of boundary conditions between a Cauchy and a relaxed micromorphic medium (see \cite{madeo2016reflection} for a detailed discussion). The first kind is that of a free microstructure, i.e. we allow the microstructure to vibrate freely. The second kind is that of a fixed microstructure, i.e. we do not allow any movement of the microstructure at the interface. For a derivation of these types of boundary conditions see \cite{madeo2016reflection}. 

As mentioned, we impose continuity of displacement, i.e. 
\begin{equation}
u^{-} = u^{+},
\end{equation}
where now the ``plus'' side is occupied by the relaxed micromorphic medium.\footnote{In the following, it will be natural to collect variables of the relaxed medium in two vectors $v_1$ and $v_2$, so that the components of the displacement $u^{+}$ will be denoted by $u^{+}=(v^1_1,v^1_2,v^1_1)^T$, see equations \eqref{eq:v1}, \eqref{eq:v2}.} Imposing continuity of displacement also implies continuity of force,\footnote{This fact can be deduced from the minimization of the action functional. Indeed, imposing the variation to be zero, we find the boundary condition $[[\langle F,\delta u\rangle]]=0\Rightarrow f\, u^{+} - t\, u^{-} = 0$, which implies $f=t$ since we impose continuity of displacement.} i.e. 
\[
f = t,
\]
where $f = \sigma \cdot \nu$ is the surface force calculated on the Cauchy side and the force for the relaxed micromorphic model is given by 
\begin{equation}\label{eq:forcemicromorphic}
t_i = \widetilde{\sigma}_{ij}\nu_j,
\end{equation}
where $\nu$ is the outward unit normal vector to the interface and $\widetilde{\sigma}$ is given in \eqref{eq:relaxedRHS}.

The least action principle provides us with another jump duality condition for the case of the connection between a Cauchy and a relaxed micromorphic medium. This extra jump condition specifies what we call ``free microstructure'' or ``fixed microstructure'' (see \cite{madeo2016reflection}).

In order to define the two types of boundary conditions we are interested in, we need the concept of double force $\tau$ which is the dual quantity of the micro-distortion tensor $P$ and is defined as \cite{madeo2016reflection}
\begin{equation}\label{eq:doubleforcedef}
\tau=-m\cdot \eps \cdot \nu,\quad \tau_{ij}=-m_{ik}\eps_{kjh}\nu_{h},
\end{equation}
where the involved quantities have been defined in \eqref{eq:relaxedRHS}.

\subsubsection{Free microstructure}
In this case, the macroscopic displacement is continuous while the microstructure of the relaxed micromorphic medium is free to move at the interface (\cite{madeo2016first,madeo2016reflection,madeo2017relaxed}). Leaving the interface free to move means that $P$ is arbitrary, which, on the other hand, implies the double force $\tau$ must vanish. We have then
\begin{equation}\label{eq:BCfreemicro}
[[u_i]]=0, \quad t_i -f_i = 0,\quad \tau_{ij} = 0, \quad i=1,2,3, \text{ } j=2,3.
\end{equation}
Figure \ref{fig:freemicro} gives a schematic representation of this boundary condition.
\\
\begin{figure}[H]
	\begin{centering}	
		\begin{picture}(248,125)
		
	%	\multiput(124,10)(0,6){20}{\line(-1,1){10}}
		\put(0,10){\line(0,1){124}}
		\put(0,10){\line(1,0){248}}
		\put(0,134){\line(1,0){248}}
		\linethickness{0.7 mm}
		\put(124,10){\line(0,1){124}}
		\thinlines
		\put(248,10){\line(0,1){124}}
		%	\multiput(3,0)(12,0){24}{\line(1,0){6}}
		%   \multiput(3,144)(12,0){24}{\line(1,0){6}}
		\multiput(131,16)(11,0){11}{
			\multiput(0,0)(0,11){11}{\circle{7.2}}}
		
		\put(35,140){\footnotesize{Cauchy medium $\ominus$}}		
		\put(126,140){\footnotesize{relaxed micromorphic medium $\oplus$}}

		%	\put(20,0){$\left[[ u\right]] =0,\qquad\qquad  t- f=0,\qquad\qquad\tau\cdot\nu_{1}=\tau\cdot\nu_{2}=0.$}		
		\end{picture}
		\par\end{centering}
	
	\caption{\small Schematic representation of a macro internal clamp with free microstructure at a Cauchy/relaxed-micromorphic interface \cite{madeo2016first,madeo2016reflection}.}
	\label{fig:freemicro}
\end{figure}
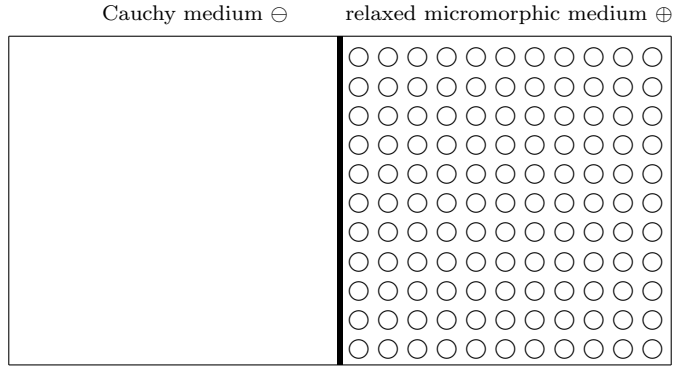

%\begin{center}
%\begin{tikzpicture}
% \draw (5,2.5) -- (0,2.5);
% \draw (2.5,0) -- (2.5,5);
%
%\draw [->]  (0.8,4.8) -- (2.5,2.5 ;
%\draw [-] (2.14,3) arc (150:52:7pt); 
%\end{tikzpicture}
%\end{center}
\subsubsection{Fixed microstructure}
This is the case in which we impose that the microstructure on the relaxed micromorphic side does not vibrate at the interface. The boundary conditions in this case are (see \cite{madeo2016reflection})\footnote{Let us remark again that the first condition (continuity of displacement) implies the second one when no surface forces are applied at the interface. On the other hand, imposing the tangent part of $P$ to be equal to zero implies that the double force $\tau$ is left arbitrary.}\footnote{We remark here that only the tangent part of the double force in \eqref{eq:BCfreemicro} or of the micro-distortion tensor in \eqref{eq:BCfixedmicro} must be assigned (\cite{neff2015relaxed,neff2014unifying}). This is peculiar of the relaxed micromorphic model and is related to the fact that only $\Curl P$ appears in the energy. In a standard Mindlin-Eringen model, where the whole $\nabla P$ appears in the energy, the whole double force $\tau$ (or alternatively the whole micro-distortion tensor $P$) must be assigned at the interface. Finally, in an internal variable model (no derivatives of $P$ in the energy), no conditions on $P$ or $\tau$ must be assigned at the considered interface.}
\begin{equation}\label{eq:BCfixedmicro}
[[u_i]]=0,\quad t_i-f_i=0,\quad P_{ij} = 0,\quad i=1,2,3, \text{ } j=2,3.
\end{equation}
Observe that in equations \eqref{eq:BCfreemicro} and \eqref{eq:BCfixedmicro} the components tangent to the interface do not have to be assigned when considering a relaxed micromorphic medium. This is explained in detail in \cite{madeo2016reflection,neff2014unifying,neff2015relaxed}, where the peculiarities of the relaxed micromorphic model are presented. In Figure \ref{fig:fixedmicro} we give a schematic representation of the realization of this boundary condition between a homogeneous material and a metamaterial.\\
% We also observe that, as explained in \cite{madeo2016reflection}, imposing $P_{ij}=0$ at the interface leaves the double force $\tau$ arbitrary. 
% have no contribution. ($\ldots \ldots !! \ldots$)

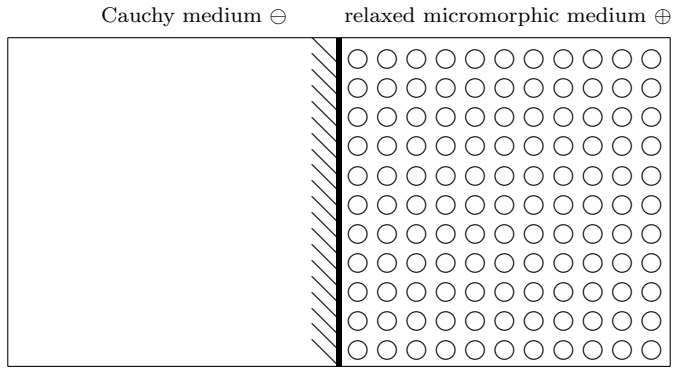
\begin{figure}[H]
	\begin{centering}	
		\begin{picture}(248,125)
		
		\multiput(124,10)(0,6){20}{\line(-1,1){10}} 
		\put(0,10){\line(0,1){124}}

		\put(0,10){\line(1,0){248}}		
		\put(0,134){\line(1,0){248}}
		\linethickness{0.7 mm}
		\put(124,10){\line(0,1){124}}
		\thinlines
		\put(248,10){\line(0,1){124}}
		%	\multiput(3,0)(12,0){24}{\line(1,0){6}}
		%   \multiput(3,144)(12,0){24}{\line(1,0){6}}
		\multiput(131,16)(11,0){11}{
			\multiput(0,0)(0,11){11}{\circle{7.2}}}
		
		\put(35,140){\footnotesize{Cauchy medium $\ominus$}}		
		\put(126,140){\footnotesize{relaxed micromorphic medium $\oplus$}}
		
		%	\put(20,0){$\left[[ u\right]] =0,\qquad\qquad  t- f=0,\qquad\qquad\tau\cdot\nu_{1}=\tau\cdot\nu_{2}=0.$}		
		\end{picture}
		\par\end{centering}
	
	\caption{\small Schematic representation of a macro internal clamp with fixed microstructure at a Cauchy/relaxed-micromorphic interface \cite{madeo2016first,madeo2016reflection}.}
	\label{fig:fixedmicro}
\end{figure}

We explicitly remark (and we will show this fact in all detail in the remainder of this paper) that the boundary condition of Fig. \ref{fig:freemicro} is the only one which allows to obtain an equivalent Cauchy/Cauchy system when considering low frequencies. Indeed, in this case, since the tensor $P$ is left free, it can adjust itself in order to recover a Cauchy medium in the homogenized limit. On the other hand, the boundary condition of Fig. \ref{fig:fixedmicro} imposes an artificial value on $P$ along the interface, so that the effect of the microstructure is forced to be present in the considered system. It is for this reason that a Cauchy/Cauchy interface cannot be recovered as a homogenized limit of the system shown in Fig. \ref{fig:fixedmicro}.
\section{Wave propagation, reflection and transmission at an interface between two Cauchy media}
We now want to study wave propagation, reflection and transmission at the interface between two Cauchy media in two space dimensions. To that end, we agree on the following conventions. 

When we discuss reflection and transmission, we assume that the surface of discontinuity (the interface between the two media) from which the wave reflects and refracts is the $x_2$ axis ($x_1=0$). Furthermore, we assume that the incident wave hits the interface at the origin $O (0,0)$.

\begin{figure}[H]
	\centering
	\includegraphics[scale=0.4]{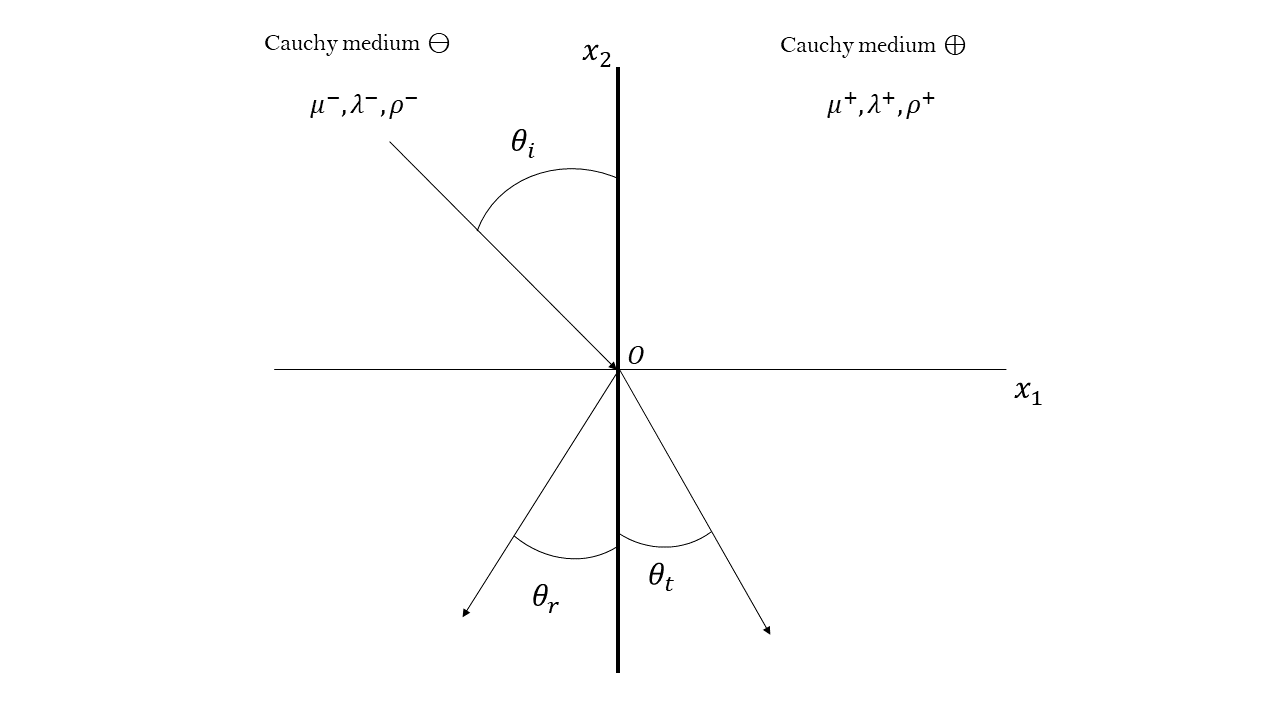}
	\caption{\small Schematic representation of a Cauchy/Cauchy interface with an incident, a reflected and a transmitted wave for the case of an incident out-of-plane SH wave. The indices $i,r,t$ stand for ``incident'', ``reflected'' and ``transmitted'' respectively.}
	\label{fig:CauchyCauchy}
\end{figure}
Incident waves propagate from $-\infty$ in the $x_1<0$ half-plane towards the interface, reflected waves propagate from the interface towards $-\infty$ in the $x_1<0$ half-plane and refracted (or, equivalently, transmitted) waves propagate from the interface towards $+\infty$ in the $x_1>0$ half-plane. Remark that, since we consider $2$D wave propagation, the incident wave can hit the interface at an arbitrary angle. In Figure \ref{fig:CauchyCauchy} we graphically present the schematics of the reflection and transmission of elastic waves at a Cauchy/Cauchy interface for the simple case of an incident out-of-plane SH wave.

As is classical, we will consider both in-plane (in the $(x_1 x_2)-$ plane) and out-of-plane (along the $x_3$ direction) motions. Nevertheless, all the considered components of the displacement, namely $u_1, u_2$ and $u_3$ will only depend on the $x_1, x_2$ space components (plane wave hypothesis). We hence write
\begin{equation}\label{eq:u2D}
u = (u_1(x_1,x_2,t),u_2(x_1,x_2,t),u_3(x_1,x_2,t))^T.
\end{equation}

As will be evident, depending on the type of wave (longitudinal, SV shear or SH shear)\footnote{Following classical nomenclature (\cite{achenbach1973wave}) we call longitudinal (or L) waves those waves whose displacement vector is in the same direction of the wave vector. Moreover SV waves are shear waves whose displacement vectors is orthogonal to the wave vector and lies in the $(x_1x_2)-$plane. SH waves are shear waves whose displacement vector is orthogonal to the wave vector and lies in a plane orthogonal to the $(x_1x_2)-$plane.}  some components of these fields will be equal to zero. 

We now make a small digression on wave propagation in classical Cauchy media. These results are of course well known (see e.g. \cite{achenbach1973wave,graff1975wave}), however, we present them here in detail following our notation, so that we can carry all the results and considerations over to the relaxed micromorphic model as a natural extension.

\subsection{Wave propagation in Cauchy media}\label{sec:Cauchywaveprop}

We start by writing the governing equations \eqref{eq:Cauchystrong} for the special case where the displacement only depends on the in-plane space variables $x_1$ and $x_2$. Plugging $u$ as in \eqref{eq:u2D} into \eqref{eq:Cauchystrong} gives the following system of PDEs:

  \begin{align}
 &\left.
 \begin{tabular}{cc}
 $\rho \, u_{1,tt}$ &$= (2\mu + \lambda)u_{1,11}  + (\mu + \lambda) u_{2,12} + \mu \, u_{1,22}$  \\
 $\rho \, u_{2,tt}$ &$= (2\mu + \lambda)u_{2,22}  + (\mu + \lambda) u_{1,12}+ \mu \, u_{2,11}$
 \end{tabular}
 \right\}\label{eq:comp12}\\
 &\hspace{2.8mm} \rho \, u_{3,tt} \hspace{3.6mm} = \mu (u_{3,11} + u_{3,22}) \label{eq:comp3} .
 \end{align}
We remark that the first two equations \eqref{eq:comp12} are coupled, while the third \eqref{eq:comp3} is uncoupled from the first two. 

We now formulate the plane wave ansatz, according to which the displacement vector $u$ takes the same value at all points lying on the same orthogonal line to the	 $(x_1x_2)-$plane (no dependence on $x_3$). Moreover, we also consider that the displacement field is periodic in space. In symbols, the plane-wave ansatz reads
\begin{align}
u &= \widehat{\psi}\, e^{i(\left\langle x,k \right\rangle- \omega t)} \hspace{1.5mm} =\widehat{\psi}\, e^{i(x_1k_1 + x_2k_2- \omega t)},\label{eq:planeansatzCauchy}\\
u_3 &= \widehat{\psi}_3\, e^{i\left(\langle x,k\rangle - \omega t\right)} = \widehat{\psi}_3 \,e^{i\left(x_1k_1 + x_2 k_2- \omega t\right)}, \label{eq:planewaveCauchy3}
\end{align}
where $\widehat{\psi}= (\widehat{\psi}_1,\widehat{\psi}_2)^T$ is the vector of amplitudes, $k = (k_1,k_2)^T$ is the wave-vector, which fixes the direction of propagation of the considered wave and $x=(x_1,x_2)^T$ is the position vector. Moreover, $\widehat{\psi}_3$ is a scalar amplitude for the third component of the displacement. We explicitly remark that in equation \eqref{eq:planeansatzCauchy} we consider (with a slight abuse of notation) $u=(u_1,u_2)^T$ to be the vector involving only the in-plane displacement components $u_1$ and $u_2$ which are coupled via equations \eqref{eq:comp12}. We start by considering the first system of coupled equations and we plug the plane-wave ansatz \eqref{eq:planeansatzCauchy} into \eqref{eq:comp12}, so obtaining
\begin{align}
(\omega^2 - c_L^2 k_1^2 -c_S^2 k_2^2)\widehat{\psi}_1 - c_V^2 k_1 k_2 \widehat{\psi}_2 &=0, \nonumber \\
-c_V^2 k_1 k_2 \widehat{\psi}_1 + (\omega^2 -c_L^2 k_2^2 - c_S^2 k_1^2)\widehat{\psi}_2 &=0, \label{eq:systemCauchy}
\end{align}
where we made the abbreviations
\begin{equation}\label{eq:speedsCauchy}
c_L^2 = \frac{2 \mu + \lambda}{\rho}, \quad c_S^2 = \frac{\mu}{\rho}, \quad c_V^2 = c_L^2 - c_S^2 = \frac{\mu + \lambda}{\rho},
\end{equation}
where $\mu, \lambda$ are the Lam\'e parameters of the material and $\rho$ is its apparent density. This system of algebraic equations can be written compactly as $A \cdot \widehat{\psi} =0 $, where $A$ is the matrix of coefficients
\begin{equation}\label{eq:coeffsmatrixCauchy}
A = \left(\begin{array}{cc}
\omega^2 - c_L^2 k_1^2 - c_S^2 k_2^2 & -c_V^2 k_1 k_2 \\
-c_V^2 k_1 k_2  & \omega^2 -c_L^2 k_2^2 - c_S^2 k_1^2
\end{array}\right). 
\end{equation}
For $A \cdot \widehat{\psi} = 0$ to have a solution other than the trivial one, we impose $\det A = 0$. We have (see Appendix \ref{appendixCauchy} for a detailed calculation of this expression)
\begin{equation}\label{eq:detACauchy}
\det A =\frac{\left((2\mu + \lambda) \left(k_1^2 + k_2^2\right) - \rho \omega^2\right)\left( \mu\left(k_1^2 + k_2^2\right) - \rho \omega^2\right)}{\rho^2}. 
\end{equation}

We now solve the algebraic equation $\det A = 0$ with respect to the first component $k_1$ of the wave-vector $k$ (as will be evident in section \ref{sec:ReflTransCC}, the second component of the wave-vector $k_2$ is always known when imposing boundary conditions)
\begin{align}
\left((2\mu + \lambda)  \left(k_1^2 + k_2^2\right) - \rho \omega^2\right)\left( \mu \left(k_1^2 + k_2^2\right) - \rho \omega^2\right) = 0, \nonumber\\
(2\mu + \lambda)  \left(k_1^2 + k_2^2\right) - \rho \omega^2 = 0 \quad \text{or} \quad \mu \left(k_1^2 + k_2^2\right) -\rho \omega^2 = 0,\nonumber\\
k_1^2 = \frac{\rho \,\omega^2}{2\mu + \lambda}-k_2^2 \quad \text{or} \quad k_1^2 = \frac{\rho}{\mu}\omega^2 - k_2^2, \label{eq:k1Cauchysquared}\\
k_1 =\pm \sqrt{\frac{\omega^2}{c_L^2}-k_2^2} \quad \text{or} \quad k_1 =\pm \sqrt{\frac{\omega^2}{c_S^2}-k_2^2}, \label{eq:k1Cauchy}
\end{align}

As we will show in the remainder of this subsection, the first or second solution in \eqref{eq:k1Cauchy} is associated to what we call a longitudinal or SV shear wave, respectively. The choice of sign for these solutions is related to the direction of propagation of the considered wave (positive for incident and transmitted waves, negative for reflected waves).\footnote{As a matter of fact, the sign $+$ or $-$ in expressions \eqref{eq:k1Cauchy} is uniquely determined by imposing that the solution preserves the conservation of energy at the considered interface. For Cauchy media, as well as for isotropic relaxed micromorphic media, it turns out that one must choose positive $k_1$ for transmitted and incident waves and negative $k_1$ for reflected ones (according to our convention). On the other hand, there exist some cases, as for example the case of anisotropic relaxed micromorphic media, for which these results are not straightforward and negative $k_1$ may appear also for transmitted waves, giving rise to so-called negative refraction phenomena. These unorthodox reflective properties are not the object of the present paper and will be discussed in subsequent works, where applications of the anisotropic relaxed micromorphic model will be presented.}

We will show later on in detail that, once boundary conditions are imposed at a given interface between two Cauchy media, the value of the component $k_2$ of the wave-vector $k$ can be considered to be known. We will see that $k_2$ is always real and positive, which means that, according to \eqref{eq:k1Cauchy}, the first component $k_1$ of the wave-vector can be either real or purely imaginary, depending on the values of the frequency and of the material parameters. Two scenarios are then possible:
\begin{enumerate}
\item both $k_1$ and $k_2$ are real: This means that, according to the wave ansatz \eqref{eq:planeansatzCauchy}, we have a harmonic wave which propagates in the direction given by the wave-vector $k$ lying in the $(x_1x_2)-$ plane. The wave-vector of the considered wave has two non-vanishing components in the $(x_1x_2)-$plane. A simplified illustration of this case is given in Fig. \ref{fig:propagative}.
\begin{figure}[H]
	\centering
	\includegraphics[scale=0.7]{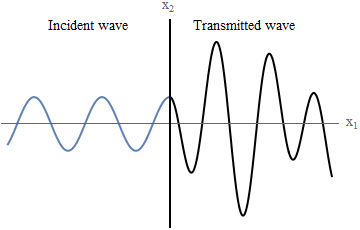}
	\caption{\small Schematic representation of an incident wave which propagates after hitting the interface. For graphical simplicity, only normal incidence and normal transmission are depicted.}
	\label{fig:propagative}
\end{figure}
\item $k_2$ is real and $k_1$ purely imaginary: In this case, according to equation \eqref{eq:planeansatzCauchy}, the wave continues to propagate in the $x_2$ direction (along the interface), but decays with a negative exponential in the $x_1$ direction (away from the interface). Such a wave propagating only along the interface is known as a Stoneley interface wave (\cite{stoneley1924elastic,auld1973acoustic2}). An illustration of this phenomenon is given in Fig. \ref{fig:evanescent}.
\begin{figure}[H]
	\centering
	\includegraphics[scale=0.7]{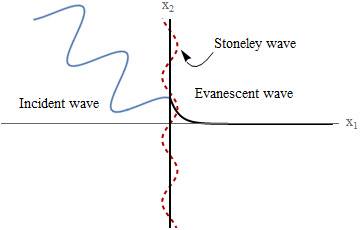}
	\caption{\small Schematic representation of an incident wave which is transformed into a interface wave along the interface and decays exponentially away from it (Stoneley wave).}
	\label{fig:evanescent}
\end{figure}
\end{enumerate}
In conclusion, we can say that when considering wave propagation in $2-$dimensional Cauchy media, it is possible that, given the material parameters, some frequencies exist for which all the energy of the incident wave can be redirected in Stoneley waves traveling along the interface, thus inhibiting transmission across the interface. Such waves are also possible at the interface between a Cauchy and a relaxed micromorphic medium, in a generalized setting.

Assuming the second component $k_2$ of the wave-vector to be known, we now calculate the solution $\widehat{\psi}$ to the algebraic equations $A \cdot \widehat{\psi} = 0$; 
%such solutions are known as nullspaces.
these solutions make up the kernel (or nullspace) of the matrix $A$ and are essentially eigenvectors to the eigenvalue $0$. Hence, as is common nomenclature (\cite{madeo2016reflection}), we call them eigenvectors. 

Using the first solution of \eqref{eq:k1Cauchy}, we see that it also implies
\begin{equation}
k_1^2 = \frac{\omega^2}{c_L^2}-k_2^2.
\end{equation}
Using such relations between $k_1$ and $k_2$ in the first equation of \eqref{eq:systemCauchy}, we obtain

\begin{align}
\left(\omega^2 - c_L^2\left(\frac{\omega^2}{c_L^2}-k_2^2\right)-c_S^2k_2^2\right)\widehat{\psi}_1 - c_V^2 \sqrt{\frac{\omega^2}{c_L^2}-k_2^2}~k_2 \widehat{\psi}_2 &= 0, \nonumber \\
(\omega^2-\omega^2+c_L^2k_2^2-c_S^2k_2^2)\widehat{\psi}_1-c_V^2 \sqrt{\frac{\omega^2}{c_L^2}-k_2^2}~k_2 \widehat{\psi}_2 &= 0,\nonumber\\
(\underbrace{c_L^2-c_S^2}_{=c_V^2})k_2^2\widehat{\psi}_1-c_V^2 \sqrt{\frac{\omega^2}{c_L^2}-k_2^2}~k_2 \widehat{\psi}_2 &= 0, \label{eq:alphsa2long0}
\end{align}
this implies
\begin{equation}\label{eq:alpha2long}
\widehat{\psi}_2 =\frac{k_2}{\sqrt{\frac{\omega^2}{c_L^2}-k_2^2}}~\widehat{\psi}_1 \Leftrightarrow
\widehat{\psi}_2 =\frac{c_Lk_2}{\sqrt{\omega^2-c_L^2k_2^2}}~\widehat{\psi}_1.
\end{equation}

So, the eigenvector of the matrix $A$ in this case is given by\footnote{We explicitly remark that the definition of $\widehat{\psi}^L$ given by the first equality allows us to compute the vector $\widehat{\psi}^L$ once $k_2$ is known, i.e. when imposing boundary conditions. On the other hand, the equivalent definition $\widehat{\psi}^L=\left(1, k_2/k_1\right)^T$ allows us to remark that $k_1 \widehat{\psi}^L=\left(k_1,k_2\right)^T$ is still a solution of the equation $A\cdot \widehat{\psi}=0$. In this case, the vector of amplitudes is collinear with the wave-vector $k$. This allows us to talk about \textbf{longitudinal waves}, since the displacement vector $(u_1,u_2)^T$ given by expression \eqref{eq:planeansatzCauchy} (and hence the motion) is parallel to the direction of propagation of the traveling wave. We also notice that, given the eigenvector $\widehat{\psi}^L$, all vectors $a \widehat{\psi}^{L}$, with $a \in \C$, are solutions to the equation $A\cdot \widehat{\psi}=0$.}
\begin{equation}\label{eq:nullspacelong}
\widehat{\psi}^{L}
:=\left(\begin{array}{c}
1\\
\frac{c_Lk_2}{\sqrt{\omega^2-c_L^2k_2^2}}
\end{array}\right)=\left( \begin{array}{c}
1\\
\frac{k_2}{k_1}
\end{array}\right).
\end{equation}

Analogous considerations can be carried out when considering the second solution of \eqref{eq:k1Cauchy}, which also implies
\begin{equation}
k_1^2=\frac{\omega^2}{c_S^2}-k_2^2.
\end{equation}
Using this relation between $k_1$ and $k_2$ in the second equality of \eqref{eq:systemCauchy} gives

\begin{align}
\left(\omega^2 - c_L^2\left(\frac{\omega^2}{c_S^2}-k_2^2\right)-c_S^2k_2\right)\widehat{\psi}_1 - c_V^2 \sqrt{\frac{\omega^2}{c_S^2}-k_2^2}~k_2 \widehat{\psi}_2 &= 0, \nonumber\\
\left(\omega^2-\frac{c_L^2}{c_S^2}\omega^2+c_L^2k_2^2-c_S^2k_2^2\right)\widehat{\psi}_1-c_V^2\sqrt{\frac{\omega^2}{c_S^2}-k_2^2}~k_2\widehat{\psi}_2 &=0,\nonumber \\
\left(\omega^2\frac{c_S^2-c_L^2}{c_S^2}+(c_L^2-c_S^2)k_2^2\right)\widehat{\psi}_1-c_V^2\sqrt{\frac{\omega^2}{c_S^2}-k_2^2}~k_2\widehat{\psi}_2&=0,\nonumber\\
\left(-\omega^2\frac{c_V^2}{c_S^2}+c_V^2k_2^2\right)\widehat{\psi}_1-c_V^2\sqrt{\frac{\omega^2}{c_S^2}-k_2^2}~k_2\widehat{\psi}_2&=0,\label{eq:alpha2sv0}\\
\end{align}
this implies
\begin{equation}\label{eq:alpha2sv}
k_2\sqrt{\omega^2-k_2^2c_S^2}~\widehat{\psi}_2 = \frac{k_2^2c_S^2-\omega^2}{c_S}~\widehat{\psi}_1 \Leftrightarrow \widehat{\psi}_2=\frac{k_2^2c_S^2-\omega^2}{k_2c_S\sqrt{\omega^2-k_2^2c_S^2}}~\widehat{\psi}_1.
\end{equation}

Therefore, the eigenvector of the matrix $A$ in this case is given by\footnote{Analogously to the case of longitudinal waves, we can remark that the vector $\left(k_2,-k_1\right)^T$ is still a solution of the equation $A\cdot \widehat{\psi}=0$. This means that, in this case, the vector of amplitudes lies in the $(x_1x_2)-$plane and is orthogonal to the direction of propagation given by the wave-vector $k$. This allows us to introduce the concept of \textbf{transverse in-plane waves}, or ``shear vertical'' SV waves, since the displacement $(u_1,u_2)^T$ given by \eqref{eq:planeansatzCauchy} (and hence the motion) is orthogonal to the direction of propagation of the traveling wave, but lying in the $(x_1x_2)-$plane. We also remark that any vector $a\, \widehat{\psi}^{SV}$, with $a \in \C$, is solution to the equation $A\cdot \widehat{\psi}=0$. The first equality defining $\widehat{\psi}$ in \eqref{eq:nullspaceshear} allows to directly compute $\widehat{\psi}$ when $k_2$ is known, i.e. when imposing boundary conditions.}
\begin{equation}\label{eq:nullspaceshear}
\widehat{\psi}^{SV}
:=\left(\begin{array}{c}
1\\
\frac{k_2^2c_S^2-\omega^2}{k_2c_S\sqrt{\omega^2-k_2^2c_S^2}}
\end{array}\right)=\left( 
\begin{array}{c}
1\\
-\frac{k_1}{k_2}
\end{array}  \right).
\end{equation}

Finally, replacing the plane-wave ansatz \eqref{eq:planewaveCauchy3} for the third component $u_3$ of the displacement in equation \eqref{eq:comp3} gives\footnote{The component $u_3$ of the displacement being orthogonal to the $(x_1x_2)-$plane and thus to the direction of propagation of the wave, allows us to talk about \textbf{transverse, out-of-plane waves} or, equivalently, ``shear horizontal'' SH waves. Such waves have the same speed $c_S$ as the SV waves.}
\begin{align}
- \omega^2 \rho \, \widehat{\psi}_3 e^{i\left(\langle x,k\rangle - \omega t\right)} &= \mu(-k^2_1 -k_2^2) \widehat{\psi}_3 e^{i\left(\langle x,k\rangle - \omega t\right)}, \nonumber \\
\rho \,\omega^2 &= \mu (k_1^2+k_2^2), \nonumber \\
k_1 &= \pm \sqrt{\frac{\omega^2}{c_S^2}-k_2^2}. \label{eq:k1Cauchy3}
\end{align}
This relation, compared to the second of equations \eqref{eq:k1Cauchy}, tells us that the same relation between $k_1$ and $k_2$ exists when considering SV or SH waves.

\subsubsection*{Plane-wave ansatz: solution for the displacement field in a Cauchy medium}
We have seen in section \ref{sec:Cauchywaveprop} how to write the displacement field making use of the concepts of longitudinal, SV and SH waves.

According to equations \eqref{eq:planeansatzCauchy} and \eqref{eq:planewaveCauchy3} and considering the $2$D eigenvectors \eqref{eq:nullspacelong} and \eqref{eq:nullspaceshear}, we can finally write the solution for the displacement field $u=(u_1,u_2,u_3)^T$ as
\begin{equation}\label{eq:sollongsv} 
u = u^{L}+u^{SV} = a^L \psi^L e^{i\left(x_1 k_1^L + x_2 k_2^L - \omega t\right)} + a^{SV} \psi^{SV} e^{i\left(x_1 k_1^{SV} + x_2 k_2^{SV} - \omega t\right)}, 
\end{equation}
when we consider a longitudinal or an SV wave, or
\begin{equation} \label{eq:solsh}
u = u^{SH} = a^{SH} \psi^{SH} e^{i\left(x_1 k_1^{SH} + x_2 k_2^{SH} - \omega t\right)},
\end{equation}
when we consider an SH wave. In these formulas, starting from equations \eqref{eq:nullspacelong} and \eqref{eq:nullspaceshear}, we defined the unit vectors
\begin{equation}\label{eq:psiLSVSH}
\psi^L = \frac{1}{\left| \widehat{\psi}^L\right|}\left( \begin{array}{c}
\widehat{\psi}^L_1\\
\widehat{\psi}^L_2\\
0
\end{array}\right), \quad 
\psi^{SV} = \frac{1}{\left| \widehat{\psi}^{SV}\right|}\left( \begin{array}{c}
\widehat{\psi}^{SV}_1\\
\widehat{\psi}^{SV}_2\\
0
\end{array}\right), \quad \text{ and } \quad 
\psi^{SH} = \left( \begin{array}{c}
0\\
0\\
1
\end{array}\right).
\end{equation}
Finally, $a^L, a^{SV}, a^{SH} \in \C$ are arbitrary constants.

We also explicitly remark that in equations \eqref{eq:sollongsv} and \eqref{eq:solsh}, $k_1^L$ and $k_2^L$ are related via the first equation of \eqref{eq:k1Cauchy}, $k_1^{SV}$ and $k_2^{SV}$ via the second equation of \eqref{eq:k1Cauchy} and $k_1^{SH}$ and $k_2^{SH}$ via \eqref{eq:k1Cauchy3}.

As we already mentioned, $k_2$ will be known when imposing boundary conditions, so the only unknowns in the solution \eqref{eq:sollongsv} (resp \eqref{eq:solsh}) remain the scalar amplitudes $a^L, a^{SV}$ (resp. $a^{SH}$). We will show in the following subsection how, using the form \eqref{eq:sollongsv} (resp. \eqref{eq:solsh}) for the solution in the problem of reflection and transmission at an interface between Cauchy media, the unknown amplitudes can be computed by imposing boundary conditions.

	\subsection{Interface between two Cauchy media}\label{sec:ReflTransCC}
We now turn to the problem of reflection and transmission of elastic waves at an interface between two Cauchy media (cf. Fig. \ref{fig:CauchyCauchy}). We assume that an incident longitudinal wave\footnote{The exact same considerations can be carried over when the incident wave is SV transverse.} propagates on the ``minus'' side, hits the interface (which is the $x_2$ axis) at the origin of our co-ordinate system and is subsequently reflected into a longitudinal wave and an SV wave and is transmitted into the second medium on the ``plus'' side into a longitudinal wave and an SV wave as well, according to equations \eqref{eq:sollongsv}. If we send an incident SH wave, equation \eqref{eq:solsh} tells us that it will reflect and transmit only into SH waves.
\subsubsection{Incident longitudinal/SV-transverse wave}\label{sec:CCLSV}
%We combine all the considerations and results from section \ref{sec:Cauchywaveprop}: The solution of \eqref{eq:Cauchystrong} on the ``minus'' side can be written as 
According to our previous considerations, and given the linearity of the considered problem, the solution of the dynamical problem \eqref{eq:Cauchystrong} on the ``minus'' side can be written as\footnote{With a clear extension of the previously introduced notation, we denote by $a^{L,i}, a^{L,r}, a^{L,t},a^{SV,r},a^{SV,t}$ and $\psi^{L,i},
\psi^{L,r},\psi^{L,t},\psi^{SV,r},\psi^{SV,t}$ the amplitudes and eigenvectors of longitudinal incident, reflected, transmitted, in-plane transverse incident, reflected and transmitted waves respectively. Analogously, $a^{SV,i}$ and $\psi^{SV,i}$ will denote the amplitude and eigenvector related to in-plane transverse incident waves.}
\small
\begin{equation}\label{eq:solplus}
u^{-}(x_1,x_2,t) = a^{L,i} \psi^{L,i} e^{i\left(\left\langle x, k^{L,i}\right\rangle - \omega t\right)} + a^{L,r} \psi^{L,r} e^{i\left(\left\langle x, k^{L,r}\right\rangle - \omega t\right)} + a^{SV,r} \psi^{SV,r} e^{i\left(\left\langle  x,k^{SV,r}\right\rangle - \omega t\right)} =: u^{L,i} + u^{L,r} + u^{SV,r}.
\end{equation}
\normalsize
%Analogously, if the incident wave is SV-shear, then we have:
%\begin{equation}
%u^{-}(x_1,x_2,t) = a^{L,i} \psi^{L,i}_t e^{ik_0(\left\langlex,\xi^{L,i}\right\rangle - c_St)} + a^{L,r} \psi^{L,r} e^{ik_1(\left\langlex,\xi^{L,r}\right\rangle - c_Lt)} + aA_2 \psi^{SV,r} e^{ik_2(\left\langle x,\xi^{SV,r} \right\rangle - c_St)}.
%\end{equation} 
As for the ``plus'' side, the solution is
\begin{equation}\label{eq:solminus}
u^{+}(x_1,x_2,t) = a^{L,t} \psi^{L,t} e^{i\left(\left\langle x, k^{L,t} \right\rangle - \omega t\right)} + a^{SV,t} \psi^{SV,t}e^{i\left(\left\langle   x,k^{SV,t}\right\rangle - \omega t\right)} =: u^{L,t} + u^{SV,t}. 
\end{equation}
%If the incident wave is SH-shear, then the solution on the ``minus'' side (which is now a scalar) is:
%\begin{equation}
%u^{-}(x_1,x_2,t) = a^{L,i}e^{ik_0(\left\langlex,\xi^{L,i}\right\rangle - c_Lt)}+ a_2 e^{ik_2(\left\langle x,\xi^{SV,r} \right\rangle - c_St)}.
%\end{equation}
%In that case, the solution in the ``plus'' side (again a scalar) is:
%\begin{equation}
%u^{+}(x_1,x_2,t)= a_4 e^{ik_4(\left\langle x, \xi^{SV,t} \right\rangle - c_S^{+} t)}.
%\end{equation}
%where $L$ stands for longitudinal, $SV$ for SV transverse, $i$ for incident,\footnote{We abuse the notation at this point and remark that $i$ is both the ``incident'' wave index and the imaginary unit when it appears as a factor in exponentials.} $r$ for reflected and $t$ for transmitted, so, for example, $a^{L,r}$ is the amplitude of the reflected longitudinal wave and so on. 
The vectors $\psi^{L,i}, \psi^{SV,r}$ and $\psi^{SV,t}$ are as in \eqref{eq:psiLSVSH}.

Now the new task is given an incident wave, i.e. knowing $a^{L,i}$ and $k^{L,i}$, to calculate all the respective parameters of the ``new'' waves. 
%This will be achieved in the following sections.

%\subsubsection{Incident longitudinal wave}\label{sec:CCL}
We see that the jump condition \eqref{displcont} can be further developed considering that $u^{-} = u^{L,i} + u^{L,r} + u^{SV,r}$ and $u^{+} = u^{L,t} + u^{SV,t}$. We equate the first components of \eqref{displcont}
%\begin{align}
%u_1^{(0)} + u_1^{L,r} +u_1^{SV,r} &= u_1^{L,t} +u_1^{SV,t} \Rightarrow \nonumber \\
%a^{L,i} \psi_1^{L,i} e^{i(k^{L,i}_2 x_2 - \omega t)} &+ a^{L,r}\psi^{L,r}_1 e^{i(k^{L,r}_2 x_2 - \omega t)} + a_2 \psi_1^{SV,r} e^{i(k^{SV,r}_2x_2  - \omega t)} \nonumber \\
%&= a_3 \psi_1^{L,t}e^{i(k_2^{L,t}x_2 - \omega t)} + a_4 \psi_1^{SV,t} e^{i(k_2^{SV,t} x_2 - \omega t)} \Rightarrow \nonumber \\
%a^{L,i} \psi_1^{L,i} e^{i(k^{L,i}_2 x_2)} &+ a^{L,r}\psi^{L,r}_1 e^{i(k^{L,r}_2 x_2)} + a_2 \psi_1^{SV,r} e^{i(k^{SV,r}_2x_2 )}  \label{expfactor} \\
%&= a_3 \psi_1^{L,t}e^{i(k_2^{L,t}x_2 - \omega t)} + a_4 \psi_1^{SV,t} e^{i(k_2^{SV,t} x_2 - \omega t)} \nonumber
%\end{align}
\begin{equation*}
u_1^{L,i} + u_1^{L,r} +u_1^{SV,r} = u_1^{L,t} +u_1^{SV,t}, 
\end{equation*}
which, evaluating the involved expression at $x_1=0$ (the interface), implies
%\begin{align}
%a^{L,i} \psi_1^{L,i} e^{i(k^{L,i}_2 x_2 - \omega t)} + a^{L,r}\psi^{L,r}_1 e^{i(k^{L,r}_2 x_2 - \omega t)} + a_2 \psi_1^{SV,r} e^{i(k^{SV,r}_2x_2  - \omega t)} &= a_3 \psi_1^{L,t}e^{i(k_2^{L,t}x_2 - \omega t)} + a_4 \psi_1^{SV,t} e^{i(k_2^{SV,t} x_2 - \omega t)}  \nonumber \\
%a^{L,i} \psi_1^{L,i} e^{ik^{L,i}_2 x_2} + a^{L,r}\psi^{L,r}_1 e^{ik^{L,r}_2 x_2} + a_2 \psi_1^{SV,r} e^{ik^{SV,r}_2x_2 }   
%&= a_3 \psi_1^{L,t}e^{ik_2^{L,t}x_2} + a_4 \psi_1^{SV,t} e^{ik_2^{SV,t} x_2} \label{expfactor}
%\end{align}
\small
\begin{equation*}
a^{L,i} \psi_1^{L,i} e^{i\left( x_2 k^{L,i}_2 - \omega t\right)} + a^{L,r}\psi^{L,r}_1 e^{i\left(x_2 k^{L,r}_2 - \omega t\right)} + a^{SV,r} \psi_1^{SV,r} e^{i\left(x_2 k^{SV,r}_2 - \omega t\right)} = a^{L,t} \psi_1^{L,t}e^{i\left( x_2 k_2^{L,t} - \omega t\right)} + a^{SV,t} \psi_1^{SV,t} e^{i\left(x_2 k_2^{SV,t} - \omega t\right)},
\end{equation*}
\normalsize
or, simplifying everywhere the time exponential
\begin{equation}
a^{L,i} \psi_1^{L,i} e^{i x_2 k^{L,i}_2 } + a^{L,r}\psi^{L,r}_1 e^{i x_2 k^{L,r}_2} + a^{SV,r} \psi_1^{SV,r} e^{i x_2 k^{SV,r}_2 }   
= a^{L,t} \psi_1^{L,t}e^{i x_2 k_2^{L,t}} + a^{SV,t} \psi_1^{SV,t} e^{i x_2 k_2^{SV,t}}. \label{expfactor}
\end{equation}
This must hold for all $x_2 \in \R$. The exponentials in this expression form a family of linearly independent functions and, therefore, we can safely assume that the coefficients $a^{L,i},a^{L,r},a^{SV,r},a^{L,t},a^{SV,t}$ are never all zero simultaneously. This means, than in order for \eqref{expfactor} to hold, we must require that the exponents of the exponentials are all equal to one another. Canceling out the imaginary unit $i$ and $x_2$, we deduce the fundamental relation\footnote{It is now clear why we said previously that the second components of all wave-vectors are known, since they are all equal to the second component of the wave-vector of the prescribed incident wave, which is known by definition.} 

\begin{equation}
\label{eq:SnellCauchy}
\boxed{
k_2^{L,i}=k_2^{L,r}=k_2^{SV,r}=k_2^{L,t}=k_2^{SV,t}
}\, ,
\end{equation}
which is the well-known \textbf{Snell's law} for in-plane waves (see \cite{achenbach1973wave,graff1975wave,zhu2015study,auld1973acoustic2}).
%Furthermore, even though the wavenumbers and speeds of propagation  change, their product, i.e. the angular frequency $\omega$, must remain the same:
%\begin{equation}\label{angleswaves2}
%\omega = k_0c_L = k_1c_L = k_2 c_S = k_3 c_L^{+} = k_4c_S^{+}.
%\end{equation}
%Equations \eqref{angleswaves1} and \eqref{angleswaves2} now allow us to determine the unknown angles (and ergo, the vectors of direction of propagation and of motion) and wavenumbers with respect to the ones we impose on the incident wave:
%\begin{equation}
%\begin{array}{cccc}
%k_0 = k_1, &  \theta_0 = \theta_1, & k_2 = \frac{c_L}{c_S}k_0, & \cos \theta_2 = \frac{c_S}{c_L}\cos \theta_0 \\
%k_3 = \frac{c_L}{c_L^{+}}k_0, & \cos \theta_3 = \frac{c_L^{+}}{c_L}\cos \theta_0, &  k_4 = \frac{c_L}{c_S^{+}}k_0, & \cos \theta_4 = \frac{c_S^{+}}{c_L} \cos \theta_0 . 
%\end{array}
%\end{equation}
Using \eqref{eq:SnellCauchy} we see that the exponentials in \eqref{expfactor} can be canceled out leaving only
\begin{equation}\label{amplitudes1}
a^{L,i} \psi^{L,i}_1 + a^{L,r} \psi^{L,r}_1 + a^{SV,r} \psi^{SV,r}_1 = a^{L,t} \psi^{L,t}_1 +a^{SV,t} \psi^{SV,t}_1. 
\end{equation}
Analogously, equating the second components of the displacements in the jump conditions \eqref{displcont}, using \eqref{eq:SnellCauchy} and the fact that this must hold for all $x_2 \in \R$, gives 
\begin{equation}\label{amplitudes2}
a^{L,i} \psi^{L,i}_2 + a^{L,r} \psi^{L,r}_2 + a^{SV,r} \psi^{SV,r}_2 = a^{L,t} \psi^{L,t}_2 +a^{SV,t} \psi^{SV,t}_2. 
\end{equation}

As for the jump of force, we remark that the total force on both sides is $
F^{-} = f^{L,i} + f^{L,r} + f^{SV,r}$, $F^{+} = f^{L,t} + f^{SV,t}$,
with the $f$'s being evaluated at $x_1=0$. The forces are vectors of the form
\begin{equation}\label{forcevec}
f =  \left(f_1,f_2,0\right)^T,
\end{equation}
with 
\begin{equation}\label{forcecomp}
f_i = \sigma_{ij} \nu_j,
\end{equation}
where $\nu = (\nu_1,\nu_2,\nu_3)^T = (1,0,0)^T$ is the normal vector to the interface, i.e. to the $x_2$ axis.\footnote{We immediately see that the only components of the stress tensor which have a contribution in the calculation of the force jump are $\sigma_{11}$ and $\sigma_{21}$.} 

The force jump condition \eqref{jumpforcevec} can now be written component-wise as 
\begin{align}
&\sigma^{L,i}_{11} + \sigma^{L,r}_{11} + \sigma^{SV,r}_{11} = \sigma^{L,t}_{11} + \sigma^{SV,t}_{11}, \label{forcejumpone} \\ 
&\sigma^{L,i}_{21} + \sigma^{L,r}_{21} + \sigma^{SV,r}_{21} = \sigma^{L,t}_{21} + \sigma^{SV,t}_{21}. \label{forcejumptwo}
\end{align}
Calculating the stresses according to eq. \eqref{eq:Cauchystress}, where we use the solutions \eqref{eq:solplus} and \eqref{eq:solminus} for the displacement and again using \eqref{eq:SnellCauchy} gives
\small
\begin{align}
a^{L,i} &\left( (2\mu+\lambda)\psi_1^{L,i} k_1^{L,i} + \lambda\psi_2^{L,i}k_2^{L,i}\right) + a^{L,r} \left( (2\mu+\lambda)\psi_1^{L,r} k_1^{L,r} + \lambda\psi_2^{L,r}k_2^{L,r}\right) + a^{SV,r} \left( (2\mu+\lambda)\psi_1^{SV,r} k_1^{SV,r} + \lambda\psi_2^{SV,r}k_2^{SV,r}\right)\nonumber \\
&= a^{L,t}  \left( (2\mu^{+}+\lambda^{+})\psi_1^{L,t} k_1^{L,t} + \lambda^{+}\psi_2^{L,t}k_2^{L,t}\right) + a^{SV,t}  \left( (2\mu^{+}+\lambda^{+})\psi_1^{SV,t} k_1^{SV,t} + \lambda^{+}\psi_2^{SV,t}k_2^{SV,t}\right), \label{eq:jumpforce1}
\end{align}
\normalsize
and
\begin{align}
a^{L,i} \mu \left(\psi_1^{L,i} k_2^{L,i} + \psi_2^{L,i} k_1^{L,i}\right) &+ a^{L,r} \mu \left(\psi_1^{L,r} k_2^{L,r} + \psi_2^{L,r} k_1^{L,r}\right) + a^{SV,r} \mu \left(\psi_1^{SV,r} k_2^{SV,r} + \psi_2^{SV,r} k_1^{SV,r}\right) \nonumber \\
&=a^{L,t} \mu^{+} \left(\psi_1^{L,t} k_2^{L,t} + \psi_2^{L,t} k_1^{L,t}\right) + a^{SV,t} \mu^{+} \left(\psi_1^{SV,t} k_2^{SV,t} + \psi_2^{SV,t} k_1^{SV,t}\right). \label{eq:jumpforce2}
\end{align}
Thus, equations \eqref{amplitudes1}, \eqref{amplitudes2}, \eqref{eq:jumpforce1}, \eqref{eq:jumpforce2} form an algebraic system for the unknown amplitudes $a^{L,r},$ $a^{SV,r},$ $a^{L,t},$ $a^{SV,t}$ from which we can fully calculate the solution.

\subsubsection{Incident SH-transverse wave}\label{sec:CCSH}
In this case, the solution on the ``minus'' side of the interface is
\begin{equation}\label{eq:solSHminus}
u^{-}(x_1,x_2,t)= a^{SH,i} \psi^{SH,i} e^{i\left(\left\langle x,k^{SH,i}\right\rangle-\omega t\right)} + a^{SH,r} \psi^{SH,r} e^{i\left(\left\langle x,k^{SH,r}\right\rangle-\omega t\right)}, 
\end{equation}
and on the ``plus'' side
\begin{equation}\label{eq:solSHplus}
u^{+}(x_1,x_2,t)= a^{SH,t} \psi^{SH,t} e^{i\left(\left\langle x,k^{SH,t}\right\rangle-\omega t\right)},
\end{equation}
where the vectors $\psi^{SH,i},\psi^{SH,r}, \psi^{SH,t}$ are all equal to $(0,0,1)^T$, according to the third equation of \eqref{eq:psiLSVSH}. Following the same reasoning as in section \ref{sec:CCLSV}, the continuity of displacement condition now only involves the $u_3$ component, which is the only non-zero one, and reads (evaluating again at $x_1=0$)

\begin{equation}\label{eq:expfactorSH}
a^{SH,i} e^{i\left(x_2 k_2^{SH,i}\right)} + a^{SH,r} e^{i\left(x_2 k_2^{SH,r}\right)} = a^{SH,t} e^{i\left(x_2 k_2^{SH,t}\right)},
\end{equation}
which, since the exponentials build a family of linearly independent functions and if we exclude the case where all amplitudes $a^{SH,i}, a^{SH,r} , a^{SH,t}$ are identically equal to zero, becomes \textbf{Snell's law} for out-of-plane motions
\begin{equation}\label{eq:SnellCauchySH}
\boxed{
k_2^{SH,i} = k_2^{SH,r} = k_2^{SH,t}
}\,.
\end{equation}
Using that, we see that the exponentials in \eqref{eq:expfactorSH} cancel out leaving only
\begin{equation}\label{eq:displSH}
a^{SH,i} + a^{SH,r} = a^{SH,t}.
\end{equation}

As for the jump in force in the case of SH waves, the total force on both sides is $F^{-} = f^{SH,i} + f^{SH,r}$, $F^{+} = f^{SH,t}$, with the $f$'s being evaluated at $x_1=0$. Now the forces are vectors of the form
\[
f = (0,0,f_3)^T,
\]
where, once again, $f_i = \sigma_{ij}\nu_j$, where $\nu = (1,0,0)^T$ is the vector normal to the interface.\footnote{We now see that the component of the stress which has an influence in this boundary condition is $\sigma_{31}$.} Condition \eqref{jumpforcevec} can now be written as 
\begin{equation}
\sigma_{31}^{SH,i} + \sigma_{31}^{SH,r} = \sigma_{31}^{SH,t}
\end{equation}
The stresses are calculated again by \eqref{eq:Cauchystress} and using the solutions \eqref{eq:solSHplus} and \eqref{eq:solSHminus} for the displacement and using \eqref{eq:SnellCauchySH} gives
\begin{equation}\label{eq:stressesSH}
\mu \left(a^{SH,i} k_1^{SH,i}+a^{SH,r} k_1^{SH,r} \right) = \mu^{+} a^{SH,t} k_1^{SH,t}.  
\end{equation}

Equations \eqref{eq:displSH} and \eqref{eq:stressesSH} build a system for the unknown amplitudes $a^{SH,r},a^{SH,t}$, which we can solve and fully determine the solution to the reflection-transmission problem.

\subsubsection{A condition for the onset of Stoneley waves at a Cauchy/Cauchy interface}
In this subsection we show how we can find explicit conditions for the onset of Stoneley waves at Cauchy/Cauchy interfaces. This can be done by simply requesting that the quantities under the square roots in equation \eqref{eq:k1Cauchy} become negative. 

Assume that the incident wave is longitudinal. This means that its speed is given by $c_L^-=\sqrt{(2\mu^-+\lambda^-)/\rho^-}$ and that the wave vector $k$ can now be written as $k=(k_1,k_2)=\kabs(\sin \theta_i, -\cos \theta_i)$, where $\kabs = \omega/c_L^-$ and $\theta_i$ is the angle of incidence. This incident longitudinal wave gives rise to a longitudinal and a transverse wave both on the ``$-$'' and on the ``$+$'' side. Setting the quantity under the square root in the first equation in \eqref{eq:k1Cauchy} to be negative and using the fact that $k_2 = -\kabs \cos \theta_i$, gives a condition for the appearance of Stoneley waves in the case of an incident longitudinal wave: 

\begin{align}
\frac{\omega^2}{(c_L^{+})^2}-k_2^+ <0  \Leftrightarrow \frac{\omega^2}{(c_L^{+})^2}-k_2^2 <0 &\Leftrightarrow \frac{\omega^2}{(c_L^{+})^2} <\kabs^2 \cos^2 \theta_i \nonumber \\ 
&\Leftrightarrow \frac{\omega^2}{(c_L^{+})^2} < \frac{\omega^2}{(c_L^-)^2} \cos^2 \theta_i  \nonumber \\
&\Leftrightarrow \cos^2 \theta_i > \left( \frac{c_L^-}{c_L^{+}}\right)^2 \nonumber\\
&\Leftrightarrow \cos^2 \theta_i > \frac{\rho^{+}(2 \mu^{-} + \lambda^{-})}{\rho^{-}(2\mu^{+} + \lambda^{+})}. \label{eq:StoneleyLong}
\end{align}
To establish the previous relation we also used the fact that $k_2^+ = k_2$, as established by \textbf{Snell's law} in \eqref{eq:SnellCauchy}.

Similar arguments can be carried out when considering all other possibilities for incident,  transmitted and reflected waves, as detailed in Tables \ref{table:StoneleyT} and \ref{table:StoneleyR}.

\begin{table}[H]
	\centering
	\begin{tabular}{c|c|c|c}
		Incident Wave & Transmitted L & Transmitted SV &Transmitted SH  \\
		\hline 
		L & $\cos^2 \theta_i > \frac{\rho^+(2\mu^- + \lambda^-)}{\rho^-(2 \mu^+ + \lambda^+)}$ & $\cos^2 \theta_i >  \frac{\rho^+(2\mu^- + \lambda^-)}{\rho^- \mu^+}$ & $-$ \\
		\hline
		SV & $\cos^2 \theta_i > \frac{\rho^+\mu^-}{\rho^-(2 \mu^+ + \lambda^+)}$ & $\cos^2 \theta_i > \frac{\rho^+\mu^-}{\rho^- \mu^+}$ & $-$   \\
		\hline
		SH &  $-$& $-$ &$\cos^2 \theta_i > \frac{\rho^+\mu^-}{\rho^- \mu^+}$  
	\end{tabular}
	\caption{\small Conditions for appearance of transmitted Stoneley waves for all types of waves at a Cauchy/Cauchy interface.}
	\label{table:StoneleyT}
\end{table}
\begin{table}[ht!]
	\centering
	\begin{tabular}{c|c|c}
		Incident Wave & Reflected L & Reflected SV \\
		\hline 
		L  &$-$ & $-$\\
		\hline
		SV   &$\cos^2 \theta_i > \frac{\mu^-}{2\mu^- + \lambda^-}$ & $-$ \\
		\hline
		SH &    $-$& $-$ 
	\end{tabular}
	\caption{\small Conditions for appearance of reflected Stoneley waves for all types of waves at a Cauchy/Cauchy interface.}
	\label{table:StoneleyR}
\end{table}
\normalsize
The conditions in Tables \ref{table:StoneleyT} and \ref{table:StoneleyR} establish that the square of the cosine of the angle of incidence must be greater than a given quantity (this happens for smaller angles) for Stoneley waves to appear. This means that it is most likely to observe Stoneley waves when the angle of incidence is smaller than the normal incidence angle, i.e. for incident waves which are inclined with respect to the surface upon which they hit. Moreover, if there exists a strong contrast in stiffness between the two sides and the ``$-$'' side is by far stiffer than the ``$+$'' side, then Stoneley waves could be observed for angles closer to normal incidence. On the other hand, if the ``$-$'' side is only slightly stiffer than the ``$+$'' side, then Stoneley waves will be observed only for smaller angles (far from normal incidence). We refer to Appendix \ref{appendixCauchy3} for an explicit calculation of all these conditions.

In order to fix ideas, assuming that both materials on the left and on the right have the same density, i.e. $\rho^{-} = \rho^{+}$ and examining Table \ref{table:StoneleyT}, we can deduce that Stoneley waves appear only when the expressions on the right of each inequality are less than one, i.e:\footnote{In order to respect positive definiteness of the strain energy density, we must require that $2 \mu^+ + \lambda^+>0$ and $\mu^+>0$.}
\begin{itemize}
\item For an incident L wave, the transmitted L mode becomes Stoneley only if the Lam\'e parameters of the material are chosen in such a way that $2\mu^{+} + \lambda^{+} > 2\mu^{-} + \lambda^{-}$.	
\item For an incident L wave, the transmitted SV mode becomes Stoneley only if the Lam\'e parameters of the material are chosen in such a way that $\mu^{+}  > 2\mu^{-} + \lambda^{-}$.	
\item For an incident SV wave, the transmitted L mode becomes Stoneley only if the Lam\'e parameters of the material are chosen in such a way that $2\mu^{+} + \lambda^{+} > \mu^{-} $.	
\item For an incident SV wave, the transmitted SV mode becomes Stoneley only if the Lam\'e parameters of the material are chosen in such a way that $\mu^{+}  > \mu^{-} $.	
\item For an incident SH wave, the transmitted SH mode (which is the only transmitted mode) becomes Stoneley only if the Lam\'e parameters of the material are chosen in such a way that $\mu^{+}  > \mu^{-} $.
\item Reflected waves can become Stoneley waves only when the incident wave is SV. The only mode which can be converted into a Stoneley wave is the L mode when $\mu^{-}>-\lambda^{-}$.	
\end{itemize}
All these cases are shown graphically in Figures \ref{fig:Stoneley1}, \ref{fig:Stoneley2}, \ref{fig:Stoneley3}. Figures \ref{fig:StoneleyTransmittedL1} and \ref{fig:StoneleyTransmittedL2} demonstrate the case of an incident L wave. From Table \ref{table:StoneleyT} we deduce that either only the L transmitted mode will be Stoneley or both L and SV (when $\rho^- = \rho^+$). The same is true for the case of an incident SV wave as shown in Figures \ref{fig:StoneleyTransmittedSV1} and \ref{fig:StoneleyTransmittedSV2}. For incident SH waves, we only have one transmitted mode which can also be Stoneley as shown in Figure \ref{fig:StoneleyTransmittedSH}. Figure \ref{fig:StoneleyReflected} shows the only possible manifestation of a reflected Stoneley wave, which is the L mode for the case of an incident SV wave. 

\vspace{-0.5cm}
\begin{figure}[H]
	\begin{subfigure}{.5\textwidth}
		\centering
		\includegraphics[scale=0.28]{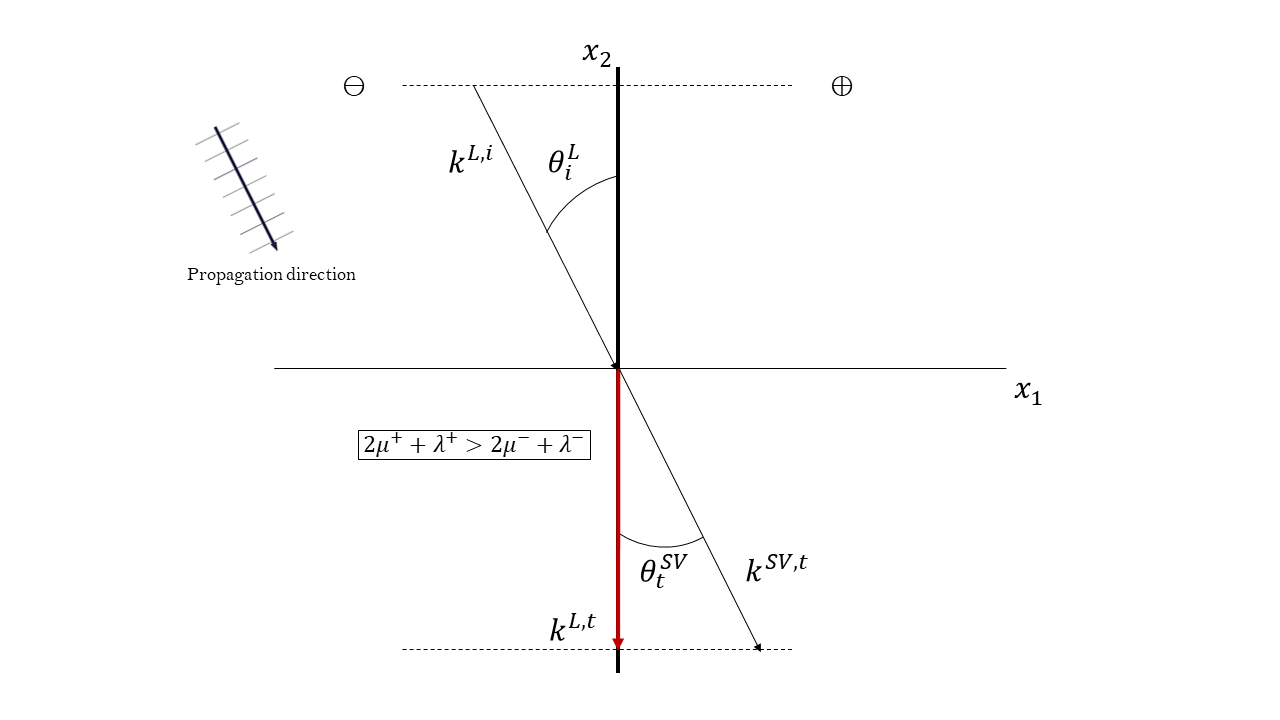}
		\caption{}
		\label{fig:StoneleyTransmittedL1}
	\end{subfigure}%
	\begin{subfigure}{.5\textwidth}
		\centering
		\includegraphics[scale=0.28]{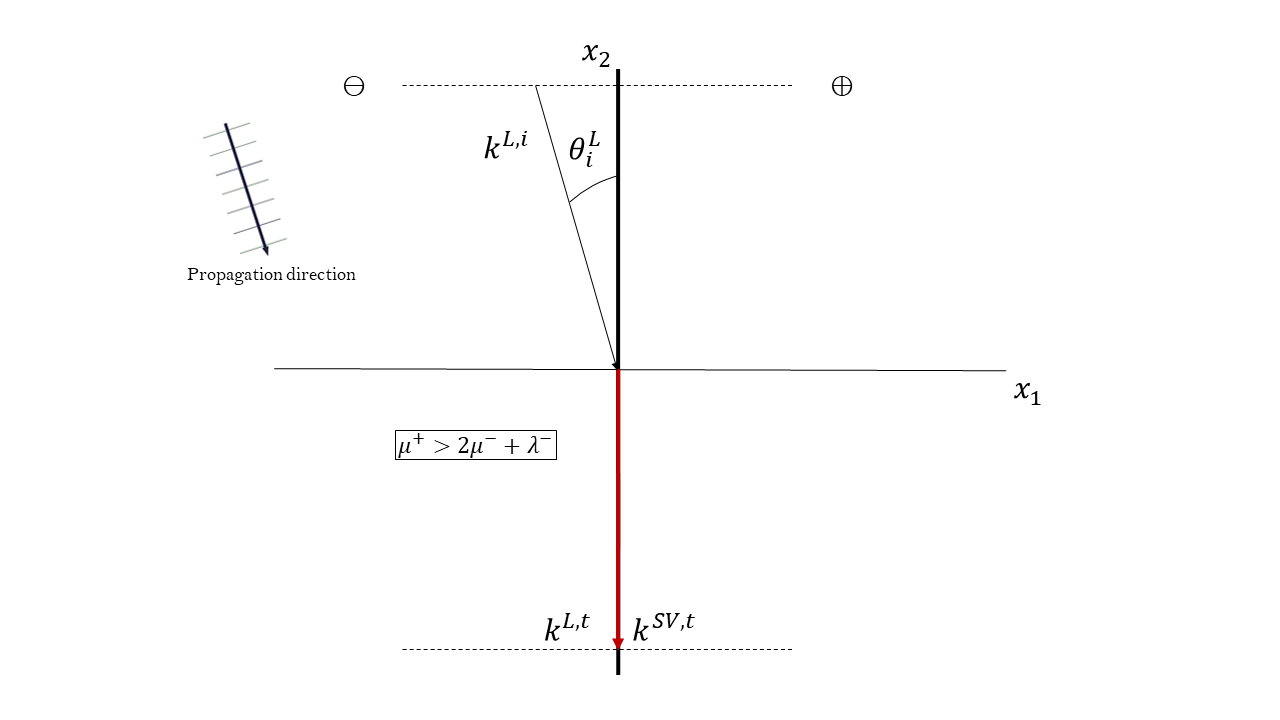}
		\caption{}
		\label{fig:StoneleyTransmittedL2}
	\end{subfigure}
	\caption{\small Possible manifestations of Stoneley waves at a Cauchy/Cauchy interface for L and SV transmitted modes in the case of incident L wave. The vectors in black represent propagative modes, while the vectors in red represent modes which propagate along the surface only (Stoneley modes).}
	\label{fig:Stoneley1}
\end{figure}
\begin{figure}[H]
	\begin{subfigure}{.5\textwidth}
		\centering
		\includegraphics[scale=0.28]{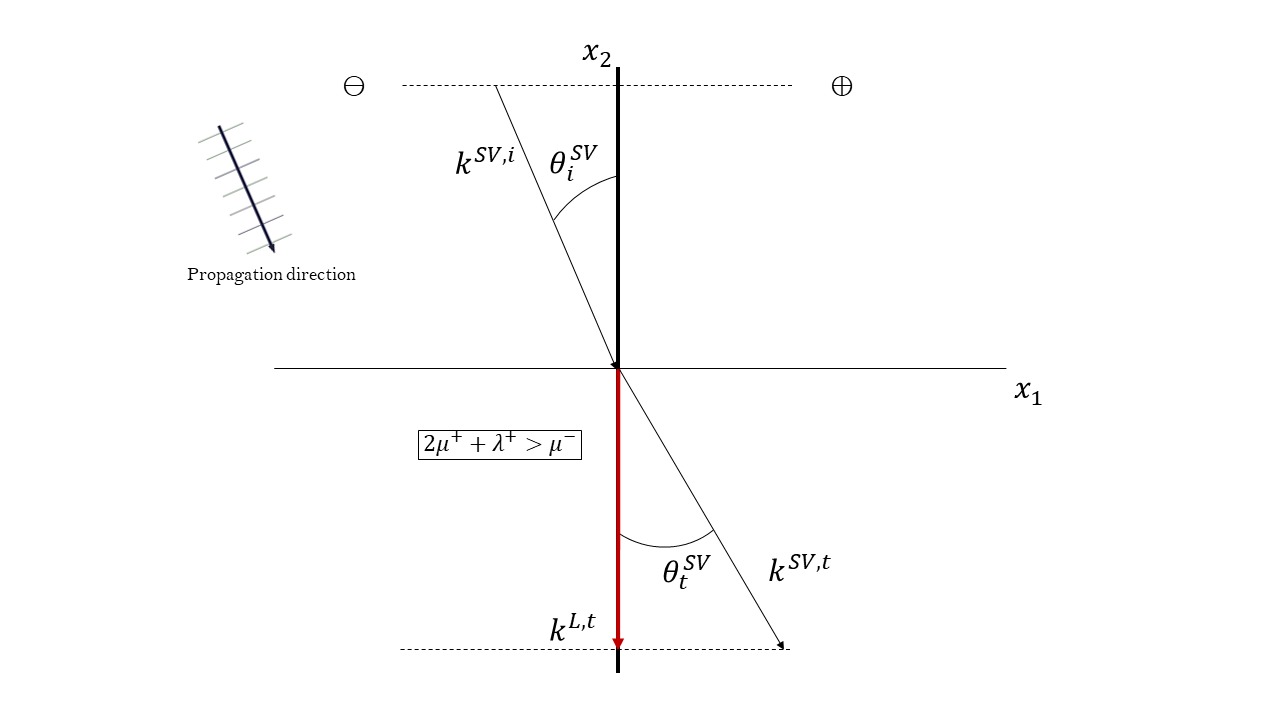}
		\caption{}
		\label{fig:StoneleyTransmittedSV1}
	\end{subfigure}%
	\begin{subfigure}{.5\textwidth}
		\centering
		\includegraphics[scale=0.28]{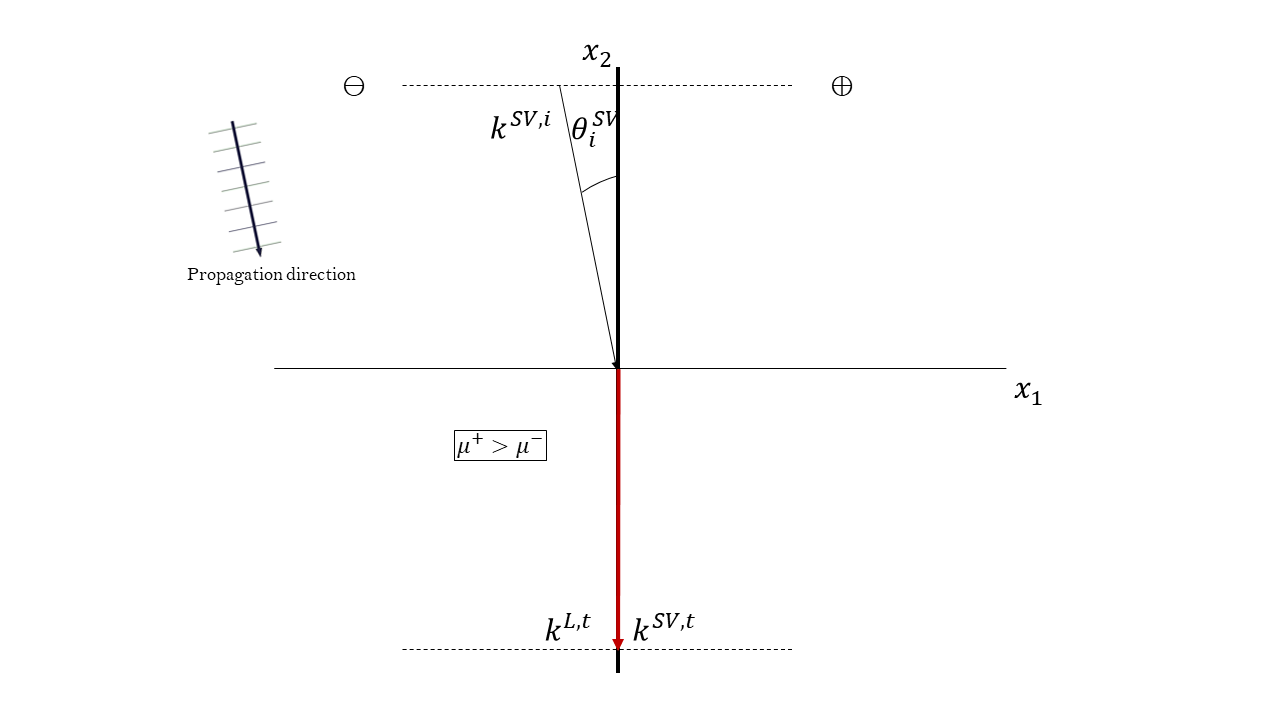}
		\caption{}
		\label{fig:StoneleyTransmittedSV2}
	\end{subfigure}
	\caption{\small Possible manifestations of Stoneley waves at a Cauchy/Cauchy interface for L and SV transmitted modes in the case of incident SV wave. The vectors in black represent propagative modes, while the vectors in red represent modes which propagate along the surface only (Stoneley modes).}
	\label{fig:Stoneley2}
\end{figure}
\begin{figure}[H]
	\begin{subfigure}{.5\textwidth}
		\centering
		\includegraphics[scale=0.28]{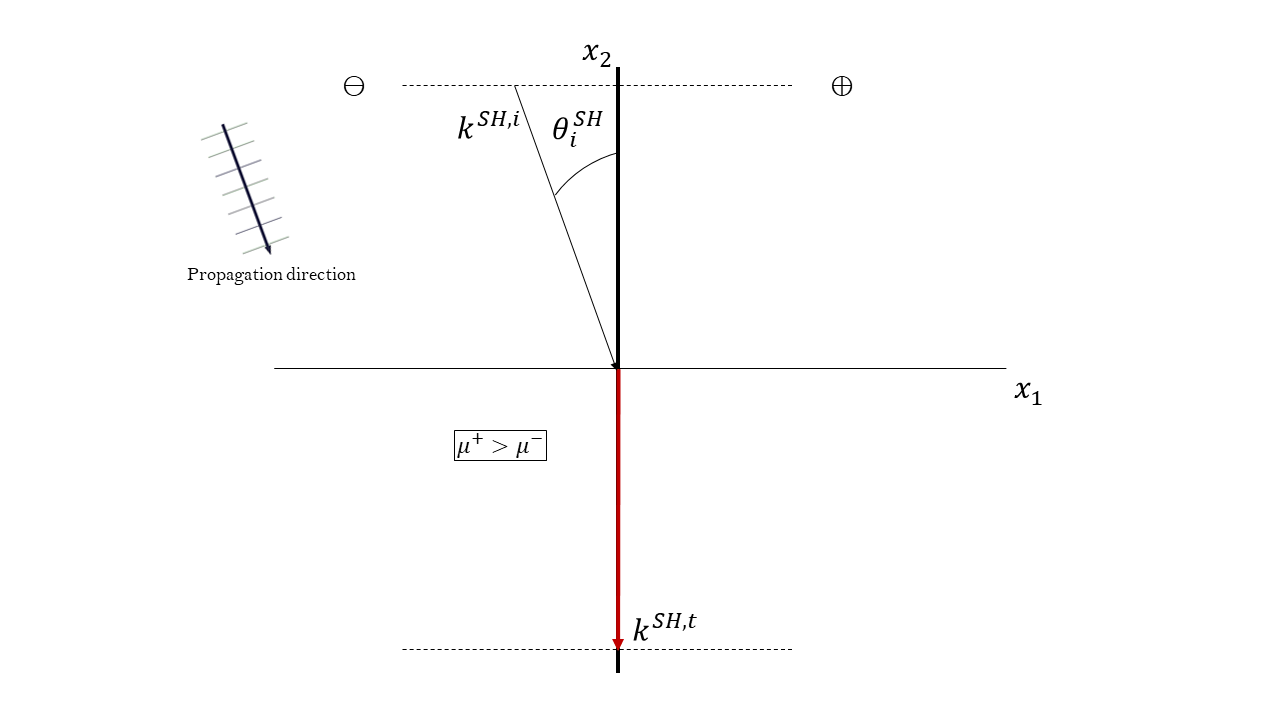}
		\caption{}
		\label{fig:StoneleyTransmittedSH}
	\end{subfigure}%
	\begin{subfigure}{.5\textwidth}
		\centering
		\includegraphics[scale=0.28]{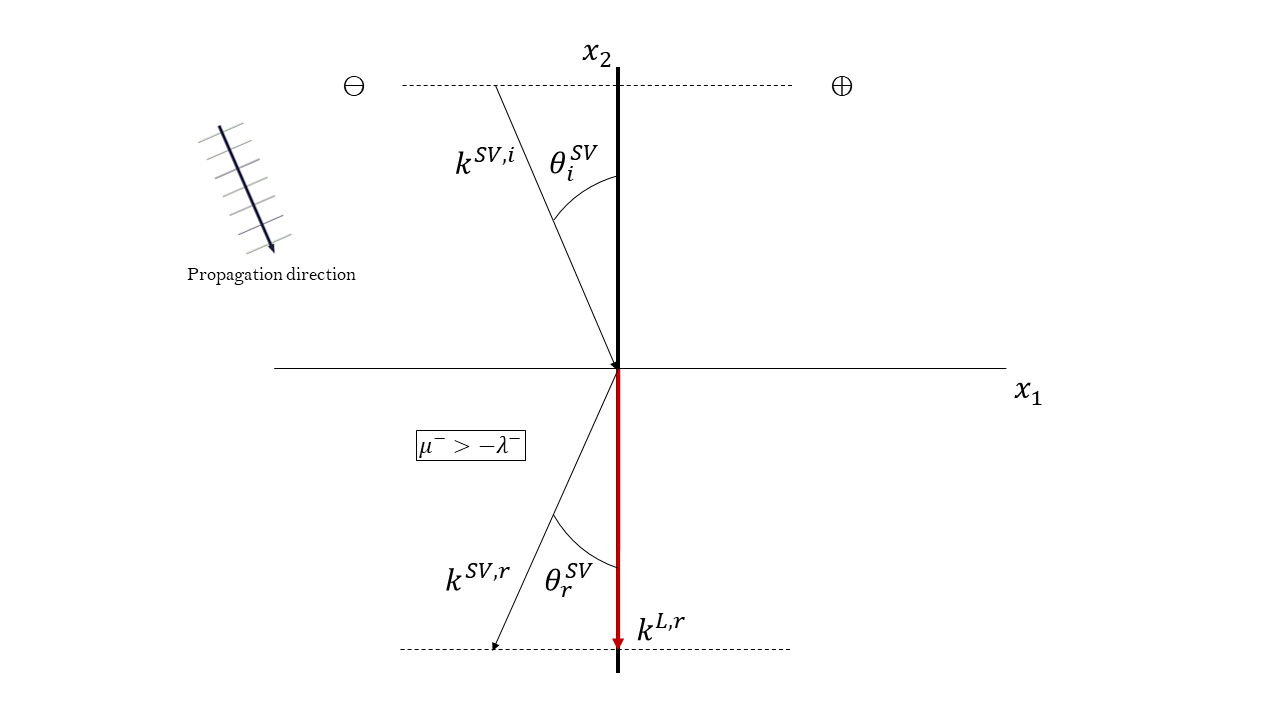}
		\caption{}
		\label{fig:StoneleyReflected}
	\end{subfigure}
	\caption{\small Possible manifestations of Stoneley waves at a Cauchy/Cauchy interface for the only SH transmitted mode in the case of an incident SH wave (a) and for the reflected L mode in the case of an incident SV wave (b). The vectors in black represent propagative modes, while the vectors in red represent modes which propagate along the surface only (Stoneley modes).}
	\label{fig:Stoneley3}
\end{figure}

%\begin{figure}[H]
%	\begin{subfigure}{.5\textwidth}
%		\centering
%		\includegraphics[scale=0.3]{Images/RayleighTransmittedL1.png}
%		\caption{}
%		\label{fig:StoneleyTransmittedL1}
%	\end{subfigure}%
%	\begin{subfigure}{.5\textwidth}
%		\centering
%		\includegraphics[scale=0.3]{Images/RayleighTransmittedL2.png}
%		\caption{}
%	\label{fig:StoneleyTransmittedL2}
%	\end{subfigure}\\
%	\begin{subfigure}{.5\textwidth}
%		\centering
%		\includegraphics[scale=0.3]{Images/RayleighTransmittedSV1.png}
%		\caption{}
%		\label{fig:StoneleyTransmittedSV1}
%	\end{subfigure}%
%	\begin{subfigure}{.5\textwidth}
%		\centering
%		\includegraphics[scale=0.3]{Images/RayleighTransmittedSV2.png}
%		\caption{}
%		\label{fig:StoneleyTransmittedSV2}
%	\end{subfigure}
%	\centering
%	\begin{subfigure}{.5\textwidth}
%		\centering
%		\includegraphics[scale=0.3]{Images/RayleighTransmittedSH.png}
%		\caption{}
%		\label{fig:StoneleyTransmittedSH}
%	\end{subfigure}%
%	\begin{subfigure}{.5\textwidth}
%		\centering
%		\includegraphics[scale=0.3]{Images/RayleighReflected.png}
%		\caption{}
%		\label{fig:StoneleyReflected}
%	\end{subfigure}
%	\caption{Possible manifestations of Stoneley waves at a Cauchy/Cauchy interface for all transmitted modes and for the only reflected one. The vectors in black represent propagative modes, while the vectors in red represent modes which propagate along the surface only (Stoneley modes).}
%	\label{fig:Stoneley}
%\end{figure}
\subsubsection{Determination of the reflection and transmission coefficients}
%We denote by $H^{\np}$ the first component of the flux vector corresponding to each wave, with $n=0,1,2,3,4$. We introduce the following quantities:
We denote by $H_1$ the first (normal) component of the flux vector and we introduce the following quantities
\begin{equation}\label{JCauchy}
J^i=\frac{1}{T}\int_0^{T}H^i_1(x,t) dt,\quad J^r=\frac{1}{T}\int_0^{T}H^r_1(x,t) dt,\quad J^t=\frac{1}{T}\int_0^{T}H^t_1 (x,t)dt,
\end{equation}
where $H^i_1=H^{L,i}_1$ (or, $H^i_1=H^{SV,i}_1$ if we consider an incident SV wave), $H^r_1 = H^{L,r}_1 + H^{SV,r}_1$ and \\$H^t_1 = H^{L,t}_1 + H^{SV,t}_1$, $T$ being the period of the wave.\footnote{Or, in the case of an incident SH wave $H^i_1=H^{SH,i}_1, H^r_1 = H^{SH,r}_1$ and $H^t_1 = H^{SH,t}_1$.} Then, the \textbf{reflection} and \textbf{transmission coefficients} are defined as 
\begin{equation}\label{refltranscoeff}
\mathcal{R} = \frac{J^r}{J^t}, \quad \mathcal{T}=\frac{J^t}{J^i}.
\end{equation}
These coefficients tell us what part of the average normal flux of the incident wave is reflected and what part is transmitted; also, since the system is conservative, we must have $\mathcal{R}+\mathcal{T}=1$. 

The integrals involved in these expressions are the average normal fluxes of the respective waves (incident, reflected or transmitted). We use Lemma \ref{Lemma1} (provided with proof in Appendix \ref{appendixCauchy}) in the computations of these coefficients.

In order to have a physical meaning, the final solution for the displacement $u$ must be real, so that we consider only the real parts of the displacements and stresses for the computation of the flux. For the first component of the flux vector for a longitudinal or SV wave, we have, according to equation \eqref{eq:Cauchyflux} and using the plane-wave ansatz
\begin{align}
\frac{1}{T}\int_0^{T} H_1 dt  &= \frac{1}{T}\int_0^{T} \Re\left(-u_{1,t}\right)\Re\left(\sigma_{11}\right) + \Re\left(-u_{2,t}\right)\Re\left(\sigma_{12}\right) dt \nonumber\\
&= \frac{1}{T} \int_0^{T} \Re\left(i\omega a \psi_1 e^{i\left(\langle x,k\rangle -\omega t\right)}\right)\Re\left( \left[(2 \mu + \lambda) \psi_1 k_1 + \lambda \psi_2 k_2\right]i a e^{i\left(\langle x,k\rangle -\omega t\right)}\right) \nonumber\\
&\hspace{0.965cm} +\Re\left( i\omega a \psi_2 e^{i\left(\langle x,k\rangle -\omega t\right)}\right)\Re\left(\mu(\psi_1 k_2 + \psi_2 k_2)i a e^{i\left(\langle x,k\rangle -\omega t\right)}\right)dt \nonumber\\
&\hspace{-0.39cm}\underset{\mathrm{ Lemma \ref{Lemma1}}}{=}\frac{1}{2} \Re\left(\left[(2\mu + \lambda) |\psi_1|^2 k_1 + \lambda \psi_1^{*}\psi_2 k_2 + \mu \left(\psi_1\psi_2^{*}k_2 + |\psi_2|^2k_1\right)\right]|a|^2\omega\right) \label{eq:fluxLSV}.
\end{align}
As for the case of an SH wave, we have
\begin{align}
\frac{1}{T}\int_0^{T} H_1 dt  &= \frac{1}{T}\int_0^{T} \Re\left(-u_{3,t}\right)\Re\left(\sigma_{13}\right)dt=\frac{1}{T}\int_0^{T} \Re\left(i\omega ae^{i\left(\langle x,k\rangle -\omega t\right)} \right)\Re\left(ik_1\mu a e^{i\left(\langle x,k\rangle -\omega t\right)}\right)dt \nonumber\\
&\underset{\mathrm{ Lemma \ref{Lemma1}}}{=} \frac{1}{2}\Re\left(\mu k_1 |a|^2 \omega\right) \label{eq:fluxSH}.
\end{align}
Such expressions for the fluxes, together with the linear decompositions given for $H_1^r$ and $H_1^t$, allow us to explicitly compute the reflection and transmission coefficients. 
\subsection{The particular case of propagative waves}
We have seen that, when considering two Cauchy media with an interface, two cases are possible, namely waves which propagate in the two considered half-planes and Stoneley waves, which only propagate along the interface but decay away from it. Stoneley waves do not propagate in the considered media and are related to imaginary values of the first component of the wave number. When considering fully propagative waves ($k_1$ and $k_2$ both real) the results provided before can be interpreted on a more immediate physical basis, which we detail in the present section.
%\subsubsection{Plane-wave ansatz and nullspaces}
The previous ansatz and calculations were performed without any hypothesis on the nature of the components of $k$: they were assumed to be either real or imaginary. %As we remarked, when both components are real, we have a propagating wave, while when one of them is real ($k_2$) and the other imaginary ($k_1$), we have a Stoneley-wave. Both cases are covered by the previous calculations and considerations.
However, when we consider a fully propagating wave we will demonstrate that we can recover some classical formulas and results which are usually found in the literature by considering the vector of direction of propagation of the wave, instead of the wave-vector $k$. 
%Indeed, assume that $k = \kabs \xi$, where $\xi=(\xi_1,\xi_2)^T$ is a unit vector whose components are real. In this case, we interpret $\xi$ as the vector of direction of propagation of the traveling wave and we change the
For a fully propagative wave, the plane-wave ansazt can be written as
\begin{equation}\label{eq:planeansatzCauchy2}
u=\widehat{\psi}\,e^{i(|k| \left\langle x ,\xi\right\rangle - \omega t)} = \widehat{\psi}\,e^{i(|k|( x_1\xi_1 +  x_2\xi_2) - \omega t)},
\end{equation}
where $|k|$ is now the wave-number, which is defined as the modulus of the wave-vector $k$ and $\xi=(\xi_1,\xi_2)^T:=\frac{k}{|k|}$ is the so-called vector of propagation. This real vector $\xi$ has unit length ($\xi_1^2+\xi_2^2=1$).

By inserting $k_1 = \kabs \xi_1$ and $k_2 = \kabs \xi_2$ in \eqref{eq:k1Cauchysquared} we find
\begin{equation}\label{eq:k1Cauchy2squared}
\kabs^2 = \frac{\omega^2}{c_L^2}\text{ or }\kabs^2 = \frac{\omega^2}{c_S^2}, 
\end{equation}
or,
\begin{equation} \label{eq:k1Cauchy2}
\kabs = \pm \frac{\omega}{c_L} \text{ or } \kabs =\pm \frac{\omega}{c_S},
\end{equation}
where, again, the signs in \eqref{eq:k1Cauchy2} must be chosen depending on what kind of wave we consider (positive for incident and transmitted waves, negative for reflected waves).  Expressions \eqref{eq:k1Cauchy2} give the well-known linear dependence between the frequency $\omega$ and the wave-number $\kabs$ through the speeds $c_L$ and $c_S$ for longitudinal and shear waves respectively. Such behavior is known as a ``non-dispersive'' behavior, which means that in a Cauchy medium longitudinal and shear waves propagate at a constant speed ($c_L$ for longitudinal and $c_S$ for shear waves).

Choosing the first solution in \eqref{eq:k1Cauchy2}, so that $\omega = \kabs c_L$ and inserting $k=\kabs \xi$ into \eqref{eq:nullspacelong} we can can find the nullspace in the case of a propagative longitudinal wave
\begin{equation}\label{eq:nullspaceCauchylongReal}
\widehat{\psi} = \left(\begin{array}{c}
1 \\
\frac{c_L \kabs \xi_2}{\sqrt{\kabs^2 c_L^2 - c_L^2 \kabs \xi_2^2}}
\end{array}\right)=
\left(
\begin{array}{c}
1\\
\frac{\xi_2}{\xi_1}
\end{array}\right)
\end{equation}

%\subsubsection*{Nullspaces for shear wave}
Equivalently, choosing the second solution in \eqref{eq:k1Cauchy2squared} so that $\omega = \kabs c_S$ and again inserting $k=\kabs \xi$ into \eqref{eq:nullspaceshear} we find for the second component of the eigenvector
\begin{equation}
\frac{k_2^2c_S^2-\omega^2}{k_2c_S\sqrt{\omega^2-k_2^2c_S^2}}=\frac{\kabs^2 \xi_2^2 c_S^2-\kabs^2 c_S^2}{\kabs \xi_2 c_S \sqrt{\kabs^2 c_S^2 -\kabs^2 \xi_2^2 c_S^2}}=-\frac{\xi_1^2}{\xi_2 \xi_1},
\end{equation}
%\[
%\frac{\kabs^2 \xi_2^2 c_S^2 - \kabs^2 c_S^2}{\kabs \xi_2 c_S \sqrt{\kabs^2 c_S^2-\kabs^2c_S^2\xi_2^2}} = \frac{\xi_2^2-1}{\xi_2 \xi_1}=-\frac{\xi_1^2{{\xi_2 \xi_1},
%\]
so the eigenvector for a propagative shear wave is\footnote{We neglected the sign of the $\kabs$ in the above calculations. Fixing the direction of propagation will automatically impose the sign of both $\kabs$ and $\xi_1$, which we then plug into equations \eqref{eq:nullspaceCauchylongReal} or \eqref{eq:nullspaceCauchyShearReal}.}

\begin{equation}\label{eq:nullspaceCauchyShearReal}
\widehat{\psi}=\left(
\begin{array}{c}
1\\
-\frac{\xi_1}{\xi_2}
\end{array}\right).
\end{equation}

The forms \eqref{eq:nullspaceCauchylongReal} and \eqref{eq:nullspaceCauchyShearReal} for the eigenvectors of L and SV propagative waves are suggestive because they allow to immediately  visualize the vector of propagation $\xi$ and, thus, the eigenvectors $\psi$ themselves in terms of the angles formed by the considered propagative wave and the interface (see Figure \ref{fig:SnellCauchy} and Table \ref{table:Table1}).

\begin{figure}[H]
	\centering
	\includegraphics[scale=0.35]{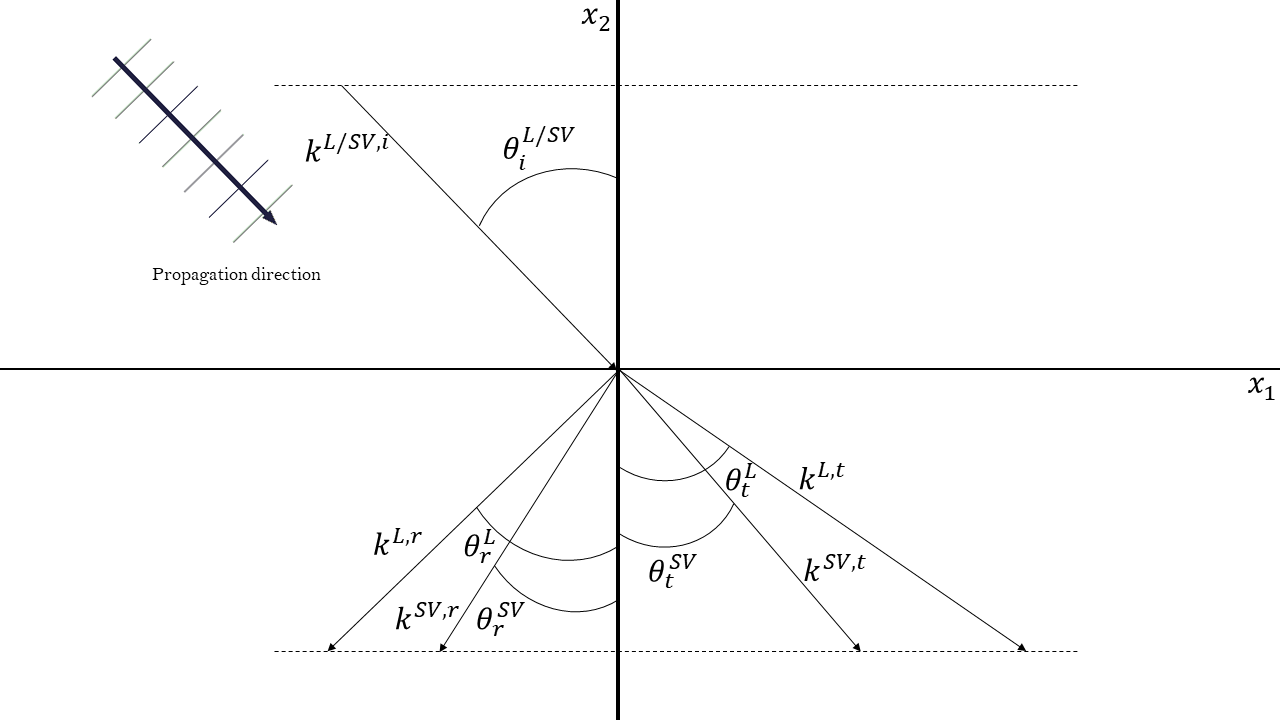}
	\caption{\small Illustration of reflection and transmission patterns for propagative waves and Snell's law at a Cauchy/Cauchy interface. The second components of all wave-vectors are equal to each other according to \eqref{eq:SnellCauchy}, forcing the reflected and transmitted wave-vectors to form angles with the interface as shown here according to \eqref{eq:SnellCauchyReal}.}
	\label{fig:SnellCauchy}
\end{figure}

In the propagative case, the solutions \eqref{eq:solplus} and \eqref{eq:solminus} particularize into
\begin{align}
u^{-}(x_1,x_2,t) &= a^{L,i} \psi^{L,i} e^{i\left(\left\langle x, \kabs \xi^{L,i}\right\rangle - \omega t\right)} + a^{L,r} \psi^{L,r} e^{i\left(\left\langle x, \kabs \xi^{L,r}\right\rangle - \omega t\right)} + a^{SV,r} \psi^{SV,r} e^{i\left(\left\langle  x,\kabs \xi^{SV,r}\right\rangle - \omega t\right)},\label{eq:solplusprop}\\ 
u^{+}(x_1,x_2,t) &= a^{L,t} \psi^{L,t} e^{i\left(\left\langle x, \kabs \xi^{L,t} \right\rangle - \omega t\right)} + a^{SV,t} \psi^{SV,t}e^{i\left(\left\langle   x,\kabs \xi^{SV,t}\right\rangle - \omega t\right)}. \label{eq:solminusprop} 
\end{align}
\begin{table}[H]
	\centering
	\begin{tabular}{c|c|c|c}
		Wave & $\xi$ & $\psi$ & Speed of propagation \\
		\hline
		$L,i$ & $\left(\sin \theta^{L}_i, -\cos\theta^{L}_i,0\right)^T$ & $\left(\sin \theta^{L}_i, -\cos\theta^{L}_i,0\right)^T$ &  $c_L$\\
		\hline 
		$SV,i$ & $\left(\sin \theta^{SV}_i, -\cos\theta^{SV}_i,0\right)^T$ & $\left(\cos \theta^{SV}_i, \sin\theta^{SV}_i,0\right)^T$ &  $c_S$ \\
		\hline
		$L,r$ & $\left(-\sin \theta^{L}_r, -\cos \theta^{L}_r,0\right)^T$ & $\left(-\sin \theta^{L}_r, -\cos \theta^{L}_r,0\right)^T$ & $c_L$\\
		\hline
		$SV,r$ & $\left(-\sin \theta^{SV}_r, -\cos \theta^{SV}_r,0\right)^T$ & $\left(\cos \theta^{SV}_r, -\sin \theta^{SV}_r,0\right)^T $ & $c_S$\\
		\hline
		$L,t$ & $\left(\sin \theta^{L}_t, -\cos \theta^{L}_t,0\right)^T $ & $\left(\sin \theta^{L}_t, -\cos \theta^{L}_t,0\right)^T $ & $c_L^+$\\
		\hline
		$SV,t$ & $\left(\sin \theta^{SV}_t, -\cos \theta^{SV}_t,0\right)^T $ & $\left(\cos \theta^{SV}_t, \sin \theta^{SV}_t,0\right)^T $ & $c_S^+$\\
	\end{tabular}
	\caption{\small Summary of the vectors of direction of propagation and of vibration for all different waves produced at a Cauchy/Cauchy interface.}
	\label{table:Table1}
\end{table}

Using in \eqref{eq:solplusprop}, \eqref{eq:solminusprop} and the forms given in Table \ref{table:Table1} for the propagation vectors $\xi$ and the eigenvectors $\psi$ as well as expressions \eqref{eq:k1Cauchy2} and \eqref{eq:nullspaceCauchylongReal}, we can remark that the only unknowns are the amplitudes $a$ and the angles $\theta$. The angles $\theta$ of the different waves can be computed in terms of the angle of the incident wave by using boundary conditions. Indeed, condition \eqref{eq:SnellCauchy} can be rewritten in the propagative case as
\begin{equation}\label{eq:SnellCauchyProp}
|k^{L,i}| \xi_2^{L,i} = |k^{L,r}| \xi_2^{L,r} = |k^{SV,r}| \xi_2^{SV,r} = |k^{L,t}| \xi_2^{L,t} =|k^{SV,t}| \xi_2^{SV,t},
\end{equation}
which, using equations \eqref{eq:k1Cauchy2} and \eqref{eq:nullspaceCauchylongReal}, as well as the $\xi_2$ given in Table \ref{table:Table1} and simplifying the frequency, gives the well-known \textbf{Snell's law} (Fig. \ref{fig:SnellCauchy})\footnote{Equation \eqref{eq:SnellCauchyReal} clarifies why the angles of a different reflected and transmitted waves are chosen to be as in Fig. \ref{fig:SnellCauchy}, instead of choosing the opposite ones.} 

\begin{equation}\label{eq:SnellCauchyReal}
\boxed{\frac{\cos \theta^{L}_i}{c_i} = \frac{\cos \theta^{L}_r}{c_L} = \frac{\cos \theta^{SV}_r}{c_S} = \frac{\cos \theta^{L}_t}{c_L^{+}} = \frac{\cos \theta^{SV}_t}{c_S^{+}}}\, ,
\end{equation}
where we have to choose the speed of the incident wave $c_i=c_L$ if it is longitudinal or $c_i=c_S$ if it is shear.

As already remarked, once the angles of the different propagative waves are computed, the only unknowns in the solutions \eqref{eqsolplusprop} and \eqref{eq:solminusprop} are the scalar amplitudes $a$, which can be computed as done in subsections \ref{sec:CCLSV} and \ref{sec:CCSH}, by imposing boundary conditions. The treatise made in this section does not add new features to the previous considerations made in section \ref{sec:ReflTransCC}, but allows us to visualize the traveling waves according to the classical \textbf{Snell's law} and to recover classical results concerning the dispersion curves. Clearly, such reasoning cannot be repeated for Stoneley waves, for which the more general digression made in section \ref{sec:ReflTransCC} must be addressed. 

\section{Basics on dispersion curves analysis for bulk wave propagation in relaxed micromorphic media}\label{sec:dispersion}
Since it is useful for the comparison with the literature, we recall here the classical analysis of dispersion curves for the considered relaxed micromorphic medium (see \cite{madeo2016reflection,madeo2016complete,dagostino2017panorama}).
 
To that end, we make the hypothesis of propagative waves (see eq. \eqref{relaxedplanewave1}) and we show the plots of the frequency $\omega$ against the wave-number $\kabs$ which are known as dispersion curves.

We will show that some frequency ranges, known as band-gaps, exist, such that for a given frequency in this range, no real value of the wave-number $\kabs$ can be found. This basically means that the hypothesis of propagative wave is not satisfied in this range of frequencies and the solution must be more generally written as 
\begin{equation}e^{i\left( x_1 k_1 + x_2 k_2 - \omega t\right)},\end{equation} where $k_1$ and $k_2$ are both imaginary, giving rise to evanescent waves (i.e. waves decaying exponentially in both the $x_1$ and the $x_2$ direction).

We explicitly remark here, that this treatise on dispersion curves in the relaxed micromorphic model has already been performed in \cite{madeo2016reflection,madeo2016complete,dagostino2017panorama}, but we recall it here for the $2$D case in order to have a direct idea on the band-gap region of the considered medium. The reader who is uniquely interested in the reflective properties of interfaces and not in the bulk properties of the relaxed micromorphic model can entirely skip this section.	

From now on, we set the following parameter abbreviations for characteristic speeds and frequencies: 
\small
\begin{align}
&c_m = \sqrt{\frac{\mue \Lc}{\eta}},\quad c_s = \sqrt{\frac{\mue + \muc}{\rho}},\quad c_p=\sqrt{\frac{\lame +2\mue}{\rho}},\quad c_l=\sqrt{\frac{\lame + \mue - \muc}{\rho}},\nonumber \\
&\omega_s = \sqrt{\frac{2(\mue + \mumic)}{\eta}}, \quad \omega_r =\sqrt{\frac{2\muc}{\eta}},\quad \omega_t = \sqrt{\frac{\mumic}{\eta}},\quad \omega_l = \sqrt{\frac{\lammic + 2 \mumic}{\eta}},\label{eq:charseppesfreqRMM}\\
&\omega_p = \sqrt{\frac{(3\lame + 2\mue) + (3\lammic + 2\mumic)}{\eta}}. \nonumber 
\end{align}
\normalsize

As done for the Cauchy case, we suppose that the involved kinematic fields only depend on $x_1$ and $x_2$ (no dependence on the out-of-plane variable $x_3$), i.e.
\begin{equation}\label{eq:uP2D}
u = (u_1(x_1,x_2,t),u_2(x_1,x_2,t),u_3(x_1,x_2,t))^T,\quad P=(P_1^T(x_1,x_2,t),P_2^T(x_1,x_2,t),P_3^T(x_1,x_2,t))^T,
\end{equation}
where we recall that, according to our notation, $P^T_i, \text{ } i=1,2,3$ are the rows of the micro-distortion tensor $P$.

We plug $u$ and $P$  from \eqref{eq:uP2D}  into \eqref{eq:relaxedeqns}. The resulting system of equations is presented in Appendix \ref{appendixRelaxed1} in component-wise notation. We proceed to make the following change of variables which are motivated by the Cartan-Lie decomposition of the tensor $P$:
\begin{align}
&P^S = \frac{1}{3}(P_{11} + P_{22} + P_{33}), \quad P^{D}_1 = P_{11} - P^S, \quad P^D_2 = P_{22} - P^S, \quad P_{(1\gamma)} = \frac{1}{2}(P_{1\gamma} + P_{\gamma 1}), \nonumber \\
&P_{[1\gamma]} = \frac{1}{2}(P_{1\gamma} - P_{\gamma 1}), \quad P_{(23)} = \frac{1}{2}(P_{23} + P_{32}), \quad P_{[23]} = \frac{1}{2}(P_{12} - P_{21}), \label{eq:CartanLie}
\end{align}
with $\gamma = 2,3$.
We can then collect the variables which are coupled (see the equations presented in Appendix \ref{appendixRelaxed2}) as
\begin{align}
v^1 &= \left(u_1, u_2, P_1^D, P_2^D, P^S, P_{(12)}, P_{[12]}\right)^T, \label{eq:v1}\\
v^2 &= \left(u_3, P_{(13)}, P_{[13]}, P_{(23)}, P_{[23]}\right)^T. \label{eq:v2}
\end{align}
%In that way, we see that the new resulting system is actually a set of two mutually uncoupled systems. These are 
%\begin{equation}
%A_1\cdot v^1 =0, \quad A_2 \cdot v^2 = 0,
%\end{equation}
We make the following plane-wave ansatz: 
\begin{align}
v^1 &= \widehat{\phi}\,e^{i(\left\langle x,k \right\rangle - \omega t)}=\widehat{\phi}\,e^{i(x_1 k_1 + x_2 k_2 - \omega t)},\label{eq:planewavemicromorphic1}\\
v^2 &= \widehat{\chi}\,e^{i(\left\langle x,\widetilde{k} \right\rangle - \omega t)}=\widehat{\chi}\,e^{i(x_1\widetilde{k}_1  +x_2 \widetilde{k}_2  - \omega t)},\label{eq:planewavemicromorphic2}
\end{align}
and end up with two mutually uncoupled systems of the form (see Appendix \ref{appendixRelaxed3} for the explicit form of the matrices $A_1$ and $A_2$)
\begin{equation} \label{eq:systemmicro}
A_1 \cdot \widehat{\phi} = 0,\quad
A_2 \cdot \widehat{\chi} = 0, 
\end{equation}
where $A_1 \in \C^{7\times 7}$, $A_2\in \C^{5\times 5}$, $\widehat{\phi}	\in \C^7$ and $\widehat{\chi}\in \C^5$.  Closer examination of the first system reveals that the components of the kinematic fields involved in these equations are the first and second only, while in the second system only components involving the out-of-plane direction $x_3$ are always present in every equation. This means that, in analogy to the case of Cauchy media, we have the same kind of uncoupling between movement in the $(x_1x_2)-$ plane (in-plane) and in the $(x_2x_3)-$ plane (out-of-plane). There is, however, no immediate distinction of longitudinal and shear waves. 

\subsection{In-plane variables}\label{sec:planewaverelaxedA1}
We assume a propagating wave in which case the plane-wave ansatz is
\begin{equation} \label{relaxedplanewave1}
v^1 =\widehat{\phi}\, e^{i\left(\kabs\left(x_1\xi_1  + x_2\xi_2 \right)-\omega t\right)}, 
\end{equation}
where $\xi=(\xi_1,\xi_2)^T$ is a real unit vector and $\widehat{\phi}$ is the vector defined of amplitudes. 

The polynomial $\det A_1$ is of degree $14$ in $\omega$ and of degree $10$ in $\kabs$. Solving the equation $\det A_1=0$ with respect to $\omega$ gives fourteen solutions of the form
\begin{equation}\label{eq:omRMM1}
\omega(\kabs)\, =\, \pm \omega_1(\kabs),\,\,\pm \omega_2(\kabs), \,\,\pm \omega_3(\kabs), \,\,\pm \omega_4(\kabs), \,\, \pm \omega_5(\kabs),\,\,\pm \omega_6(\kabs),\,\, \pm \omega_7(\kabs);
\end{equation}
while solving $\det A_1 = 0$ with respect to to $\kabs$, gives ten solutions of the form 
\begin{equation}\label{eq:kRMM1}
\kabs(\omega)\,=\,\pm |k^{\op}|(\omega),\,\,\pm |k^{\twop}|k(\omega),\,\,\pm |k^{\thp}|(\omega),\,\,\pm |k^{\fp}|(\omega),\,\,\pm |k^{\fip}|(\omega);
\end{equation}
in both cases, we keep only the positive values because the wave is traveling in the $x_1>0$ direction. 

Plotting the functions $\omega_i(\kabs)$, $i=1,\ldots,7$ in the $(\omega,\kabs)-$ plane gives us the dispersion curves for plane waves propagating in a relaxed micromorphic medium in two space dimensions (see Figure \ref{fig:dispA1}). The results concerning the dispersive behavior of relaxed micromorphic media are presented in \cite{madeo2016reflection,madeo2016complete} and are recalled in Appendix \ref{appendixRelaxed41} for the sake of completeness. 
\begin{figure}[H]
	\centering
	\includegraphics[scale=0.65]{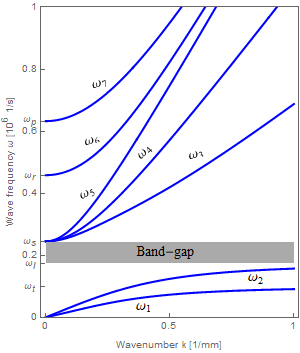} 
	\caption{\small The in-plane dispersion curves of a plane wave propagating in an isotropic relaxed micromorphic continuum in two space dimensions. The gray region denotes the band-gap, in which propagation cannot occur.}
	\label{fig:dispA1}
\end{figure}

\subsection{Out-of-plane-variables}
Analogously to the case of in-plane variables, we assume a propagating wave in which case the plane-wave ansatz is
\begin{equation} \label{relaxedplanewave2}
v^2 =\widehat{\chi}\, e^{i\left(|\widetilde{k}|\left(x_1\xi_1  + x_2\xi_2 \right)-\widetilde{\omega} t\right)}, 
\end{equation}
where $\xi=(\xi_1,\xi_2)^T$ is a real unit vector and $\widehat{\chi}$ is the vector of amplitudes. 

The polynomial $\det A_2$ is of degree $10$ in $\widetilde{\omega}$ and of degree $8$ in $|\widetilde{k}|$. Solving the equation $\det A_2=0$ with respect to $\widetilde{\omega}$ gives ten solutions of the form
\begin{equation}\label{eq:omRMM2}
\widetilde{\omega}(|\widetilde{k}|)\, =\, \pm \widetilde{\omega}_1(|\widetilde{k}|),\,\,\pm \widetilde{\omega}_2(|\widetilde{k}|),\,\,\pm \widetilde{\omega}_3(|\widetilde{k}|),\,\,\pm \widetilde{\omega}_4(|\widetilde{k}|),\,\, \pm \widetilde{\omega}_5(k);
\end{equation}
while solving $\det A_2 = 0$ with respect to to $\widetilde{k}$, gives eight solutions 
\begin{equation}\label{eq:kRMM2}
|\widetilde{k}|(\widetilde{\omega})\,=\,\pm|\widetilde{k}^{\op}|(\widetilde{\omega}),\,\,\pm|\widetilde{k}^{\twop}|(\widetilde{\omega}),\,\,\pm|\widetilde{k}^{\thp}|(\widetilde{\omega}),\,\,\pm|\widetilde{k}^{\fp}|(\widetilde{\omega}).
\end{equation}

Once again, we only consider the positive values of the $\widetilde{\omega}$'s and $|\widetilde{k}|$'s  since the waves are traveling in the $x_1>0$ direction. The dispersion curves for the out-of-plane variables are presented in Figure \ref{fig:dispA2}.

\begin{figure}[h!]
	\centering
	\includegraphics[scale=0.65]{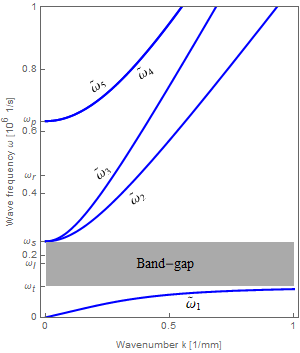} 
	\caption{\small The out-of-plane dispersion curves of a wave propagating in a relaxed micromorphic continuum in two space dimensions. The gray region depicts the band-gap for out-of-plane quantities.}
	\label{fig:dispA2}
\end{figure}

Extra details are given in \cite{madeo2016reflection} and recalled in Appendix \ref{appendixRelaxed42}.

\section{Reflective properties of a Cauchy/relaxed micromorphic interface}\label{sec:reflCauchyRMM}

In this section, closely following what was done for the Cauchy case in section \ref{sec:Cauchywaveprop}, we look for non-trivial solutions of the equations \eqref{eq:systemmicro} imposing $\det A_1= 0$ and $\det A_2 = 0$. Once again, we fix the second component $k_2$ (resp. $\widetilde{k}_2$) of the wave-vector and solve these equations with respect to $k_1$ (resp. $\widetilde{k}_1$).\footnote{We will show later on that also in the case of an interface between a Cauchy and a relaxed micromorphic medium, the component $k_2$ of the wave-vector can be considered to be known when imposing the jump conditions holiding at the interface} The expressions for the solutions of these equations are quite complex and we do not present them explicitly here. As discussed before, we find five and four solutions for the two systems respectively\footnote{Since, as already stated, in this case we cannot a priori distinguish which modes are longitudinal and which are shear, we slightly shift our notation and use numbers in parentheses instead of describing the nature of the mode.}
\begin{equation}\label{eq:pmk}
\pm k_1^{\op}(k_2,\omega),\,\,\pm k_1^{\twop}(k_2,\omega),\,\,\pm k_1^{\thp}(k_2,\omega),\,\,\pm k_1^{\fp}(k_2,\omega),\,\,\pm k_1^{\fip}(k_2,\omega),
\end{equation}
for the in-plane problem and
\begin{equation}\label{eq:pmktil}
\pm \widetilde{k}_1^{\op}(\widetilde{k}_2,\omega),\,\,\pm \widetilde{k}_1^{\twop}(\widetilde{k}_2,\omega),\,\,\pm \widetilde{k}_1^{\thp}(\widetilde{k}_2,\omega),\,\,\pm \widetilde{k}_1^{\fp}(\widetilde{k}_2,\omega), 
\end{equation}
for the out-of-plane problem.

Such solutions for $k_1$ (resp. $\widetilde{k}_1$) depend on the second component $k_2$ (resp. $\widetilde{k}_2$)  of the wave-vector and on the frequency, but of course also on the values of the material parameters of the relaxed micromorphic model. We plug these solutions of the characteristic polynomials into the matrix $A_1$ (resp. $A_2$) and calculate for each different $k$ (resp. $\widetilde{k}$) the five (resp. four) nullspaces of the matrix. We find 
%\begin{align*}
%\widehat{\phi}^{\op},\widehat{\phi}^{\twop},\widehat{\phi}^{\thp},\widehat{\phi}^{\fp},\widehat{\phi}^{\fip}\\
%\widehat{\chi}^{\op},\widehat{\chi}^{\twop},\widehat{\chi}^{\thp},\widehat{\chi}^{\fp},					
%\end{align*}
\begin{equation}\label{eq:phiRMM}
\widehat{\phi}^{\op},\widehat{\phi}^{\twop},\widehat{\phi}^{\thp},\widehat{\phi}^{\fp},\widehat{\phi}^{\fip}, 
\end{equation}
\begin{equation} \label{eq:chiRMM}
\widehat{\chi}^{\op},\widehat{\chi}^{\twop},\widehat{\chi}^{\thp},\widehat{\chi}^{\fp}, 
\end{equation}
as solutions to the equations $A_1 \cdot \widehat{\phi} = 0$ and $A_2 \cdot \widehat{\chi} =0$, respectively. We normalize these vectors, thus introducing the normal vectors
\begin{equation}\label{eq:phichinormal}
\phi^{(i)} = \frac{1}{|\widehat{\phi}^{(i)}|}\widehat{\phi}^{(i)}, \quad \chi^{(j)} = \frac{1}{|\widehat{\chi}^{(j)}|}\widehat{\chi}^{(j)},
\end{equation}
$i=1,\ldots 5$, $j=1,\ldots 4$. Finally, we can write the solution to equations \eqref{eq:relaxedeqns} as
\small
\begin{equation}\label{eq:solmicro1} 
v^1 = \alpha_1 \phi^{\op} e^{i\left(\left\langle x,k^{\op} \right\rangle - \omega t\right)} + \alpha_2 \phi^{\twop} e^{i\left(\left\langle x,k^{\twop} \right\rangle - \omega t\right)}+\alpha_3 \phi^{\thp} e^{i\left(\left\langle x,k^{\thp} \right\rangle - \omega t\right)}+\alpha_4 \phi^{\fp} e^{i\left(\left\langle x,k^{\fp} \right\rangle - \omega t\right)}+\alpha_5 \phi^{\fip} e^{i\left(\left\langle x,k^{\fip} \right\rangle - \omega t\right)},
\end{equation}

%\begin{equation}\label{eq:solmicro2}
%v^2 = \beta_1 \chi^{\op} e^{i\left(\left\langle x,\widetilde{k}^{\op} \right\rangle - \omega t\right)} + \beta_2 \chi^{\twop} e^{i\left(\left\langle x, \widetilde{k}^{\twop} \right\rangle - \omega t\right)}+\beta_3 \chi^{\thp} e^{i\left(\left\langle x,\widetilde{k}^{\thp} \right\rangle - \omega t\right)}+\beta_4 \chi^{\fp} e^{i\left(\left\langle x,\widetilde{k}^{\fp} \right\rangle - \omega t\right)}, 
%\end{equation}
\begin{equation}\label{eq:solmicro2}
v^2 = \beta_1 \chi^{\op} e^{i\left(\left\langle x,\widetilde{k}^{\op} \right\rangle - \omega t\right)} + \beta_2 \chi^{\twop} e^{i\left(\left\langle x, \widetilde{k}^{\twop} \right\rangle - \omega t\right)}+\beta_3 \chi^{\thp} e^{i\left(\left\langle x,\widetilde{k}^{\thp} \right\rangle - \omega t\right)}+\beta_4 \chi^{\fp} e^{i\left(\left\langle x,\widetilde{k}^{\fp} \right\rangle - \omega t\right)}, 
\end{equation}\normalsize
where $\alpha_i, \beta_j \in \C$ for $i=1,\ldots 5$, $j=1,\ldots, 4$ are the unknown amplitudes of the different modes of propagation.\footnote{We recall again that, once the eigenvalue problem is solved (i.e. once the wave-vector $k$ (resp. $\widetilde{k}$ is known) and the eigenvectors $\phi$ (resp. $\chi$) are computed, the only unknowns of the problem remain the five (resp. four) amplitudes $\alpha$ (resp $\beta$), which can be computed by imposing boundary conditions.}

We explicitly remark that expressions \eqref{eq:pmk} and \eqref{eq:pmktil} for the first component $k_1$ and $\widetilde{k}_1$ of the wave-vectors, can give rise, similarly to the Cauchy case, to different scenarios when varying the value of the frequency $\omega$ and the material parameters. As a matter of fact, we briefly remarked before that $k_2$ can be considered to be known when imposing jump conditions. Indeed, following analogous steps to those performed  to obtain equation \eqref{eq:SnellCauchy} for the interface between two Cauchy media, we can impose the continuity of displacements between a Cauchy and a relaxed micromorphic medium. Considering the first component of the vector equation for the continuity of displacement, in which the plane-wave ansatz has been used, one can find, when imposing a longitudinal incident wave on the Cauchy side\footnote{When imposing a longitudinal incident wave on the Cauchy side, $k_2^{L,i}$ is considered to be known. The same reasoning holds when imposing an incident SV wave; in this case, $k_2^{L,i}$ must be replaced by $k_2^{SV,i}$ in eq. \eqref{eq:SnellmicromorphicLSV}.}
%\begin{equation}\label{eq:SnellmicromorphicLSV}
%\boxed{k_2^{L,i} = k_2^{L,r} = k_2^{SV,r} = k_2^{\op,t} = k_2^{\twop,t} = k_2^{\thp,t} = k_2^{\fp,t} = k_2^{\fip,t}}\,.
%\end{equation}
%\begin{empheq}[box={\textfbox[\textbf{Generalized Snell's Law}]}]{equation}
%k_2^{L,i} = k_2^{L,r} = k_2^{SV,r} = k_2^{\op,t} = k_2^{\twop,t} = k_2^{\thp,t} = k_2^{\fp,t} = k_2^{\fip,t}
%\end{empheq}
\begin{empheq}[box=\fbox]{gather}
k_2^{L,i} = k_2^{L,r} = k_2^{SV,r} = k_2^{\op,t} = k_2^{\twop,t} = k_2^{\thp,t} = k_2^{\fp,t} = k_2^{\fip,t}. \label{eq:SnellmicromorphicLSV}
%\\\notag
\\\notag\text{\textbf{Generalized in-plane Snell's Law}}
\end{empheq}
On the other hand, when imposing an out-of-plane shear incident wave, the continuity of displacement at the interface gives
%\begin{equation}\label{eq:SnellmicromorphicSH}
%\boxed{k_2^{SH,i} =  k_2^{SH,r} = \widetilde{k}_2^{\op,t} = \widetilde{k}_2^{\twop,t} = \widetilde{k}_2^{\thp,t} = \widetilde{k}_2^{\fp,t}}\,.
%\end{equation}
\begin{empheq}[box=\fbox]{gather}
k_2^{SH,i} =  k_2^{SH,r} = \widetilde{k}_2^{\op,t} = \widetilde{k}_2^{\twop,t} = \widetilde{k}_2^{\thp,t} = \widetilde{k}_2^{\fp,t}. \label{eq:SnellmicromorphicSH}
%\\\notag
\\\notag\text{\textbf{Generalized out-of-plane Snell's Law}}
\end{empheq}
Equations \eqref{eq:SnellmicromorphicLSV} and \eqref{eq:SnellmicromorphicSH} tell us that, when fixing the incident wave in the Cauchy medium to be longitudinal ($k_2^{L,i}$ known), in-plane shear ($k_2^{SV,i}$ known) or out-of-plane shear ($k_2^{SH,i}$ known), the second components of all the reflected and transmitted wave-vectors are known. They are the \textbf{generalized Snell's law} for the case of a Cauchy/relaxed micromorphic interface. As before, this traces two possible scenarios, given that the value of $k_2$ for the incident wave is always supposed to be real and positive (propagative wave)
\begin{enumerate}
\item both $k_1$ and $k_2$ (resp. $\widetilde{k}_1, \widetilde{k}_2$) are real (when computing $k_1$ or $\widetilde{k}_1$ via \eqref{eq:pmk} or \eqref{eq:pmktil} respectively) so that one has propagative waves.
\item $k_2$ (resp. $\widetilde{k}_2$) is real and $k_1$ (resp. $\widetilde{k}_1$), when computed via \eqref{eq:pmk} (resp. \eqref{eq:pmktil}) is imaginary, so that one has Stoneley waves propagating only along the interface and decaying away from it.
\end{enumerate}
\begin{figure}[H]
	\centering
	\includegraphics[scale=0.7]{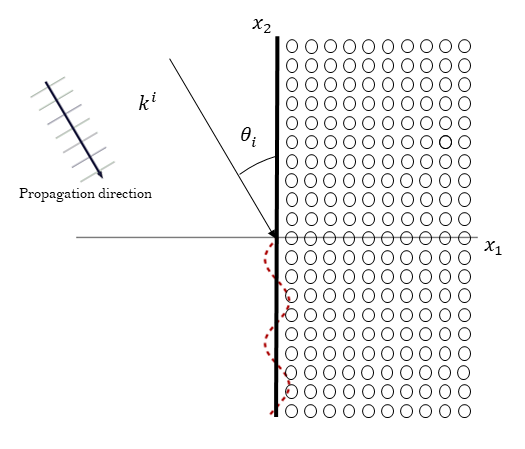}
	\caption{\small Simplified representation of the onset of a interface wave (in red) propagating along the interface between a homogeneous solid and a metamaterial. Depending on the relative stiffnesses of the two media, each of the existing low and high-frequency modes can either become Stoneley or remain propagative.}
	\label{fig:StoneleyMetamaterial}
\end{figure}
Depending on the values of the frequency, of the material parameters and of the angle of incidence, each of the five in-plane waves, or of the four out-of-plane waves, can be either propagative or Stoneley. Thus, Stoneley waves can appear at the considered homogeneous solid/metamaterial interface (see Fig. \ref{fig:StoneleyMetamaterial} for a simplified illustration), both for low and for high-frequency modes.

\subsection{Determination of the reflection and transmission coefficients in the case of a relaxed micromorphic medium}
As for the flux, the normal outward pointing vector to the surface (the $x_2$ axis) is $\nu = (-1,0,0)$. This means that in the expression \eqref{eq:relaxedflux} for the flux, we need only take into account the first component. According to our definition \eqref{eq:relaxedflux}, we have
\begin{equation}\label{eq:relaxedflux1}
H_1=-u_{i,t}\widetilde{\sigma}_{i1}-m_{ih}P_{ij,t}\eps_{jh1}, \quad i,j,h\in\{1,2,3\}.
\end{equation}

%Obviously, due to the appearance of the Levi-Civita alternator in the above expression, we need not consider all possible combinations of $j$ and $h$, since if any one of them becomes $1$, the right-hand side vanishes, but we only have to look at the combinations $(j,h)=(2,3)$ or $(j,h)=(3,2)$. 
%%Again, we refer to the Appendix \ref{appendixRelaxed} for explicit calculations. 

This equation for the flux must now be written with respect to the new variables $v^1$ and $v^2$. It is a tedious but easy calculation to see that the following holds
\begin{equation}\label{eq:relaxedfluxmatrixinplane}
\widetilde{H}:=H_1=v^1_{,t}\cdot (H^{11}\cdot v^1_{,1} +H^{12}\cdot  v^1_{,2}+H^{13}\cdot v^1), 
\end{equation}
for the in-plane problem and 
\begin{equation}\label{eq:relaxedfluxmatrixoutofplane}
\widetilde{H}:=H_1=v^2_{,t}\cdot (H^{21}\cdot v^2_{,1} +H^{22}\cdot  v^2_{,2}+H^{23}\cdot v^2), 
\end{equation}
for the out-of plane problem, where $H^{11}, H^{12}, H^{13},$ and $H^{21}, H^{22}, H^{23}$ are matrices of suitable dimensions (found in Appendix \ref{appendixRelaxed6}).

Having calculated the ``transmitted'' flux, we can now look at the reflection and transmission coefficients for the case of a Cauchy/relaxed micromorphic interface. 

To that end, we again define
\begin{equation}
J^i = \frac{1}{T}\int_0^{T} H^i(x,t) dt, \quad J^r = \frac{1}{T}\int_0^{T} H^r(x,t) dt, \quad J^t = \frac{1}{T}\int_0^{T} H^t(x,t) dt,
\end{equation}
where $H^i=H^{L,i}$, $H^r=H^{L,r}+H^{SV,r}$ and $H^t = \widetilde{H}$. Then the reflection and transmission coefficients are 
\begin{align}\label{eq:reflcoeff2}
\mathcal{R}=\frac{J^r}{J^i}, \quad \mathcal{T}=\frac{J^t}{J^i}.
\end{align}

In order to  easily compute these coefficients, we again employ Lemma \ref{Lemma1}. %We then see that we need not worry about space and time dependency, since these variables only appear in exponentials, which are then multiplied by their complex conjugate and are integrated over their period and so they play no role in the computations. 
Finally, once again we have that $\mathcal{R} + \mathcal{T}=1$. 

In the case of a Cauchy/relaxed micromorphic interface, the dependency of the fluxes on the frequency $\omega$ is maintained. This is due to the fact that the amplitudes needed to calculate the flux depend on $\omega$ (dispersive response), something which is not the case in the Cauchy/Cauchy interface, as was evident in the previous section.

\section{Results}\label{sec:Results}
In this section we present our results concerning the reflective properties of an interface between a Cauchy medium and a relaxed micromorphic medium. We will show that, at low frequencies, the considered interface can be regarded as an interface between a Cauchy medium and a second Cauchy medium, equivalent to the relaxed micromorphic oneand with macroscopic stiffnesses $\lambda_{\text{macro}}$ and $\mu_{\text{macro}}$, when suitable boundary conditions are imposed.

Moreover, we will be able to show that critical angles for the incident wave can be identified in the low-frequency regime, beyond which we can observe the onset of Stoneley waves. These angles are computed from the relations established in Table \ref{table:StoneleyT}. 

In order to present explicit numerical results for the reflective properties of the interface between a Cauchy and a relaxed micromorphic medium, we chose the values for the parameters of the relaxed micromorphic medium as shown in Table \ref{table:parameters}. We explicitly remark that other values of such parameters could be chosen, which would be more or less close to real metamaterials parameters (\cite{madeo2016first,madeo2017relaxed,dagostino2018effective}). Nevertheless, the basic results which we want to show in the present paper are not qualitatively affected by this choice since they only depend on the relative stiffness of the two media which are considered on the two sides and not on the absolute values of such stiffnesses.
\begin{table}[ht]
	\centering
	\begin{tabular}{c c c c c c c c}
		$\rho$ [$\mathrm{kg/m^3}$]&   $\eta$ [$\mathrm{kg/m}$] & $\muc$ [$\mathrm{Pa}$]& $\mue$ [$\mathrm{Pa}$] & $\mumic$ [$\mathrm{Pa}$] & $\lammic$ [$\mathrm{Pa}$] & $\lame$ [$\mathrm{Pa}$] & $L_c$ [$\mathrm{m}$]\\
		\hline
		$2000$&  $10^{-2}$ & $2\times 10^9$ & $2\times 10^8$ & $10^8$ & $10^8$ & $4\times 10^8$  & $10^{-2}$
	\end{tabular}
	\caption{\small Numerical values of the constitutive parameters chosen for the relaxed micromorphic medium.}
	\label{table:parameters}
\end{table}

We can now use the following homogenization formulas, presented in \cite{barbagallo2017transparent,dagostino2018effective}, to compute the equivalent macroscopic coefficients of the Cauchy medium which is approximating the relaxed micromorphic medium at low frequencies
\begin{equation}\label{eq:macroparameters}
\mu_{\text{macro}}=\frac{\mue \,\mumic}{\mue +\mumic}, \qquad
2\mu_{\text{macro}} + 3\lambda_{\text{macro}} = \frac{(2\mue + 3\lame)(2\mumic + 3\lammic)}{2(\mue+\mumic) + 3(\lame + \lammic)}. 
\end{equation}
Note that the Cosserat couple modulus $\muc$ does not appear in the homogenization formulas  \eqref{eq:macroparameters}.

Using formulas \eqref{eq:macroparameters}, we compute the stiffnesses $\lambda_{\text{macro}}$ and $\mu_{\text{macro}}$ of the Cauchy medium which is equivalent to the relaxed micromorphic medium of Table \ref{table:parameters} in the low-frequency regime, as in the following Table:
\begin{table}[H]
	\centering
	\begin{tabular}{c c c}
		$\rho$ [$\mathrm{kg/m^3}$]& $\lambda_{\text{macro}}$ [$\mathrm{Pa}$] & $\mu_{\text{macro}}$ [$\mathrm{Pa}$] \\
		\hline
		$2000$&	$8.25397 \times 10^7$& $6.66667 \times 10^7$ 
	\end{tabular}
	\caption{\small Macro parameters of the equivalent Cauchy medium corresponding to the relaxed medium of Table \ref{table:parameters} at low frequencies.}
	\label{table:macroparameters}
\end{table}

At this point, we will consider the two cases in which the Cauchy medium on the ``$-$'' side (where the incident wave is traveling) is stiffer or softer than the equivalent Cauchy medium on the ``$+$'' side. We will show how, as expected, this difference in stiffness affects the onset of Stoneley waves at low frequencies and, as a consequence, the transmission patterns across the considered interface. 

We will also show that the relaxed micromorphic model is able to predict the appearance of Stoneley waves at higher frequencies, which are substantially microstructure-related.

We will finally show that the relaxed micromorphic model also allows for the description of wide frequency bounds, for which extraordinary reflection is observed. Such frequency bounds go beyond the band-gap region and are related to the presence of the interface, as well as the relative mechanical properties of the considered media. In some cases, high-frequency critical angles discriminating between total transmission and total reflection can also be identified. 

 \subsection{Cauchy medium which is ``stiffer'' than the relaxed micromorphic one}\label{sec:ResultsStiffer}
In this section we present the reflective properties of a Cauchy/relaxed micromorphic interface for which we consider that the Cauchy medium on the left side is ``stiffer'' than the corresponding macroscopic parameters of the relaxed micromorphic medium in Table \ref{table:macroparameters} on the right side. To that end, we chose the material parameters of the left Cauchy medium to be those presented in Table \ref{table:Cauchyparameters} and we explicitly remark that these values are greater than those of Table \ref{table:macroparameters}, which are relative to the equivalent Cauchy medium corresponding to the considered relaxed micromorphic one.	
\begin{table}[ht]
	\centering
	\begin{tabular}{c c c}
		$\rho$ [$\mathrm{kg/m^3}$] &$\lambda$ [$\mathrm{Pa}$] & $\mu$ [$\mathrm{Pa}$] \\
		\hline
		$2000$ & $4\times 10^8$ & $2\times 10^8$
	\end{tabular}
	\caption{\small Lam\'e parameters and mass density of the Cauchy medium on the left side of the considered Cauchy/relaxed micromorphic interface.}
	\label{table:Cauchyparameters}
\end{table}

%We start by noticing that for the chosen values of the constitutive parameters, the values for the critical angles of the incident wave giving rise to Stoneley waves can be found as in Tables \ref{table:StoneleyT} and \ref{table:StoneleyR} when considering the right half-space to be occupied by the equivalent Cauchy medium given in Table \ref{table:macroparameters}. This approximation for the Cauchy/relaxed micromorphic interface is known to be valid in the low-frequency regime (see \cite{neff2017real} \textcolor{red}{CITE}), where the relaxed micromorphic medium is approximated by its Cauchy counterpart with macroscopic stiffnesses $\lambda_{\text{macro}}$ and $\mu_{\text{macro}}$. We explicitly identify in the following figures the ``low-frequency regime'', in which this approximation made by the equivalent Cauchy medium is valid, as well as the critical angles governing the onset of Stoneley waves.
For the chosen values of the constitutive parameters, the critical angles of the incident wave giving rise to Stoneley waves can be calculated using Tables \ref{table:StoneleyT} and \ref{table:StoneleyR}. As already mentioned, this approximation for the Cauchy/relaxed micromorphic interface is valid in the low-frequency regime (see \cite{neff2017real}), where the relaxed micromorphic medium is well approximated by its Cauchy counterpart with macroscopic stiffnesses $\lambda_{\text{macro}}$ and $\mu_{\text{macro}}$. We explicitly identify in the following figures \ref{fig:Tfree} and \ref{fig:Tfixed} the ``low-frequency regime'', where the aforementioned approximation is valid, as well as the critical angles governing the onset of Stoneley waves.

%With the parameters presented in Tables \ref{table:macroparameters} and \ref{table:Cauchyparameters}  we can compute, following Tables \ref{table:StoneleyT} and \ref{table:StoneleyR}, the values of the critical angles for the appearance of Stoneley waves at low frequencies. We present these values in the following table
The explicit computed values of these critical angles are given in Table \ref{table:criticalangles1}.
\begin{table}[H]
	\centering
	\begin{tabular}{c c c c c}
		Incident wave & $\theta_{\text{crit}}^{L,r}$  & $\theta_{\text{crit}}^{L,t}$& $\theta_{\text{crit}}^{SV,t}$& $\theta_{\text{crit}}^{SH,t}$ \\
		\hline
		L& $-$ & $-$ & $-$ & $-$ \\
		\hline 
		SV & $\frac{33 \pi}{100}$ & $\frac{17\pi}{200}$ &$-$  & $-$ \\
		\hline 
		SH	& $-$ & $-$ & $-$ & $-$ \\
	\end{tabular}
	\caption{\small Critical angles governing the onset of Stoneley waves at the Cauchy/equivalent Cauchy interface between the two Cauchy media given in Tables \ref{table:Cauchyparameters} and \ref{table:macroparameters}, respectively. These values are computed according to the formulas given in Tables \ref{table:StoneleyT} and \ref{table:StoneleyR}. The superscripts $r$ and $t$ stand for ``reflected'' and ``transmitted''.}
	\label{table:criticalangles1}
\end{table}

\begin{figure}[H]
	\begin{subfigure}{.5\textwidth}
		\centering
		\includegraphics[scale=0.45]{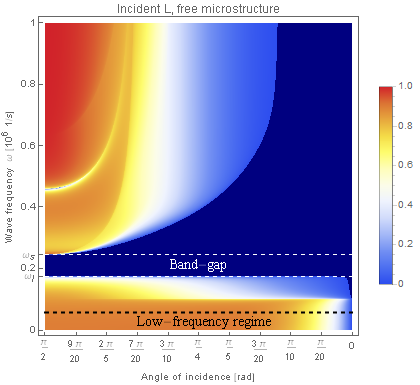}
		\caption{}
		\label{fig:FreeL}
	\end{subfigure}%
	\begin{subfigure}{.5\textwidth}
		\centering
		\includegraphics[scale=0.45]{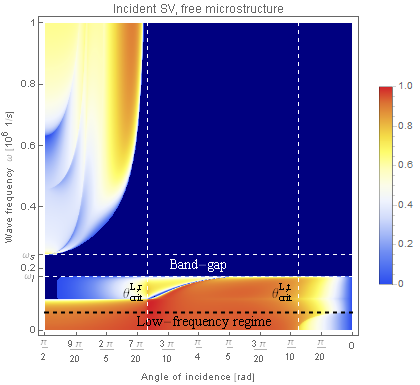}
		\caption{}
		\label{fig:FreeSV}
	\end{subfigure}\\
	\centering
	\begin{subfigure}{.5\textwidth}
		\centering
		\includegraphics[scale=0.45	]{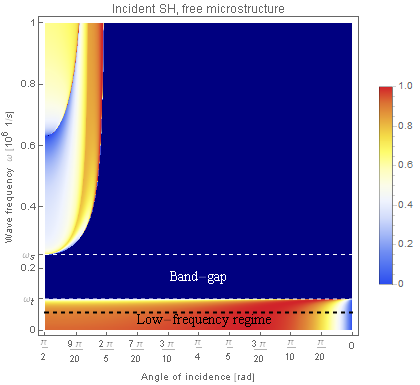}
		\caption{}
		\label{fig:FreeSH}
	\end{subfigure}
	\caption{\small Transmission coefficients as a function of the angle of incidence $\theta_i$ and of the wave-frequency $\omega$ for L (a), SV (b) and SH (c) incident waves for the case of macro-clamp with free microstructure. The origin coincides with normal incidence ($\theta_i=\pi/2$), while the angle of incidence decreases towards the right until it reaches the value $\theta_i=0$, which corresponds to the limit case where the incidence is parallel to the interface. The band-gap region is highlighted by two dashed horizontal lines, where, as expected, we observe no transmission. The low-frequency regime is highlighted by the bottom horizontal dashed line, while the critical angles for the onset of Stoneley waves are denoted by vertical dashed lines. The dark blue zone shows that no transmission takes place, while the gradual change from dark blue to red shows the increase of transmission, red being total transmission.}
	\label{fig:Tfree}
\end{figure}

Figure \ref{fig:Tfree} shows the transmission coefficient for the considered Cauchy/relaxed micromorphic interface, as a function of the angle of incidence and of the frequency, when the microstructure is free to move at the interface ($P$ is left arbitrary at the interface). The coloring of this plot is such that the dark blue regions mean zero transmission, while the gradual change towards red is the increase in transmission (red is total transmission). Before commenting on the details of the behavior of the transmission coefficient, we recall that the case of free microstructure boundary condition is the only one which allows us to precisely obtain a Cauchy/equivalent Cauchy interface in the low-frequency regime, something which is not possible when imposing the fixed microstructure boundary condition ($P_{ij}=0, \text{ } i=2,3, \text{ } j=1,2,3$) at the interface. As a matter of fact, it is firmly established that a relaxed micromorphic continuum is equivalent to a Cauchy continuum with stiffnesses $\lambda_{\text{macro}}$ and $\mu_{\text{macro}}$ when considering the low-frequency regime (see \cite{neff2017real}), but this is proven only for the bulk medium. When considering an interface between a Cauchy and a relaxed micromorphic medium, the latter will behave exactly as an equivalent Cauchy medium at low frequencies only if the micro-distortion tensor $P$ is left free at the interface. Indeed, this tensor will arrange its values at the interface in order to let the low-frequency reflective properties of the Cauchy/relaxed micromorphic interface be equivalent to those of a Cauchy/equivalent Cauchy interface. On the other hand, if we impose the fixed microstructure boundary conditions, the tangential components of the tensor $P$ are forced to vanish at the interface, so that the effect of the microstructure is artificially introduced in the response of the material even for those low frequencies for which the bulk material would tend to behave as an equivalent Cauchy medium. 

\vspace{-0.5cm}
Having drawn such preliminary conclusions, we can now comment Figures \ref{fig:Tfree} and \ref{fig:Tfixed} in detail. For the set of numerical values of the parameters given in Table \ref{table:Cauchyparameters} and \ref{table:macroparameters}, we established that Stoneley waves can appear in the low-frequency regime only when imposing the incident wave to be SV. In particular, the onset of Stoneley waves in the low-frequency regime can be observed in this case only for longitudinal reflected and transmitted waves when the angles of incidence are beyond $\theta_{\text{crit}}^{L,r}$ and $\theta_{\text{crit}}^{L,t}$, respectively. This fact can be retrieved in Figure \ref{fig:FreeSV}, in which an increase of the transmission coefficient can be observed in the low-frequency regime corresponding to $\theta_{\text{crit}}^{L,r}$ (Stoneley reflected waves are created, producing a decrease of the reflected normal flux and, due to energy conservation, a consequent increase of the transmitted normal flux). On the other hand, we can notice in the same figure a decrease of the transmitted energy in the low-frequency regime beyond the critical angle $\theta_{\text{crit}}^{L,t}$. This is sensible, given that beyond the value of $\theta_{\text{crit}}^{L,t}$, transmitted Stoneley waves are created, which do not contribute to propagative transmitted waves in the relaxed micromorphic continuum. We can also explicitly remark that such a decrease of transmitted energy beyond $\theta_{\text{crit}}^{L,t}$ in the low-frequency regime is much more pronounced than in the corresponding Figures \ref{fig:FreeL} and \ref{fig:FreeSH}. This means that the creation of transmitted Stoneley waves contributes to a decrease of the transmitted energy in the low-frequency regime, but a decreasing trend for the transmission coefficient is observed also for the other cases, when considering angles which are far from normal incidence. This goes along the common feeling, according to which the more inclined the incident wave is with respect to the interface, the less transmission one can expect.
%one can expect lower transmission when the incident wave is very inclined with respect to the interface
The same behavior, even if qualitatively and quantitatively different, can be found in Figure \ref{fig:FixedSV}, in which an increase of transmission can be observed after $\theta_{\text{crit}}^{L,r}$ and a decrease after $\theta_{\text{crit}}^{L,t}$, also for the case of fixed microstructure boundary conditions.
\vspace{-0.3cm}
\begin{figure}[H]
	\begin{subfigure}{.5\textwidth}
		\centering
		\includegraphics[scale=0.45]{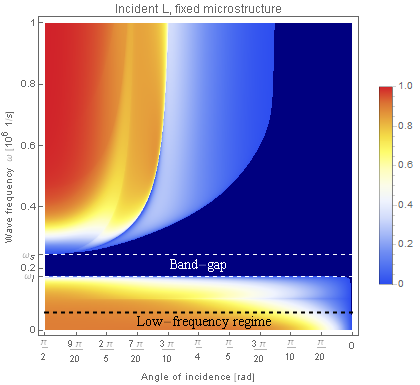}
		\caption{}
		\label{fig:FixedL}
	\end{subfigure}%
	\begin{subfigure}{.5\textwidth}
		\centering
		\includegraphics[scale=0.45]{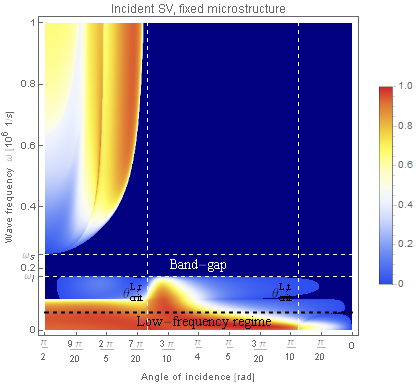}
		\caption{}
		\label{fig:FixedSV}
	\end{subfigure}\\
	\centering
	\begin{subfigure}{.5\textwidth}
		\centering
		\includegraphics[scale=0.45]{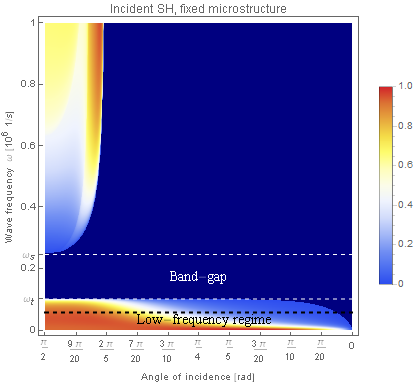}
		\caption{}
		\label{fig:FixedSH}
	\end{subfigure}
	\caption{\small Transmission coefficients as a function of the angle of incidence $\theta_i$ and of the wave-frequency $\omega$ for L (a), SV (b) and SH (c) incident waves for the case of macro-clamp with fixed microstructure. The origin coincides with normal incidence ($\theta_i=\pi/2$), while the angle of incidence decreases towards the right until it reaches the value $\theta_i=0$, which corresponds to the limit case where the incidence is parallel to the interface. The band-gap region is highlighted by two dashed horizontal lines, where, as expected, we observe no transmission. The low-frequency regime is highlighted by the bottom horizontal dashed line, while the critical angles for the onset of Stoneley waves are denoted by vertical dashed lines. The dark blue zone shows that no transmission takes place, while the gradual change from dark blue to red shows the increase of transmission, red being total transmission.}
	\label{fig:Tfixed}
\end{figure}
%\begin{figure}[H]
%	\begin{subfigure}{.32\textwidth}
%		\centering
%		\includegraphics[scale=0.42]{FixedLfinal.png}
%		\caption{}
%		\label{fig:FixedL}
%	\end{subfigure}%
%	\begin{subfigure}{.32\textwidth}
%		\centering
%		\includegraphics[scale=0.42]{FixedSVfinal.png}
%		\caption{}
%		\label{fig:FixedSV}
%	\end{subfigure}%
%	\begin{subfigure}{.32\textwidth}
%		\centering
%		\includegraphics[scale=0.42]{FixedSHfinal.png}
%		\caption{}
%		\label{fig:FixedSH}
%	\end{subfigure}
%	\caption{\small Transmission coefficients as a function of the angle of incidence $\theta_i$ and of the wave-frequency $\omega$ for L (a), SV (b) and SH (c) incident waves for the case of macro-clamp with fixed microstructure. The origin coincides with normal incidence ($\theta_i=\pi/2$), while the angle of incidence decreases towards the right until it reaches the value $\theta_i=0$, which corresponds to the limit case where the incidence is parallel to the interface. The band-gap region is highlighted by two dashed horizontal lines, where, as expected, we observe no transmission. The low-frequency regime is highlighted by the bottom horizontal dashed line, while the critical angles for the onset of Stoneley waves are denoted by vertical dashed lines. The dark blue zone shows that no transmission takes place, while the gradual change from dark blue to red shows the increase of transmission, red being total transmission.}
%	\label{fig:Tfixed}
%\end{figure}

Direct comparison of Figures \ref{fig:Tfree} and \ref{fig:Tfixed} allows us to identify the effect that the chosen type of boundary conditions has on the transmission properties of the interface. We already remarked that, at low frequencies, common trends can be identified which are related to critical angles determining the onset of Stoneley waves at the Cauchy/equivalent Cauchy interface. Nevertheless, some differences can also be remarked which are entirely related to the choice of boundary conditions. 

Surprisingly, the effect of boundary conditions intervenes already for low frequencies, meaning that the fact of imposing the value of $P$ at the interface introduces a tangible effect of the interface microstructured properties on the overall behavior of the considered system. In particular, we can notice that the fact of forcing $P=0$ at the interface globally reduces the low-frequency transmission for angles which are much closer to normal incidence, than for the case of free microstructure. This means that the fact of considering a microstructure which is not free to vibrate at the interface, allows for microstructure-related reflections, even if the frequency is relatively low. %Such additional reflections due to the microstructure take place for angles which go beyond the critical angles previously discussed.
%This means that the additional reduction of transmission with respect to the case of Figure \ref{fig:Tfree}, is completely due to the chosen boundary conditions and not due to the onset of Stoneley waves. 
Such additional reduction of transmission takes place for incident waves which are very inclined with respect to the surface ($\theta_i\leq \pi/4$).

Up to now, we only discussed the transmittive properties of the considered Cauchy/relaxed micromorphic interface on the low-frequency regime. Some of the features that we discussed on Stoneley waves 
%(see Fig. \ref{fig:StoneleyMetamaterial}) 
can be retrieved by observing Figures \ref{fig:k1LF}, \ref{fig:k1SVF} and \ref{fig:k1SHF} in which the plots of the imaginary part of the first component of the wave vector $k_1$ are given for each mode of the relaxed micromorphic medium, for L, SV and SH incident waves respectively. 

%\vspace{-1cm}
\begin{figure}[H]
	\begin{subfigure}{.5\textwidth}
		\centering
		\includegraphics[scale=0.35]{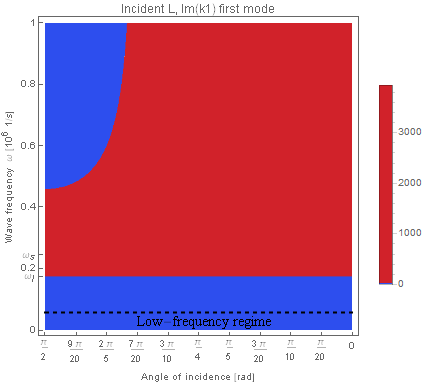}
		\caption{}
		\label{fig:k1L2F}
	\end{subfigure}%
	\begin{subfigure}{.5\textwidth}
		\centering
		\includegraphics[scale=0.35]{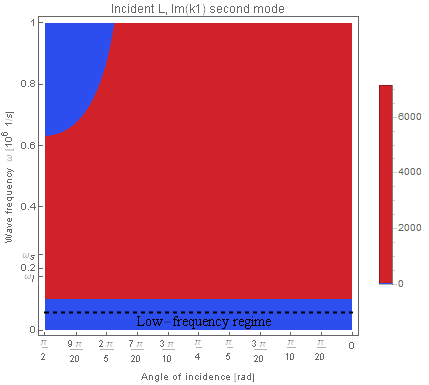}
		\caption{}
		\label{fig:k1L4F}
	\end{subfigure}\\
	\begin{subfigure}{.5\textwidth}
		\centering
		\includegraphics[scale=0.35]{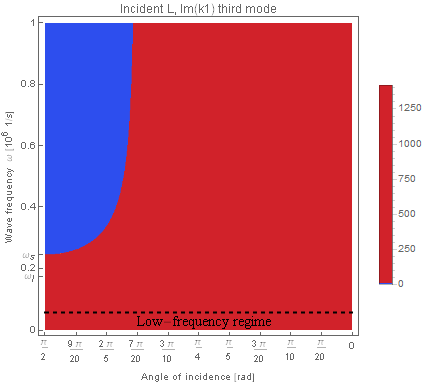}
		\caption{}
		\label{fig:k1L1F}
	\end{subfigure}%
	\begin{subfigure}{.5\textwidth}
		\centering
		\includegraphics[scale=0.35]{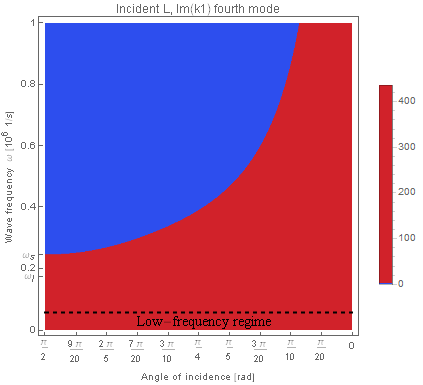}
		\caption{}
		\label{fig:k1L3F}
	\end{subfigure}
	\centering
	\begin{subfigure}{.5\textwidth}
		\centering
		\includegraphics[scale=0.35]{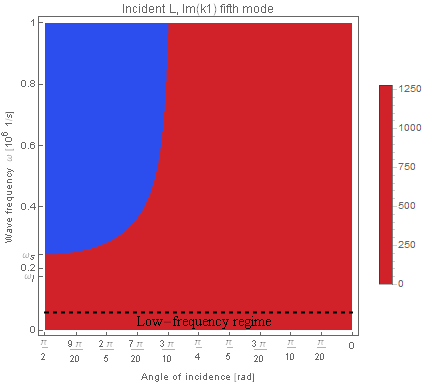}
		\caption{}
		\label{fig:k1L5F}
	\end{subfigure}
	\caption{\small Values of $\Im(k_1)$ as a function of the angle of incidence $\theta_i$ and of the wave-frequency $\omega$ for the five modes of the relaxed micromorphic medium and for the case of an incident L wave. The origin coincides with normal incidence ($\theta_i=\pi/2$), while the angle of incidence decreases towards the right until it reaches the value $\theta_i=0$, which corresponds to the limit case where the incidence is parallel to the interface. The first two modes (a) and (b) correspond to the L and SV modes for the equivalent Cauchy continuum at low frequencies. The red color in these plots means that the mode is Stoneley and does not propagate, while blue means that the mode is propagative.}
	\label{fig:k1LF}
\end{figure}

The blue region denotes $\Im(k_1)=0$ (which implies that $k_1$ is real), while $\Im(k_1)$ is not vanishing in the red regions. In other words, we can say that for each mode, the red color means that there are Stoneley waves associated to that mode. The first two modes in Figures \ref{fig:k1LF} and \ref{fig:k1SVF} correspond to L and SV Cauchy-like modes, while the first mode in Figure \ref{fig:k1SHF} is the SH Cauchy-like mode in the low-frequency regime. Since we are considering a relaxed micromorphic medium, three additional modes with respect to the Cauchy case are present both for the in-plane (Figures \ref{fig:k1LF} and \ref{fig:k1SVF}) and for the out-of-plane problem (Fig. \ref{fig:k1SHF}). For the Cauchy-like modes we can observe that at low frequencies they are always propagative, except in the case of an incident SV wave, for which Stoneley longitudinal waves appear beyond $\theta_{\text{crit}}^{L,t}$ (also Stoneley reflected waves can be observed in this case, but we do not present the plots of $\Im(k_1)$ for reflected waves to avoid overburdening).

We can note by inspecting Figures \ref{fig:k1LF}, \ref{fig:k1SVF} and (Fig. \ref{fig:k1SHF} that the presence of Stoneley waves at high frequencies is much more widespread than at low frequencies for all $5$ (resp. $4$) modes.

\begin{figure}[H]
	\begin{subfigure}{.5\textwidth}
		\centering
		\includegraphics[scale=0.35]{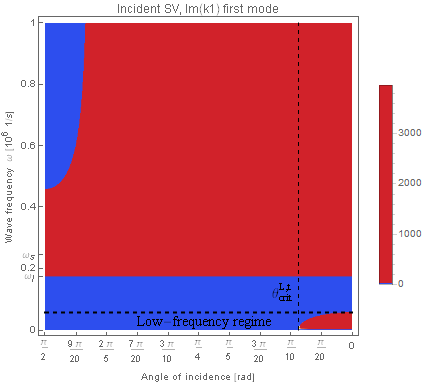}
		\caption{}
		\label{fig:k1S2F}
	\end{subfigure}%
	\begin{subfigure}{.5\textwidth}
		\centering
		\includegraphics[scale=0.35]{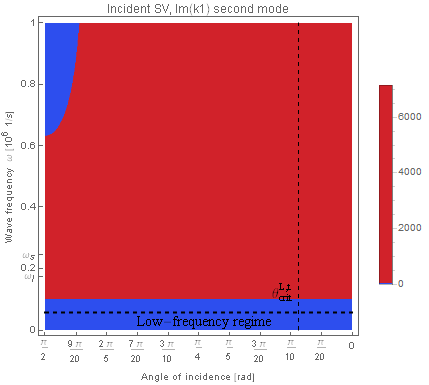}
		\caption{}
		\label{fig:k1S4F}
	\end{subfigure}\\
	\begin{subfigure}{.5\textwidth}
		\centering
		\includegraphics[scale=0.35]{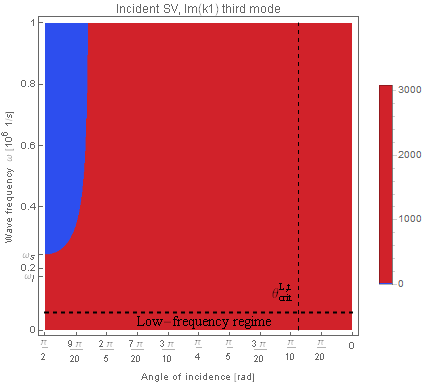}
		\caption{}
		\label{fig:k1S1F}
	\end{subfigure}%
	\begin{subfigure}{.5\textwidth}
		\centering
		\includegraphics[scale=0.35]{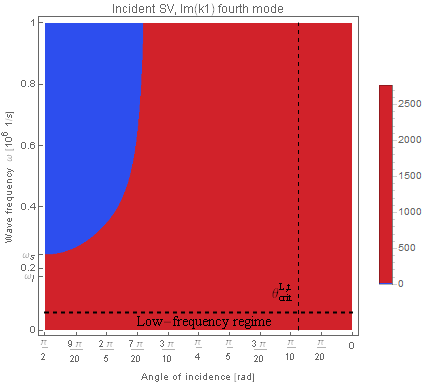}
		\caption{}
		\label{fig:k1S3F}
	\end{subfigure}
	\centering
	\begin{subfigure}{.5\textwidth}
		\centering
		\includegraphics[scale=0.4]{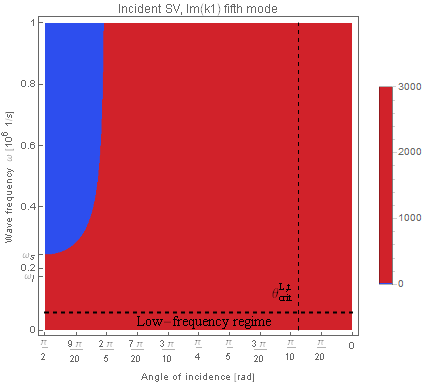}
		\caption{}
		\label{fig:k1S5F}
	\end{subfigure}
	\caption{\small Values of $\Im(k_1)$ as a function of the angle of incidence $\theta_i$ and of the wave-frequency $\omega$ for the five modes of the relaxed micromorphic medium and for the case of an incident SV wave. The origin coincides with normal incidence ($\theta_i=\pi/2$), while the angle of incidence decreases towards the right until it reaches the value $\theta_i=0$, which corresponds to the limit case where the incidence is parallel to the interface. The first two modes (a) and (b) correspond to the L and SV modes for the equivalent Cauchy continuum at low frequencies. The red color in these plots means that the mode is Stoneley and does not propagate, while blue means that the mode is propagative.}
	\label{fig:k1SVF}
\end{figure}

\begin{figure}[H]
	\centering
	\begin{subfigure}{.5\textwidth}
		\centering
		\includegraphics[scale=0.35]{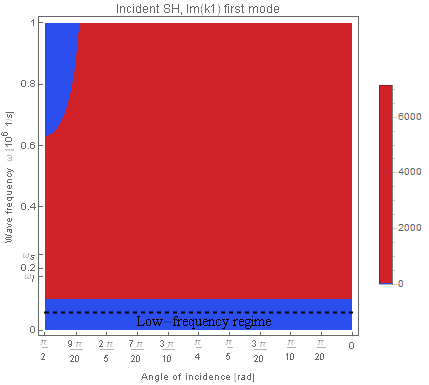}
		\caption{}
		\label{fig:k1SH3F}
	\end{subfigure}%
	\begin{subfigure}{.5\textwidth}
		\centering
		\includegraphics[scale=0.35]{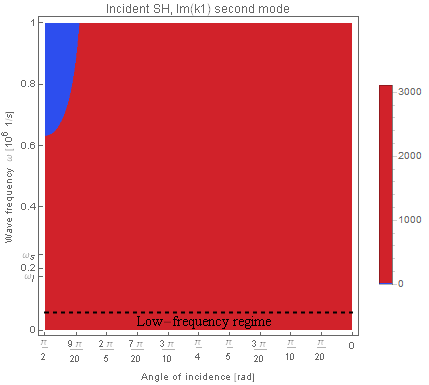}
		\caption{}
		\label{fig:k1SH1F}
	\end{subfigure}\\
	\begin{subfigure}{.5\textwidth}
		\centering
		\includegraphics[scale=0.35]{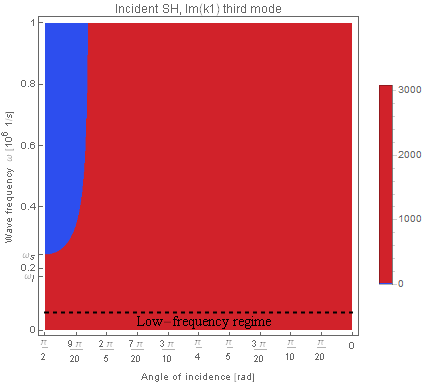}
		\caption{}
		\label{fig:k1SH2F}
	\end{subfigure}%
	\begin{subfigure}{.5\textwidth}
		\centering
		\includegraphics[scale=0.35]{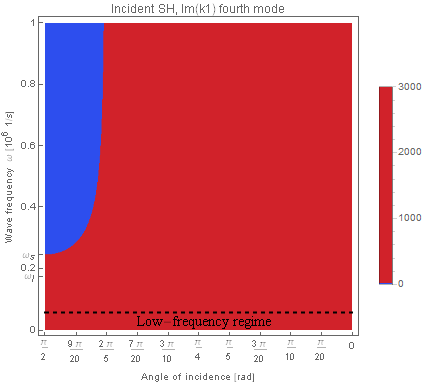}
		\caption{}
		\label{fig:k1SH4F}
	\end{subfigure}
	\caption{\small Values of $\Im(k_1)$ as a function of the angle of incidence $\theta_i$ and of the wave-frequency $\omega$ for the four modes of the relaxed micromorphic medium and for the case of an incident SH wave. The origin coincides with normal incidence ($\theta_i=\pi/2$), while the angle of incidence decreases towards the right until it reaches the value $\theta_i=0$, which corresponds to the limit case where the incidence is parallel to the interface. The first mode (a) corresponds to the SH mode for the equivalent Cauchy continuum at low frequencies. The red color in these plots means that the mode is Stoneley and does not propagate, while blue means that the mode is propagative.}
	\label{fig:k1SHF}
\end{figure}

We can observe by direct observation of Figures \ref{fig:k1LF}, \ref{fig:k1SVF} and \ref{fig:k1SHF} that high-frequency critical angles exist for each mode corresponding to which a transition from Stoneley to propagative waves takes place. The value of such critical angles depends on the frequency for the medium-frequency regime and become constant for higher frequencies.
The influence of the existence of such high-frequency critical angles can be directly observed on the patterns of the transmission coefficient in Figures \ref{fig:Tfree} and \ref{fig:Tfixed}, in which high frequency transmission is observed for angles closer to normal incidence and no transmission is reported for smaller angles due to the simultaneous presence of Stoneley waves for all modes. We can call such zones in which transmission is equal to one ``extraordinary transmission regions'' (see e.g. \cite{misseroni2016cymatics}). Such extraordinary transmission can be used as a basis for the conception of innovative systems such as selective cloaking and non-destructive evaluation.

We can finally remark that the influence of the choice of boundary conditions on the high-frequency behavior of the transmission coefficient is still present, but do not determine drastic changes on the transmission patterns (see Figures \ref{fig:Tfree} and \ref{fig:Tfixed}). 

\subsection{Cauchy medium which is ``softer'' than the relaxed micromorphic one}
In this section we present the reflective properties of a Cauchy/relaxed micromorphic interface for which we consider that the Cauchy medium on the left is ``softer'' than the relaxed micromorphic medium on the right in the same sense as in the previous section. To that end, we choose the material parameters of the left Cauchy medium to be those presented in the following Table and  we explicitly remark that these values are smaller than those of Table \ref{table:macroparameters}.

\begin{table}[ht]
	\centering
	\begin{tabular}{c c c}
		$\rho$ [$\mathrm{kg/m^3}$] & $\lambda$ [$\mathrm{Pa}$] & $\mu$ [$\mathrm{Pa}$] \\
		\hline
	$2000$ & $2\times 10^7$ & $0.7\times 10^7$
	\end{tabular}
	\caption{\small Lam\'e parameters of the ``softer'' Cauchy medium on the left side of the considered Cauchy/relaxed micromorphic interface.}
	\label{table:CauchyparametersSofter}
\end{table}
With these new parameters we can compute again, following Tables \ref{table:StoneleyT} and \ref{table:StoneleyR}, the critical angles for the appearance of Stoneley waves at low frequencies. We present these values in Table \ref{table:criticalangles2}:

\begin{table}[H]
	\centering
	\begin{tabular}{c c c c c}
		Incident wave & $\theta_{\text{crit}}^{L,r}$  & $\theta_{\text{crit}}^{L,t}$& $\theta_{\text{crit}}^{SV,t}$& $\theta_{\text{crit}}^{SH,t}$ \\
		\hline
		L& $-$ & $\frac{37 \pi}{100}$ & $\frac{49 \pi}{200}$ & $-$ \\
		\hline 
		SV & $\frac{7 \pi}{20}$ & $\frac{11\pi}{25}$ &$\frac{39 \pi}{100}$  & $-$ \\
		\hline 
		SH	& $-$ & $-$ & $-$ & $\frac{39 \pi}{100}$ \\
	\end{tabular}
	\caption{\small Critical angles governing the onset of Stoneley waves at a Cauchy/equivalent Cauchy interface between the two Cauchy media given in Tables \ref{table:CauchyparametersSofter} and \ref{table:macroparameters}, respectively. These values are computed according to the formulas given in Tables \ref{table:StoneleyT} and \ref{table:StoneleyR}. The superscripts $r$ and $t$ stand for ``reflected'' and ``transmitted'', respectively.} 
	\label{table:criticalangles2}
\end{table}
%\begin{table}[H]
%	\centering
%	\begin{tabular}{c c c c c}
%		Incident wave & $\theta_{\text{crit}}^{L,r}$  & $\theta_{\text{crit}}^{L,t}$& $\theta_{\text{crit}}^{SV,t}$& $\theta_{\text{crit}}^{SH,t}$ \\
%		\hline
%		L& $-$ & $66.6^{\circ}$ & $44.1^{\circ}$ & $-$ \\
%		\hline 
%		SV & $63^{\circ}$ & $79.2^{\circ}$ &$70.2^{\circ}$  & $-$ \\
%		\hline 
%		SH	& $-$ & $-$ & $-$ & $70.2^{\circ}$ \\
%	\end{tabular}
%	\caption{Critical angles governing the onset of Rayleigh waves at a Cauchy/equivalent Cauchy interface between the two Cauchy media given in Tables \ref{table:CauchyparametersSofter} and \ref{table:macroparameters}, respectively. The superscripts $r$ and $t$ stand for ``reflected'' and ``transmitted'', respectively. All angles are in degrees.} 
%	\label{table:criticalangles2}
%\end{table}

Figures \ref{fig:Tfreeg} and \ref{fig:Tfixedg} show the transmission coefficient for the softer Cauchy/relaxed micromorphic interface as a function of the angle of incidence and of frequency for both boundary conditions. The coloring of this plot is again such that the dark blue regions mean zero transmission, while the gradual change towards red is the increase in transmission (red being total transmission).

Table \ref{table:criticalangles2} shows that when the incident wave travels in a soft medium and hits the interface separating this medium from a stiffer one, many critical angles exist which determine the onset of Stoneley waves for all types of incident wave at low frequencies. Since many more Stoneley waves are created with respect to the previous case of section \ref{sec:ResultsStiffer}, we would expect less transmission in the low-frequency regime than before. This is indeed the case if we inspect Figures \ref{fig:Tfreeg} and \ref{fig:Tfixedg}: the presence of low-frequency Stoneley waves induces a wides zero-transmission zone in the low-frequency regime. We can also detect a certain role of boundary conditions in widening these zero-transmission zones when considering the fixed microstructure boundary condition (see Figure \ref{fig:Tfixedg}). 

Figures \ref{fig:k1LgF}, \ref{fig:k1SVgF} and \ref{fig:k1SHgF} once again show the imaginary part of the first component of the wave-vector $k_1$ for each mode of the relaxed micromorphic medium on the right. We see that Stoneley waves are observed almost everywhere both at low and high frequencies, with the exception of angles which are very close to normal incidence. Once again, the blue region denotes $\Im(k_1)=0$ (which implies that $k_1$ is real), while $\Im(k_1)$ is not vanishing in the red regions, which means that for each mode, the red color denotes that there are Stoneley waves associated to that mode. The first two modes in Figures \ref{fig:k1LgF} and \ref{fig:k1SVgF} correspond to L and SV Cauchy-like modes, while the first mode in Figure \ref{fig:k1SHgF} is the SH Cauchy-like mode in the low-frequency regime. Since we are considering a relaxed micromorphic medium, three additional modes with respect to the Cauchy case are present both for the in-plane (Figures \ref{fig:k1LgF} and \ref{fig:k1SVgF}) and for the out-of-plane problem. For this choice of parameters which make the left-side medium ``softer'' than the corresponding Cauchy medium on the right, we see that the Cauchy-like modes for all incident waves become Stoneley after a critical angle (clearly denoted on the plots with a vertical dashed line), something which is in accordance with Table \ref{table:criticalangles2}. Also Stoneley reflected waves can be observed in this case, but we do not present the plots of $\Im(k_1)$ for reflected waves to avoid overburdening.

We can conclude that it is possible to create an almost perfect total screen, which completely reflects incident waves for almost all angles of incidence and wave frequencies. This extraordinary possibility can be obtained by simply tailoring the properties of the left Cauchy medium which has to be chosen to be suitably softer than the right equivalent Cauchy medium. Regions of extraordinary transmission for very wade ranges of incident angles and wave frequencies can be engineered, opening the door to exciting applications.

\begin{figure}[H]
	\begin{subfigure}{.5\textwidth}
		\centering
		\includegraphics[scale=0.45]{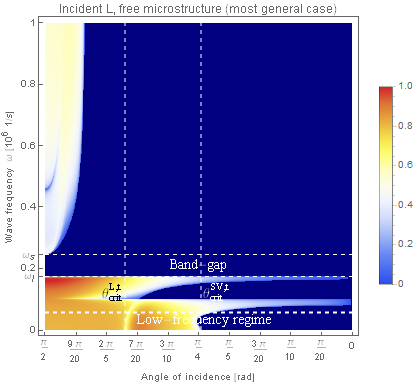}
		\caption{}
		\label{fig:FreeLg}
	\end{subfigure}%
	\begin{subfigure}{.5\textwidth}
		\centering
		\includegraphics[scale=0.45]{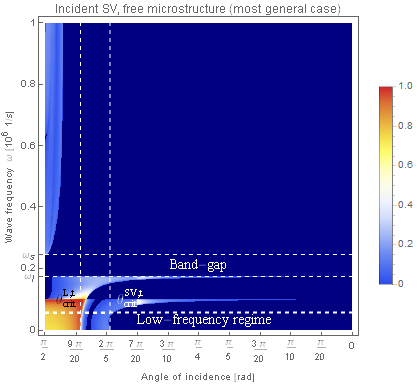}
		\caption{}
		\label{fig:FreeSVg}
	\end{subfigure}\\
	\centering
	\begin{subfigure}{.5\textwidth}
		\centering
		\includegraphics[scale=0.45]{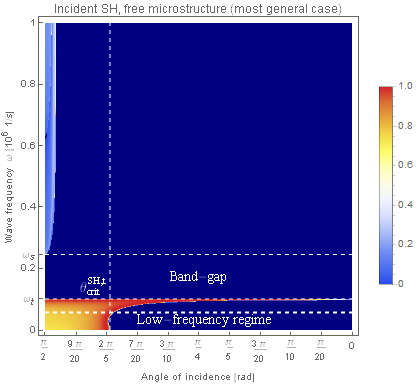}
		\caption{}
		\label{fig:FreeSHg}
	\end{subfigure}
	\caption{\small Transmission coefficients as a function of the angle of incidence $\theta_i$ and of the wave-frequency $\omega$ for L (a), SV (b) and SH (c) incident waves for the case of macro-clamp with free microstructure  and for a ``softer'' Cauchy medium on the left. The origin coincides with normal incidence ($\theta_i=\pi/2$), while the angle of incidence decreases towards the right until it reaches the value $\theta_i=0$, which corresponds to the limit case where the incidence is parallel to the interface. The band-gap region is highlighted by two dashed horizontal lines, where, as expected, we observe no transmission. The low-frequency regime is highlighted by the bottom horizontal dashed line, while the critical angles for the onset of Stoneley waves are denoted by vertical dashed lines. The dark blue zone shows that no transmission takes place, while the gradual change from dark blue to red shows the increase of transmission, red being total transmission.}
	\label{fig:Tfreeg}
\end{figure}

\begin{figure}[H]
	\begin{subfigure}{.5\textwidth}
		\centering
		\includegraphics[scale=0.45]{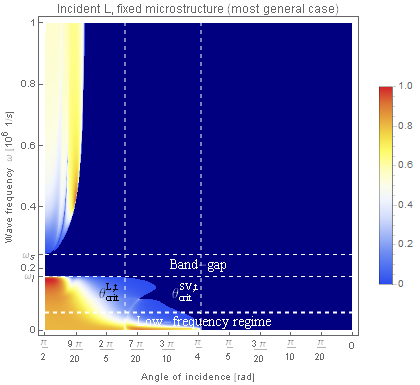}
		\caption{}
		\label{fig:FixedLg}
	\end{subfigure}%
	\begin{subfigure}{.5\textwidth}
		\centering
		\includegraphics[scale=0.45]{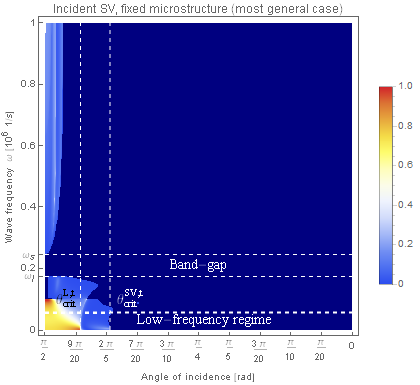}
		\caption{}
		\label{fig:FixedSVg}
	\end{subfigure}\\
	\centering
	\begin{subfigure}{.5\textwidth}
		\centering
		\includegraphics[scale=0.45]{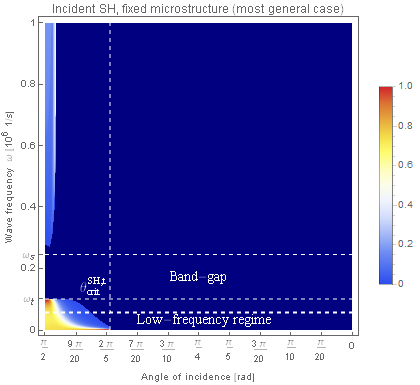}
		\caption{}
		\label{fig:FixedSHg}
	\end{subfigure}
	\caption{\small Transmission coefficients as a function of the angle of incidence $\theta_i$ and of the wave-frequency $\omega$  for L (a), SV (b) and SH (c) incident waves for the case of macro-clamp with fixed microstructure and for a ``softer'' Cauchy medium on the left. The origin coincides with normal incidence ($\theta_i=\pi/2$), while the angle of incidence decreases towards the right until it reaches the value $\theta_i=0$, which corresponds to the limit case where the incidence is parallel to the interface. The band-gap region is highlighted by two dashed horizontal lines, where, as expected, we observe no transmission. The low-frequency regime is highlighted by the bottom horizontal dashed line, while the critical angles for the onset of Stoneley waves are denoted by vertical dashed lines. The dark blue zone shows that no transmission takes place, while the gradual change from dark blue to red shows the increase of transmission, red being total transmission.}
	\label{fig:Tfixedg}
\end{figure}

%\end{figure}

\begin{figure}[H]
	\begin{subfigure}{.5\textwidth}
		\centering
		\includegraphics[scale=0.35]{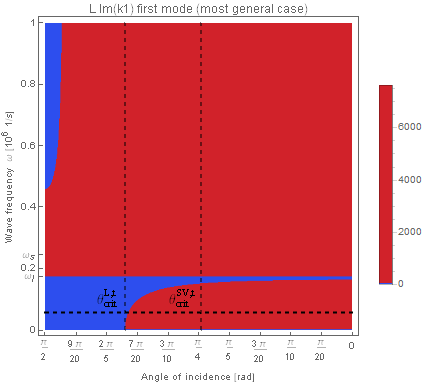}
		\caption{}
		\label{fig:k1L2gF}
	\end{subfigure}%
	\begin{subfigure}{.5\textwidth}
		\centering
		\includegraphics[scale=0.35]{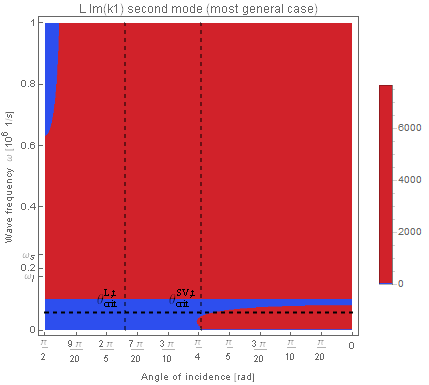}
		\caption{}
		\label{fig:k1L4gF}
	\end{subfigure}\\
	\begin{subfigure}{.5\textwidth}
		\centering
		\includegraphics[scale=0.35]{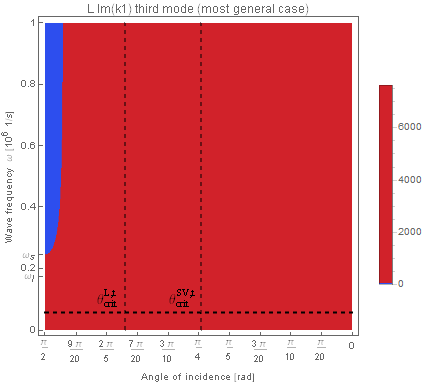}
		\caption{}
		\label{fig:k1L1g1F}
	\end{subfigure}%
	\begin{subfigure}{.5\textwidth}
		\centering
		\includegraphics[scale=0.35]{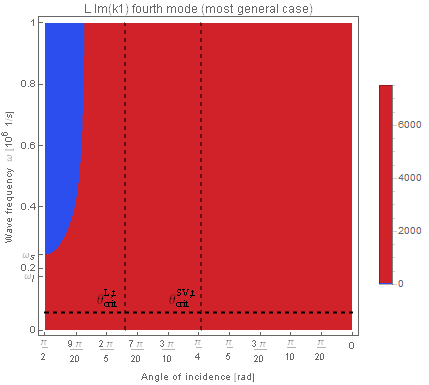}
		\caption{}
		\label{fig:k1L3gF}
	\end{subfigure}
	\centering
	\begin{subfigure}{.5\textwidth}
		\centering
		\includegraphics[scale=0.35]{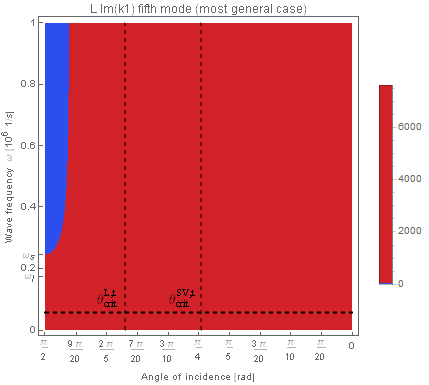}
		\caption{}
		\label{fig:k1L5gF}
	\end{subfigure}
	\caption{\small Values of $\Im(k_1)$ as a function of the angle of incidence $\theta_i$ and of the wave-frequency $\omega$ for the five modes of the relaxed micromorphic medium for the case of an incident L wave and a ``softer'' Cauchy medium on the left. The origin coincides with normal incidence ($\theta_i=\pi/2$), while the angle of incidence decreases towards the right until it reaches the value $\theta_i=0$, which corresponds to the limit case where the incidence isparallel to the interface. The first two modes (a) and (b) correspond to the L and SV modes for the equivalent Cauchy continuum at low frequencies. The red color in these plots means that the mode is Stoneley and does not propagate, while blue means that the mode is propagative.}
	\label{fig:k1LgF}
\end{figure}

\begin{figure}[H]
	\begin{subfigure}{.5\textwidth}
		\centering
		\includegraphics[scale=0.35]{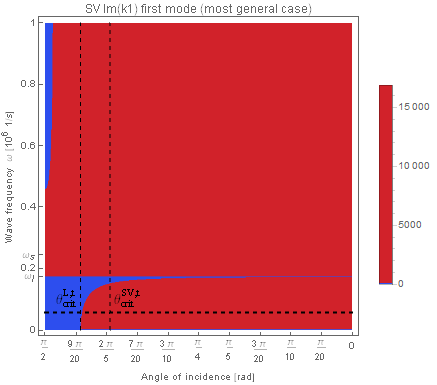}
		\caption{}
		\label{fig:k1SV2gF}
	\end{subfigure}%
	\begin{subfigure}{.5\textwidth}
		\centering
		\includegraphics[scale=0.35]{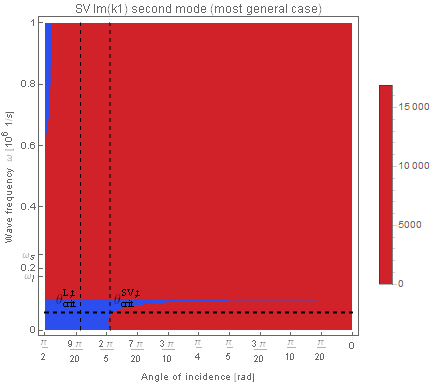}
		\caption{}
		\label{fig:k1SV4gF}
	\end{subfigure}\\
	\begin{subfigure}{.5\textwidth}
		\centering
		\includegraphics[scale=0.35]{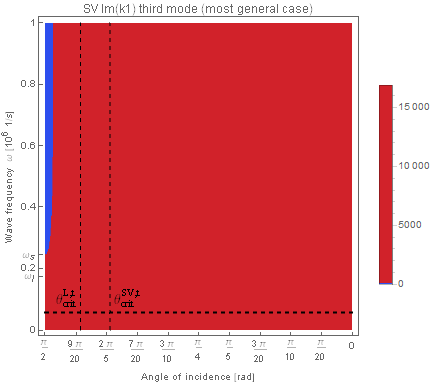}
		\caption{}
		\label{fig:k1SV1g1F}
	\end{subfigure}%
	\begin{subfigure}{.5\textwidth}
		\centering
		\includegraphics[scale=0.35]{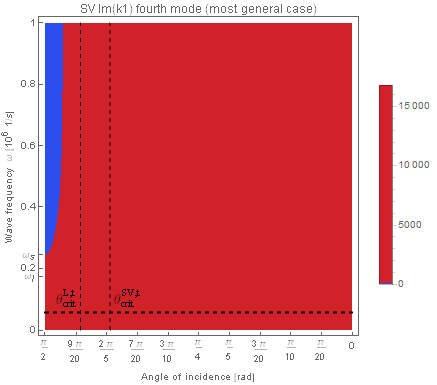}
		\caption{}
		\label{fig:k1SV3gF}
	\end{subfigure}
	\centering
	\begin{subfigure}{.5\textwidth}
		\centering
		\includegraphics[scale=0.35]{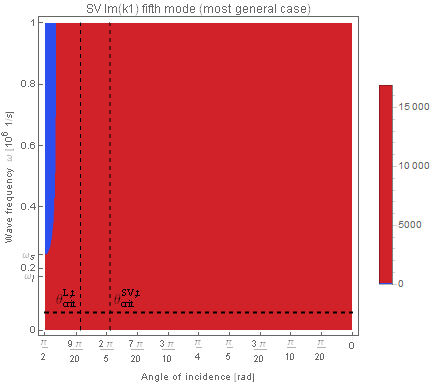}
		\caption{}
		\label{fig:k1SV5gF}
	\end{subfigure}
	\caption{\small Values of $\Im(k_1)$ as a function of the angle of incidence $\theta_i$ and of the wave-frequency $\omega$ for the five modes of the relaxed micromorphic medium for the case of an incident SV wave and a ``softer'' Cauchy medium on the left. The origin coincides with normal incidence ($\theta_i=\pi/2$), while the angle of incidence decreases towards the right until it reaches the value $\theta_i=0$, which corresponds to the limit case where the incidence is parallel to the interface. The first two modes (a) and (b) correspond to the L and SV modes for the equivalent Cauchy continuum at low frequencies. The red color in these plots means that the mode is Stoneley and does not propagate, while blue means that the mode is propagative.}
	\label{fig:k1SVgF}
\end{figure}

\begin{figure}[H]
	\centering
	\begin{subfigure}{.5\textwidth}
		\centering
		\includegraphics[scale=0.35]{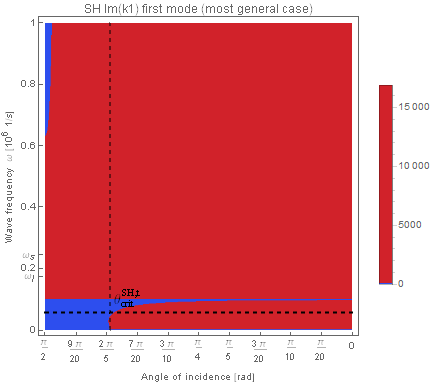}
		\caption{}
		\label{fig:k1SH3gF}
	\end{subfigure}%
	\begin{subfigure}{.5\textwidth}
		\centering
		\includegraphics[scale=0.35]{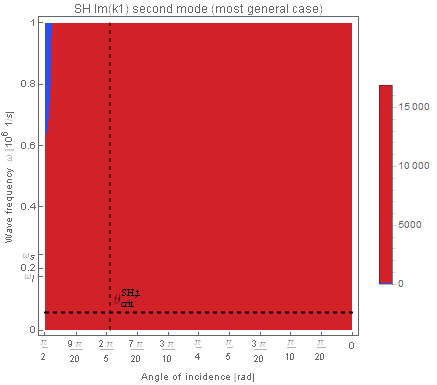}
		\caption{}
		\label{fig:k1SH1gF}
	\end{subfigure}\\
	\begin{subfigure}{.5\textwidth}
		\centering
		\includegraphics[scale=0.35]{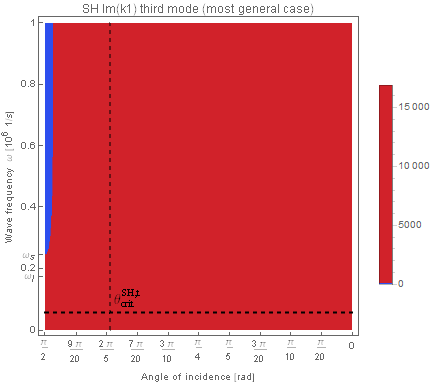}
		\caption{}
		\label{fig:k1SH2gF}
	\end{subfigure}%
	\begin{subfigure}{.5\textwidth}
		\centering
		\includegraphics[scale=0.35]{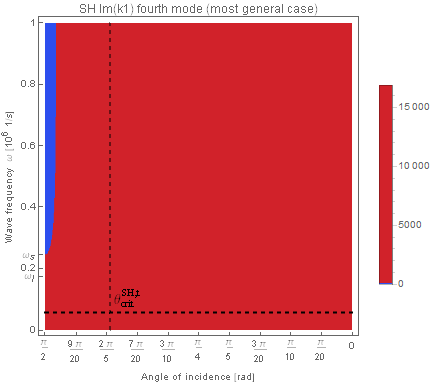}
		\caption{}
		\label{fig:k1SH4gF}
	\end{subfigure}
		\caption{\small Values of $\Im(k_1)$ as a function of the angle of incidence $\theta_i$ and of the wave-frequency $\omega$ for the four modes of the relaxed micromorphic medium for the case of an incident SH wave and a ``softer'' Cauchy medium on the left. The origin coincides with normal incidence ($\theta_i=\pi/2$), while the angle of incidence decreases towards the right until it reaches the value $\theta_i=0$, which corresponds to the limit case where the incidence is parallel to the interface. The first mode (a) corresponds to the SH mode for the equivalent Cauchy continuum at low frequencies. The red color in these plots means that the mode is Stoneley and does not propagate, while blue means that the mode is propagative.}
	\label{fig:k1SHgF}
\end{figure}

\section{Conclusions}
In this paper we present the detailed study of the reflective and refractive properties of a two-dimensional interface separating a classical Cauchy medium from a relaxed micromorphic medium. Both media are assumed to be semi-infinite.

We show in great detail that critical angles of incidence exist, beyond which classical Stoneley waves appear at low frequencies. It is shown that these critical angles directly depend on the relative mechanical properties of the two media. Moreover, we unveil the existence of critical angles which give rise to Stoneley waves at higher frequencies. These Stoneley waves are clearly related to the presence of an underlying microstructure in the metamaterial. 

We show that, due to the onset of low and high-frequency Stoneley waves, wide frequency bounds where total reflection and/or total transmission occur can be engineered. This total reflection/transmission phenomenon is appealing for applications, in which total screens for elastic waves, such as cloaks or wave-filters, are desirable. It is clear that the ability of widening the frequency and incident angle intervals for which total reflection/transmission occur, would be of paramount importance for conceiving new devices which are more and more performant for wavefront manipulation.

We also clearly show that the simple fact of suitably tailoring the relative stiffnesses of the two media allows for the possibility of conceiving almost perfect total screens which do not transmit elastic waves for any kind of incident wave (longitudinal, in-plane and out-of-plane shear) and for almost all (low and high) frequencies and angles of incidence. Acting on such relative stiffnesses allows to achieve also the opposite situation, where total transmission occurs for large frequency bounds before a microstructure-related critical angle. This could be exploited for the conception of selective cloaks which make objects transparent to waves dependently on the angle of incidence. 

Further work will be devoted to extending the present results to the case of anisotropic media and to considering refection/transmission properties at interfaces between finite media. 

\newpage
%\newpage
%\footnotesize
%\let\stdsection\section*
%\def\section*#1{\stdsection{#1}}
%\bibliography{library}
%\bibliographystyle{plain}
%
%\let\section\stdsection

{\footnotesize{}\bibliographystyle{plain}
	\bibliography{library}

\begin{thebibliography}{10}

\bibitem{achenbach1973wave}
Jan~D. Achenbach.
\newblock {\em {Wave Propagation in Elastic Solids}}.
\newblock North-Holland Publishing Company, Amsterdam, The Netherlands, 1973.

\bibitem{auld1973acoustic2}
Bertram~A. Auld.
\newblock {\em {Acoustic Fields and Waves in Solids, Vol. II}}.
\newblock Wiley-Interscience Publication, 1973.

\bibitem{barbagallo2017transparent}
Gabriele Barbagallo, Angela Madeo, Marco~Valerio {d}'Agostino, Rafael Abreu,
  Ionel-Dumitrel Ghiba, and Patrizio Neff.
\newblock {Transparent anisotropy for the relaxed micromorphic model:
  macroscopic consistency conditions and long wave length asymptotics}.
\newblock {\em International Journal of Solids and Structures}, 120:7--30,
  2017.

\bibitem{bueckmann2015mechanical}
Tiemo B\"uckmann, Muamer Kadic, Robert Schittny, and Martin Wegener.
\newblock {Mechanical cloak design by direct lattice transformation}.
\newblock {\em Proceedings of the National Academy of Sciences},
  112(6):4930--4934, 2015.

\bibitem{chen2007acoustic}
Huanyang Chen and C.~T. Chan.
\newblock {Acoustic cloaking in three dimensions using acoustic metamaterials
  }.
\newblock {\em Applied Physics Letters}, 91, 2007.

\bibitem{craster2010high}
Richard~V. Craster, Julius Kaplunov, and Aleksey~V. Pichugin.
\newblock {High-frequency homogenization for periodic media}.
\newblock {\em Proceedings of the Royal Society A: Mathematical, Physical and
  Engineering Sciences}, 466(2120):2341--2362, 2010.

\bibitem{cummer2016controlling}
Steven~A. Cummer, Johan Christensen, and Andrea Al\'u.
\newblock {Controlling sound with acoustic metamaterials}.
\newblock {\em Nature Reviews Materials}, 113(3), 2014.

\bibitem{dagostino2018effective}
Marco~Valerio {d}'Agostino, Gabriele Barbagallo, Ionel-Dumitrel Ghiba, Bernhard
  Eidel, Patrizio Neff, and Angela Madeo.
\newblock {Effective description of anisotropic wave dispersion in mechanical
  band-gap metamaterials via the relaxed micromorphic model}.
\newblock {\em submitted, arXiv preprint}, 1709.07054, 2018.

\bibitem{dagostino2017panorama}
Marco~Valerio {d}'Agostino, Gabriele Barbagallo, Ionel-Dumitrel Ghiba, Angela
  Madeo, and Patrizio Neff.
\newblock {A panorama of dispersion curves for the weighted isotropic relaxed
  micromorphic model}.
\newblock {\em Zeitschrift f{\"{u}}r Angewandte Mathematik und Mechanik}, pages
  1--46, 2017.

\bibitem{ghiba2014relaxed}
Ionel-Dumitrel Ghiba, Patrizio Neff, Angela Madeo, Luca Placidi, and Giuseppe
  Rosi.
\newblock {The relaxed linear micromorphic continuum: Existence, uniqueness and
  continuous dependence in dynamics}.
\newblock {\em Mathematics and Mechanics of Solids}, 20(10):1171--1197, 2015.

\bibitem{graff1975wave}
Karl~F. Graff.
\newblock {\em {Wave Motion in Elastic Solids}}.
\newblock Dover Publications, Inc., New York, 1975.

\bibitem{guenneau2007acoustic}
S\'ebastien Guenneau, Alexander Movchan, Gunnar P\'etursson, and S.~Anantha
  Ramakrishna.
\newblock {Acoustic metamaterials for sound focusing and confinement}.
\newblock {\em New Journal of Physics}, 9(399), 2007.

\bibitem{kaina2014slow}
Nad\'ege Kaina, Alexandre Causier, Yoan Bourlier, Mathias Fink, Thomas
  Berthelot, and Geoffroy Lerosey.
\newblock {Slow waves in locally resonant metamaterials line defect
  waveguides}.
\newblock {\em Scientific Reports}, 7(1), 2014.

\bibitem{li2014acoustic}
Yong Li and Badreddine~M. Assouar.
\newblock {Acoustic metasurface-based perfect absorber with deep subwavelength
  thickness }.
\newblock {\em Applied Physics Letters}, 108(6), 2016.

\bibitem{liang2018wavefront}
Ben Lieang, Jian-chun Chang, and Cheng-Wei Qiu.
\newblock {Wavefront manipulation by acoustic metasurfaces: from physics and
  applications}.
\newblock {\em Nanophotonics}, 7(6):1191--1205, 2018.

\bibitem{lions1978asymptotic}
J.-L. Lions and G.~Papanicolaou.
\newblock {\em {Asymptotic Analysis for Periodic Structures}}.
\newblock AMS Chelsea Publishing, 1978.

\bibitem{liu2000locally}
Zhengyou Liu, Xixiang Zhang, Yiwei Mao, Yirong Zhu, Zhiyu Yang, Che~Ting Chan,
  and Ping Sheng.
\newblock {Locally resonant sonic materials}.
\newblock {\em Science}, 289(5485):1734--1736, 2000.

\bibitem{madeo2017relaxed}
Angela Madeo, Gabriele Barbagallo, Manuel Collet, Marco~Valerio {d}'Agostino,
  Marco Miniaci, and Patrizio Neff.
\newblock {Relaxed micromorphic modeling of the interface between a homogeneous
  solid and a band-gap metamaterial: New perspectives towards metastructural
  design}.
\newblock {\em Mathematics and Mechanics of Solids}, pages 1--22, 2017.

\bibitem{madeo2016first}
Angela Madeo, Gabriele Barbagallo, Marco~Valerio {d}'Agostino, Luca Placidi,
  and Patrizio Neff.
\newblock {First evidence of non-locality in real band-gap metamaterials:
  determining parameters in the relaxed micromorphic model}.
\newblock {\em Proceedings of the Royal Society A: Mathematical, Physical and
  Engineering Sciences}, 472(2190):20160169, 2016.

\bibitem{madeo2017modeling}
Angela Madeo, Manuel Collet, Marco Miniaci, K{\'{e}}vin Billon, Morvan Ouisse,
  and Patrizio Neff.
\newblock {Modeling phononic crystals via the weighted relaxed micromorphic
  model with free and gradient micro-inertia}.
\newblock {\em Journal of Elasticity}, 130(1):59--83, 2017.

\bibitem{madeo2017role}
Angela Madeo, Patrizio Neff, Elias~C. Aifantis, Gabriele Barbagallo, and
  Marco~Valerio {d}'Agostino.
\newblock {On the role of micro-inertia in enriched continuum mechanics}.
\newblock {\em Proceedings of the Royal Society A: Mathematical, Physical and
  Engineering Science}, 473(2198):20160722, 2017.

\bibitem{madeo2017review}
Angela Madeo, Patrizio Neff, Gabriele Barbagallo, Marco~Valerio {d}'Agostino,
  and Ionel-Dumitrel Ghiba.
\newblock {A review on wave propagation modeling in band-gap metamaterials via
  enriched continuum models}.
\newblock In Francesco {d}ell'Isola, Mircea Sofonea, and David~J. Steigmann,
  editors, {\em Mathematical Modelling in Solid Mechanics}, Advanced Structured
  Materials, pages 89--105. Springer, Singapore, 2017.

\bibitem{madeo2016complete}
Angela Madeo, Patrizio Neff, Marco~Valerio {d}'Agostino, and Gabriele
  Barbagallo.
\newblock {Complete band gaps including non-local effects occur only in the
  relaxed micromorphic model}.
\newblock {\em Comptes Rendus M{\'{e}}canique}, 344(11-12):784--796, 2016.

\bibitem{madeo2014band}
Angela Madeo, Patrizio Neff, Ionel-Dumitrel Ghiba, Luca Placidi, and Giuseppe
  Rosi.
\newblock {Band gaps in the relaxed linear micromorphic continuum}.
\newblock {\em Zeitschrift f{\"{u}}r Angewandte Mathematik und Mechanik},
  95(9):880--887, 2014.

\bibitem{madeo2015wave}
Angela Madeo, Patrizio Neff, Ionel-Dumitrel Ghiba, Luca Placidi, and Giuseppe
  Rosi.
\newblock {Wave propagation in relaxed micromorphic continua: modeling
  metamaterials with frequency band-gaps}.
\newblock {\em Continuum Mechanics and Thermodynamics}, 27(4-5):551--570, 2015.

\bibitem{madeo2016reflection}
Angela Madeo, Patrizio Neff, Ionel-Dumitrel Ghiba, and Giuseppe Rosi.
\newblock {Reflection and transmission of elastic waves in non-local band-gap
  metamaterials: a comprehensive study via the relaxed micromorphic model}.
\newblock {\em Journal of the Mechanics and Physics of Solids}, 95:441--479,
  2016.

\bibitem{misseroni2016cymatics}
Diego Misseroni, Daniel~J. Colquitt, Alexander~B. Movchan, Natasha~V. Movchan,
  and Ian~Samuel Jones.
\newblock {Cymatics for the cloaking of flexural vibrations in a structured
  plate}.
\newblock {\em Scientific Reports}, 6(23929), 2016.

\bibitem{neff2015relaxed}
Patrizio Neff, Ionel-Dumitrel Ghiba, Markus Lazar, and Angela Madeo.
\newblock {The relaxed linear micromorphic continuum: well-posedness of the
  static problem and relations to the gauge theory of dislocations}.
\newblock {\em The Quarterly Journal of Mechanics and Applied Mathematics},
  68(1):53--84, 2015.

\bibitem{neff2014unifying}
Patrizio Neff, Ionel-Dumitrel Ghiba, Angela Madeo, Luca Placidi, and Giuseppe
  Rosi.
\newblock {A unifying perspective: the relaxed linear micromorphic continuum}.
\newblock {\em Continuum Mechanics and Thermodynamics}, 26(5):639--681, 2014.

\bibitem{neff2017real}
Patrizio Neff, Angela Madeo, Gabriele Barbagallo, Marco~Valerio {d}'Agostino,
  Rafael Abreu, and Ionel-Dumitrel Ghiba.
\newblock {Real wave propagation in the isotropic-relaxed micromorphic model}.
\newblock {\em Proceedings of the Royal Society A: Mathematical, Physical and
  Engineering Sciences}, 473(2197):20160790, 2017.

\bibitem{rayleigh1885onwaves}
Lord John William~Strutt Rayleigh.
\newblock {On waves propagated along the plane surface of an elastic solid}.
\newblock {\em Proceedings of the London Mathematical Society}, 17:4--11, 1885.

\bibitem{stoneley1924elastic}
Robert Stoneley.
\newblock {Elastic waves at the surface of separation of two solids}.
\newblock {\em Proceedings of the Royal Society A: Mathematical, Physical and
  Engineering Sciences}, 106(738):416--428, 1924.

\bibitem{tallarico2017tilted}
Domenico Tallarico, Natalia~V. Movchan, Alexander~B. Movchan, and Daniel~J.
  Colquitt.
\newblock {Tilted resonators in a triangular elastic lattice: Chirality, Bloch
  waves and negative refraction}.
\newblock {\em Journal of the Mechanics and Physics of Solids}, 103:236--256,
  2017.

\bibitem{tallarico2017edge}
Domenico Tallarico, Alessio Trevisan, Natalia~V. Movchan, and Alexander~B.
  Movchan.
\newblock {Edge waves and localization in lattices containing tilted
  resonators}.
\newblock {\em Frontiers in Materials}, 4(June):1--13, 2017.

\bibitem{valentine2009optical}
Jason Valentine, Jensen Li, Thomas Zentgraf, Guy Bartal, and Xiang Zhang.
\newblock {An optical cloak made of dielectrics}.
\newblock {\em Nature Materials}, 8(7):568--571, 2009.

\bibitem{wang2014harnessing}
Pai Wang, Filippo Casadei, Sicong Shan, James~C. Weaver, and Katia Bertoldi.
\newblock {Harnessing buckling to design tunable locally resonant acoustic
  metamaterials}.
\newblock {\em Physical Review Letters}, 113, 2014.

\bibitem{xiao2011longitudinal}
Yong Xiao, Jihong Wen, and Xisen Wen.
\newblock {Longitudinal wave band gaps in metamaterial-based elastic rods
  containing multi-degree-of-freedom resonators}.
\newblock {\em New Journal of Physics}, 14, 2011.

\bibitem{xie2014wavefront}
Yangbo Xie, Wenqi Wang, Huanyang Chen, Adam Konneker, Bogdan-Ioan Popa, and
  Steven~A. Cummer.
\newblock {Wavefront modulation and subwavelength diffractive acoustics with an
  acoustic metasurface}.
\newblock {\em Nature Communications}, 5(1):1191--1205, 2014.

\bibitem{zhu2015study}
R.~Zhu, X.~N. Liu, and G.~L. Huang.
\newblock {Study of anomalous wave propagation and reflection in semi-infinite
  elastic metamaterials}.
\newblock {\em Wave Motion}, 55:73--83, 2015.

\end{thebibliography}
}

\newpage
\appendix
\section{Appendix for the classical Cauchy model}\label{appendixCauchy}

\subsection{Calculation of the determinant of $A$}\label{appendixCauchy1}
\small
We demonstrate the explicit calculation of the determinant of the matrix 
\begin{equation}\label{eq:ACauchymatrix}
A=\left(\begin{array}{cc}
\omega^2 - c_l^2 k_1^2 - c_s^2 k_2^2 & -c_V^2 k_1 k_2\\
-c_V^2 k_1 k_2 & \omega^2 - c_l^2 k_2^2 - c_s^2 k_1^2
\end{array}\right).
\end{equation}
We have
\begin{align}
\det A &= (\omega^2 - c_l^2 k_1^2 - c_s^2 k_2^2)(\omega^2 - c_l^2 k_2^2 - c_s^2 k_1^2) - c_V^4 k_1^2 k_2^2 \nonumber\\
&=\omega^4 - \omega^2 c_l^2 (k_1^2+k_2^2) - \omega^2 c_s^2 (k_1^2 + k_2^2) + c_l^2 c_s^2 (k_1^4 + k_2^4) + (c_l^4+c_s^4)k_1^2k_2^2 - c_V^4 k_1^2 k_2^2\\
&=\omega^4 - \omega^2 (k_1^2 +k_2^2)(c_l^2 +c_s^2)+ c_l^2 c_s^2 (k_1^4 + k_2^4) + (c_l^4+c_s^4) k_1^2 k_2^2 - c_V^2 k_1^2 k2^2 \nonumber\\
&= \omega^4- \omega^2(k_1^2 + k_2^2)\frac{2 \mu + \lambda + \mu}{\rho} + \frac{(2\mu + \lambda)\mu}{\rho^2}(k_1^4+k_2^4) + \frac{(2\mu+\lambda)^2+\mu^2}{\rho^2}k_1^2k_2^2 - \frac{(\mu+\lambda)^2}{\rho^2}k_1^2k_2^2 \nonumber \\
&=\frac{1}{\rho^2}\left[\rho^2\omega^4 - \rho \omega^2 \left((2\mu+\lambda)k_1^2 + (2\mu + \lambda) k_2^2 + \mu k_1^2 + \mu k_2^2\right) + \mu (2\mu + \lambda) k_1^4 + \mu (2\mu + \lambda)k_2^4 +2 \mu (2\mu +\lambda) k_1^2 k_2^2\right] \nonumber \\
&=\frac{1}{\rho^2}\left[\rho^2 \omega^4 -\rho \omega^2 (2\mu+\lambda)(k_1^2+k_2^2) -  \rho \omega^2 \mu (k_1^2+k_2^2) + (2\mu + \lambda)(k_1^2+k_2^2) \mu (k_1^2+k_2^2)    \right] \nonumber\\
&= \frac{1}{\rho^2}\left[(\mu(k_1^2+k_2^2)-\rho \omega^2)(2\mu + \lambda)(k_1^2+k_2^2) - \rho \omega^2 (2\mu + \lambda)(k_1^2+k_2^2)+\rho^2 \omega^4\right] \nonumber\\
&= \frac{1}{\rho^2}\left((2\mu+\lambda)(k_1^2+k_2^2)-\rho \omega^2\right)\left(\mu ( k_1^2 + k_2^2)-\rho \omega^2 \right). \label{eq:ADeterminant}
\end{align}

\subsection{Lemma \ref{Lemma1}}\label{appendixCauchy2}
\small
We have the following well-known result.
\begin{lemma}\label{Lemma1}
Let 
\[
u_1(x,t) = A(x)e^{i(\omega t-kx)}, \quad u_2(x,t) = B(x)e^{i(\omega t-kx)}
\]
be two functions with $A,B: \R^3 \to \C$. Then the following holds
\begin{equation}\label{eq:Lemma1}
\frac{1}{T} \int_0^{T} \Re\{u_1(x,t)\}  \Re\{u_2(x,t)\} dt= \frac{1}{2T} \Re (A B^{*}),
\end{equation}
where $T$ is the period of the functions $u_1, u_2$ and $B^{*}$ denotes the complex conjugate.
\end{lemma}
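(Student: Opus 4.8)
The plan is to reduce everything to the elementary identity $\Re z = \tfrac{1}{2}(z + \overline{z})$ and to exploit the fact that $u_1$ and $u_2$ carry the \emph{same} space-time exponential $e^{i(\omega t - kx)}$. Writing $\Re\{u_1\} = \tfrac12(u_1 + \overline{u_1})$ and $\Re\{u_2\} = \tfrac12(u_2 + \overline{u_2})$, I would expand the product $\Re\{u_1\}\,\Re\{u_2\}$ into the four terms $u_1 u_2$, $u_1\overline{u_2}$, $\overline{u_1}u_2$ and $\overline{u_1}\,\overline{u_2}$, each carrying the prefactor $\tfrac14$, and then integrate term by term over one period.

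The key observation is that the two cross terms lose their time dependence entirely: since $u_1\overline{u_2} = A(x)\,\overline{B(x)}\,e^{i(\omega t - kx)}\,e^{-i(\omega t - kx)} = A\overline{B}$, and likewise $\overline{u_1}u_2 = \overline{A}B$, both phases (in $t$ and in $x$) cancel and these terms are constant in $t$. By contrast, the two remaining terms oscillate at twice the frequency, namely $u_1 u_2 = AB\,e^{2i(\omega t - kx)}$ and $\overline{u_1}\,\overline{u_2} = \overline{AB}\,e^{-2i(\omega t - kx)}$. Integrating over one full period $T = 2\pi/\omega$, the factor $e^{\pm 2i\omega t}$ completes exactly two whole cycles, so $\int_0^T e^{\pm 2i\omega t}\,dt = 0$ and both oscillatory contributions drop out.

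Collecting the surviving pieces then gives $\tfrac1T\int_0^T \Re\{u_1\}\,\Re\{u_2\}\,dt = \tfrac14\bigl(A\overline{B} + \overline{A}B\bigr) = \tfrac12\,\Re(A B^{*})$, where the last equality uses $z + \overline{z} = 2\Re z$ once more. I expect no genuine obstacle here; the computation is routine once the product is split. The only point requiring care is that $T$ must be exactly the common period (or an integer multiple of it) so that the $e^{\pm 2i\omega t}$ terms integrate to zero, and this is precisely guaranteed by the hypothesis that $T$ is the period of $u_1, u_2$. I note that this argument naturally produces the constant $\tfrac12$ rather than $\tfrac{1}{2T}$, which is consistent with the way the result is subsequently applied in \eqref{eq:fluxLSV} and \eqref{eq:fluxSH}.
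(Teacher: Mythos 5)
Your proof is correct and follows essentially the same route as the paper's: both expand $\Re\{u_1\}\Re\{u_2\}$ via $\Re z=\tfrac12(z+z^{*})$, discard the $e^{\pm 2i\omega t}$ terms by periodicity, and keep the constant cross terms $\tfrac14\left(AB^{*}+A^{*}B\right)=\tfrac12\Re\left(AB^{*}\right)$. You are also right that the constant is $\tfrac12$ rather than $\tfrac{1}{2T}$; the paper's own computation \eqref{eq:ALemmaProof} likewise ends with $\tfrac12\Re\left(AB^{*}\right)$, so the factor $\tfrac{1}{2T}$ in the statement \eqref{eq:Lemma1} is a typo.
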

\begin{proof}
We have:
\begin{align}
\frac{1}{T}\int_0^{T} \Re \{u_1(x,t)\} \Re \{u_2(x,t)\}dt &= \frac{1}{T}\int_0^{T} \Re \left(A e^{i(\omega t-kx)}\right) \Re \left(B e^{i(\omega t -kx)}\right)dt \nonumber\\
&=\frac{1}{T}\int_0^{T} \frac{Ae^{i(\omega t-kx)} +A^{*}e^{-i(\omega t-kx)}}{2}\frac{Be^{i(\omega t-kx)} +B^{*}e^{-i(\omega t -kx)}}{2}dt \nonumber\\
&=\frac{1}{T}\int_0^{T} \frac{AB}{4}\underbrace{e^{2i \omega t}}_{\text{periodic}}e^{2ikx}+\frac{AB^{*}}{4}+\frac{A^{*}B}{4}+\frac{A^{*}B^{*}}{4}\underbrace{e^{2i \omega t}}_{\text{periodic}}e^{2ikx}dt \nonumber\\
&=\frac{1}{T}\int_0^{T} \frac{AB^{*}+A^{*}B}{4}dt =\frac{1}{2T}\int_0^{T} \Re\left(AB^{*}\right)dt  =\frac{1}{2}\Re\left(AB^{*}\right),\label{eq:ALemmaProof}
\end{align}
where we used the facts the periodic function $e^{2i\omega t}$ integrated over its period is zero and that for any complex number $z\in \C$: $\Re(z) = \frac{z+z^{*}}{2}$.
\end{proof}
\subsection{Conditions for the appearance of Stoneley waves}\label{appendixCauchy3}
We explicitly demonstrate all calculations carried out in order to produce Tables \ref{table:StoneleyT} and \ref{table:StoneleyR}.
\subsubsection*{Incident L, transmitted waves}
In this case, $k_2 = -\kabs \cos \theta = -\frac{\omega}{c_L^-} \cos \theta$. 
\begin{itemize}
\item L-mode: \\
\begin{equation}\label{eq:AStoneleyLLt}
\left(\frac{\omega}{c_L^+}\right)^2-k_2^2 < 0 \Rightarrow \left(\frac{\omega}{c_L^+}\right)^2 <k_2^2 \Rightarrow \left(\frac{\omega}{c_L^+}\right)^2 <\left(\frac{\omega}{c_L^-}\right)^2 \cos^2 \theta \Rightarrow \cos^2 \theta > \left(\frac{c_L^-}{c_L^+}\right)^2=\frac{\rho^+(2\mu^- + \lambda^-)}{\rho^-(2 \mu^+ + \lambda^+)}.
\end{equation}
\item SV-mode:
\begin{equation}\label{eq:AStoneleyLSVt}
\left(\frac{\omega}{c_S^+}\right)^2-k_2^2 < 0 \Rightarrow \left(\frac{\omega}{c_S^+}\right)^2 <k_2^2 \Rightarrow \left(\frac{\omega}{c_S^+}\right)^2 <\left(\frac{\omega}{c_L^-}\right)^2 \cos^2 \theta \Rightarrow \cos^2 \theta > \left(\frac{c_L^-}{c_S^+}\right)^2=\frac{\rho^+(2\mu^- + \lambda^-)}{\rho^- \mu^+ }.
\end{equation}
\end{itemize}
\subsubsection*{Incident SV, transmitted waves}
In this case, $k_2 = -\kabs \cos \theta = -\frac{\omega}{c_S^-} \cos \theta$. 
\begin{itemize}
	\item L-mode: \\
	\begin{equation}\label{eq:StoneleySVLt}
	\left(\frac{\omega}{c_L^+}\right)^2-k_2^2 < 0 \Rightarrow \left(\frac{\omega}{c_L^+}\right)^2 <k_2^2 \Rightarrow \left(\frac{\omega}{c_L^+}\right)^2 <\left(\frac{\omega}{c_S^-}\right)^2 \cos^2 \theta \Rightarrow \cos^2 \theta > \left(\frac{c_S^-}{c_L^+}\right)^2=\frac{\rho^+\mu^- }{\rho^-(2 \mu^+ + \lambda^+)}.
	\end{equation}
	\item SV-mode:
\begin{equation}\label{eq:AStoneleySVSVt}
\left(\frac{\omega}{c_S^+}\right)^2-k_2^2 < 0 \Rightarrow \left(\frac{\omega}{c_S^+}\right)^2 <k_2^2 \Rightarrow \left(\frac{\omega}{c_S^+}\right)^2 <\left(\frac{\omega}{c_S^-}\right)^2 \cos^2 \theta \Rightarrow \cos^2 \theta > \left(\frac{c_S^-}{c_S^+}\right)^2=\frac{\rho^+ \mu^- }{\rho^- \mu^+ }.
\end{equation}
\end{itemize}
\subsubsection*{Incident SH, transmitted waves}
In this case, $k_2 = -\kabs \cos \theta = -\frac{\omega}{c_S^-} \cos \theta$. 
\begin{itemize}
\item SH-mode:
\begin{equation}\label{eq:AStoneleySHt}
\left(\frac{\omega}{c_S^+}\right)^2-k_2^2 < 0 \Rightarrow \left(\frac{\omega}{c_S^+}\right)^2 <k_2^2 \Rightarrow \left(\frac{\omega}{c_S^+}\right)^2 <\left(\frac{\omega}{c_S^-}\right)^2 \cos^2 \theta \Rightarrow \cos^2 \theta > \left(\frac{c_S^-}{c_S^+}\right)^2=\frac{\rho^+ \mu^- }{\rho^- \mu^+ }.
\end{equation}
\end{itemize}

\subsubsection*{Incident Longitudinal, reflected waves}
In this case, $k_2 = -\kabs \cos \theta = -\frac{\omega}{c_L^-} \cos \theta$. 
\begin{itemize}
	\item L-mode: \\
	\begin{equation}\label{eq:StoneleyLLr}
	\left(\frac{\omega}{c_L^-}\right)^2-k_2^2 < 0 \Rightarrow \left(\frac{\omega}{c_L^-}\right)^2 <k_2^2 \Rightarrow \left(\frac{\omega}{c_L^-}\right)^2 <\left(\frac{\omega}{c_L^-}\right)^2 \cos^2 \theta \Rightarrow \cos^2 \theta > \left(\frac{c_L^-}{c_L^-}\right)^2=1,
	\end{equation}
which renders the L mode becoming Stoneley in the case of an incident Longitudinal wave, impossible.
	\item SV-mode:
\begin{equation}\label{eq:AStoneleyLSVr}
\left(\frac{\omega}{c_S^-}\right)^2-k_2^2 < 0 \Rightarrow \left(\frac{\omega}{c_S^-}\right)^2 <k_2^2 \Rightarrow \left(\frac{\omega}{c_S^-}\right)^2 <\left(\frac{\omega}{c_L^-}\right)^2 \cos^2 \theta \Rightarrow \cos^2 \theta > \left(\frac{c_L^-}{c_S^-}\right)^2>1,
\end{equation}
\end{itemize}
which renders the SV mode becoming Stoneley in the case of an incident Longitudinal wave, impossible, since $c_L>c_S$ by definition.
\subsubsection*{Incident SV, reflected waves}
In this case, $k_2 = -\kabs \cos \theta = -\frac{\omega}{c_S^-} \cos \theta$. 
\begin{itemize}
	\item L-mode: 
	\begin{equation}\label{eq:AStoneleySVLr}
	\left(\frac{\omega}{c_L^-}\right)^2-k_2^2 < 0 \Rightarrow \left(\frac{\omega}{c_L^-}\right)^2 <k_2^2 \Rightarrow \left(\frac{\omega}{c_L^-}\right)^2 <\left(\frac{\omega}{c_S^-}\right)^2 \cos^2 \theta \Rightarrow \cos^2 \theta > \left(\frac{c_S^-}{c_L^-}\right)^2=\frac{\mu^- }{(2 \mu^- + \lambda^-)}.
	\end{equation}
	\item SV-mode:
\begin{equation}\label{eq:AStoneleySVSVr}
\left(\frac{\omega}{c_S^-}\right)^2-k_2^2 < 0 \Rightarrow \left(\frac{\omega}{c_S^-}\right)^2 <k_2^2 \Rightarrow \left(\frac{\omega}{c_S^-}\right)^2 <\left(\frac{\omega}{c_S^-}\right)^2 \cos^2 \theta \Rightarrow \cos^2 \theta > \left(\frac{c_S^-}{c_S^-}\right)^2=1,
\end{equation}
which renders the SV mode becoming Stoneley in the case of an incident SV wave, impossible.\end{itemize}
\subsubsection*{Incident SH, reflected waves}
In this case, $k_2 = -\kabs \cos \theta = -\frac{\omega}{c_S^-} \cos \theta$. 

\begin{itemize}
	\item SH-mode:
\begin{equation}\label{eq:AStoneleySHr}
\left(\frac{\omega}{c_S^-}\right)^2-k_2^2 < 0 \Rightarrow \left(\frac{\omega}{c_S^-}\right)^2 <k_2^2 \Rightarrow \left(\frac{\omega}{c_S^-}\right)^2 <\left(\frac{\omega}{c_S^-}\right)^2 \cos^2 \theta \Rightarrow \cos^2 \theta > \left(\frac{c_S^-}{c_S^-}\right)^2=1,
\end{equation}
which shows that there can be no reflected Stoneley waves in the case of an incident SH wave.
\end{itemize}

\section{Appendix for the relaxed micromorphic model}\label{appendixRelaxed}
\subsection{Governing equations in component-wise notation}\label{appendixRelaxed1}
\small
\begin{align*}
u_{1,tt} &= \frac{\lame+2\mue}{\rho}u_{1,11} + \frac{\mue + \muc}{\rho} u_{1,22} +\frac{\mue - \muc + \lame}{\rho} u_{2,12} - \frac{2\mue}{\rho}P_{11,1} - \frac{\lame}{\rho}(P_{11,1} + P_{22,1} + P_{33,1}) \\ &- \frac{(\mue + \muc)}{\rho} P_{12,2} - \frac{(\mue - \muc)}{\rho}P_{21,2},\\
u_{2,tt} &= \frac{(\lame + 2\mue)}{\rho} u_{2,22} + \frac{\muc + \mue}{\rho} u_{2,11} + \frac{\mue - \muc + \lame}{\rho} u_{1,12} - \frac{2\mue}{\rho} P_{22,2} - \frac{\lame}{\rho}(P_{11,2} + P_{22,2} + P_{33,2}) \\
&- \frac{\mue + \muc}{\rho} P_{21,1} - \frac{\mue - \muc}{\rho} P_{12,1},\\
u_{3,tt} &= \frac{\mue + \muc}{\rho} (u_{3,11} + u_{3,22}) - \frac{\mue - \muc}{\rho} (P_{13,1}
 + P_{23,2}) - \frac{\mue + \muc}{\rho} (P_{31,1} + P_{32,2}),\\
P_{11,tt} &= \frac{2\mue + \lame}{\eta} u_{1,1} + \frac{\lame}{\eta} u_{2,2} - 2\frac{\mue + \mumic}{\eta}P_{11} - \frac{\lame + \lammic}{\eta}(P_{11} + P_{22} + P_{33}) + \frac{\mue \Lc}{\eta} (P_{11,22} - P_{12,12}),\\
P_{12,tt} &= \frac{\mue + \muc}{\eta} u_{1,2} + \frac{\mue - \muc}{\eta} u_{2,1} - \frac{\mue+ \muc + \mumic}{\eta} P_{12} - \frac{\mue - \muc + \mumic}{\eta} P_{21} 
+ \frac{\mue \Lc}{\eta}(P_{12,11} - P_{11,12}), \\
P_{13,tt} &= \frac{\mue - \muc}{\eta} u_{3,1} - \frac{\mue + \muc + \mumic}{\eta} P_{13} - \frac{\mue - \muc + \mumic}{\eta} P_{31} + \frac{\mue \Lc}{\eta} (P_{13,22} + P_{13,11}), \\
P_{21,tt} &= \frac{\mue - \muc}{\eta} u_{1,2} + \frac{\mue +\muc}{\eta} u_{2,1} - \frac{\mue - \muc + \mumic}{\eta} P_{12} - \frac{\mue + \muc + \mumic}{\eta} P_{21}+ \frac{\mue \Lc}{\eta} (P_{21,22} - P_{22,12}), \\
P_{22,tt} &= \frac{2\mue + \lame}{\eta} u_{2,2} + \frac{\lame}{\eta} u_{1,1} - 2 \frac{\mue + \mumic}{\eta} P_{22} - \frac{\lame + \lammic}{\eta} (P_{11} + P_{22} + P_{33})+ \frac{\mue \Lc}{\eta} (P_{22,11} - P_{21,12}), \\
P_{23,tt} &= \frac{\mue - \muc}{\eta} u_{3,2} - \frac{\mue + \muc + \mumic}{\eta} P_{23} - \frac{\mue - \muc + \mumic}{\eta} P_{32} + \frac{\mue \Lc}{\eta}(P_{23,22} + P_{23,11}), \\
P_{31,tt} &= \frac{\mue + \muc}{\eta} u_{3,1} - \frac{\mue - \muc + \mumic}{\eta} P_{13} - \frac{\muc + \mue + \mumic}{\eta} P_{31} + \frac{\mue \Lc}{\eta} (P_{31,22} - P_{32,12}), \\
P_{32,tt} &= \frac{\mue + \muc}{\eta}u_{3,2} - \frac{\mue - \muc + \mumic}{\eta} P_{23} -\frac{\mue + \muc + \mumic}{\eta} P_{32} + \frac{\mue \Lc}{\eta} (P_{32,11} - P_{31,12}), \\
P_{33,tt} &= \frac{\lame}{\eta} (u_{1,1} + u_{2,2}) - 2 \frac{\mue + \mumic}{\eta} P_{33} - \frac{\lame + \lammic}{\eta} (P_{11} + P_{22} + P_{33}) + \frac{\mue \Lc}{\eta}(P_{33,22} + P_{31 3,11}).
\end{align*}

\subsection{Governing equations with new variables}\label{appendixRelaxed2}
\small
Define the new variables\footnote{The definitions are motivated by the Cartan-Lie decomposition of the tensor $P$.}
\begin{align*}
&P^S = \frac{1}{3}(P_{11} + P_{22} + P_{33}), \quad P^{D}_1 = P_{11} - P^S, \quad P^D_2 = P_{22} - P^S, \quad P_{(1\gamma)} = \frac{1}{2}(P_{1\gamma} + P_{\gamma 1}), \\
&P_{[1\gamma]} = \frac{1}{2}(P_{1\gamma} - P_{\gamma 1}), \quad P_{(23)} = \frac{1}{2}(P_{23} + P_{32}), \quad P_{[23]} = \frac{1}{2}(P_{12} - P_{21}),
\end{align*}
with $\gamma=2,3$ and rewrite the equations with respect to these new variables: 
\begin{align*}
u_{1,tt} &= \frac{2\mue + \lame}{\rho}u_{1,11} + \frac{\mue + \muc}{\rho}u_{1,22} + \frac{\mue - \muc + \lame}{\rho}u_{2,12} - 2\frac{\mue}{\rho}P^{D}_{1,1} - \frac{3\lame + 2\mue}{\rho}P^{S}_{,1} -2\frac{\mue}{\rho}P_{(12),2} - 2\frac{\muc}{\rho}P_{[12],2},  \\
u_{2,tt} &= \frac{\mue - \muc + \lame}{\rho}u_{1,12} +  \frac{\mue + \muc}{\rho}u_{2,11} + \frac{2\mue + \lame}{\rho}u_{2,22}  - 2\frac{\mue}{\rho}P^{D}_{2,2} - \frac{3\lame + 2\mue}{\rho}P^{S}_{,2} -2\frac{\mue}{\rho}P_{(12),1} + 2\frac{\muc}{\rho}P_{[12],1},  \\
P^{D}_{1,tt} &= \frac{4}{3}\frac{\mue}{\eta}u_{1,1} - \frac{2}{3}\frac{\mue}{\eta}u_{2,2} - 2\frac{\mue + \mumic}{\eta} P^{D}_{1} +\frac{\mue \Lc}{3\eta}P^{D}_{1,11} + \frac{\mue \Lc}{\eta}P^{D}_{1,22} + \frac{\mue \Lc}{3\eta}P^{D}_{2,22} - \frac{2}{3}\frac{\mue \Lc}{\eta}P^{S}_{,11} + \frac{\mue \Lc}{3\eta}P^{S}_{,22}\\
& - \frac{\mue Lc}{3\eta}P_{(12),12} - \frac{\mue \Lc}{\eta}P_{[12],12},\\
P^{D}_{2,tt} &= -\frac{2}{3}\frac{\mue}{\eta}u_{1,1} + \frac{4}{3} \frac{\mue}{\eta}u_{2,2} +\frac{\mue \Lc}{3\eta}P^{D}_{1,11} - 2\frac{\mue + \mumic}{\eta} P^{D}_2 + \frac{\mue \Lc}{\eta} P^{D}_{2,11} + \frac{\mue \Lc}{3\eta}P^{D}_{2,22} + \frac{\mue \Lc}{3\eta}P^{S}_{,11} -\frac{2}{3}\frac{\mue \Lc}{\eta} P^{S}_{,22}\\
& - \frac{\mue \Lc}{3\eta}P_{(12),12} + \frac{\mue \Lc}{\eta} P_{[12],12}, \\
P^{S}_{,tt} &= \frac{2\mue + 3\lame}{3\eta}u_{1,1} + \frac{2\mue + 3\lame}{3\eta}u_{2,2} - \frac{\mue \Lc}{3\eta}P^{D}_{1,11} - \frac{\mue \Lc}{3\eta}P^{D}_{2,22} - \frac{(2\mue + 3\lame) + (2\mumic + 3\lammic)}{\eta} P^{S} +\frac{2}{3}\frac{\mue \Lc}{\eta}P^{S}_{,11} \\
&+ \frac{2}{3}\frac{\mue \Lc}{\eta}P^{S}_{,22}  - \frac{2}{3}\frac{\mue \Lc}{\eta}P_{(12),12}, \\
P_{(12),tt} &= \frac{\mue}{\eta}u_{1,2} + \frac{\mue}{\eta}u_{2,1} - \frac{1}{2}\frac{\muc \Lc}{\eta} P^{D}_{1,12}  - \frac{1}{2}\frac{\muc \Lc}{\eta} P^{D}_{2,12} - \frac{\muc \Lc}{\eta} P^{S}_{,12}	 -2\frac{\mue + \mumic}{\eta}P_{(12)} + \frac{1}{2}\frac{\muc \Lc}{\eta} P_{(12),11} + \frac{1}{2}\frac{\muc \Lc}{\eta} P_{(12),22}\\ 
& +\frac{1}{2}\frac{\muc \Lc}{\eta} P_{[12],11} - \frac{1}{2}\frac{\muc \Lc}{\eta}P_{[12],22}, \\
P_{[12],tt} &= \frac{\muc}{\eta}u_{1,2} - \frac{\muc}{\eta}u_{2,1}- \frac{1}{2}\frac{\mue \Lc}{\eta} P^{D}_{1,12} + \frac{1}{2}\frac{\mue \Lc}{\eta} P^{D}_{2,12}+ \frac{1}{2}\frac{\mue \Lc}{\eta}P_{(12),11}- \frac{1}{2}\frac{\mue \Lc}{\eta} P_{(12),22}-2\frac{\muc}{\eta}P_{[12]}  + \frac{1}{2}\frac{\mue \Lc}{\eta} P_{[12],11}\\
&  + \frac{1}{2}\frac{\mue \Lc}{\eta} P_{[12],22}, \\
u_{3,tt} &= \frac{\mue + \muc}{\rho}(u_{3,11} + u_{3,22}) - 2\frac{\mue}{\rho} P_{(13),1} + 2\frac{\muc}{\rho}P_{[13],1} - 2\frac{\mue}{\rho}P_{(23),2} + 2\frac{\muc}{\rho}P_{[23],2}, \\
P_{(13),tt} &= \frac{\mue}{\eta}u_{3,1} - 2\frac{\mue + \mumic}{\eta}P_{(13)} + \frac{1}{2}\frac{\mue \Lc}{\eta} P_{(13),11} + \frac{\mue \Lc}{\eta} P_{(13),22} + \frac{1}{2}\frac{\mue \Lc}{\eta}P_{[13],11} - \frac{1}{2}\frac{\mue \Lc}{\eta}P_{(23),12} + \frac{1}{2}\frac{\mue \Lc}{\eta} P_{[23],12},\\
P_{[13],tt} &= -\frac{\muc}{\eta}u_{3,1} + \frac{1}{2}\frac{\mue \Lc}{\eta}P_{(13),11} - 2\frac{\muc}{\eta}P_{[13]} + \frac{1}{2}\frac{\mue \Lc}{\eta}P_{[13],11}+ \frac{\mue \Lc}{\eta} P_{[13],22}  + \frac{1}{2}\frac{\mue \Lc}{\eta}P_{(23),12} - \frac{1}{2}\frac{\mue \Lc}{\eta}P_{[23],12},  \\
P_{(23),tt} &= \frac{\mue}{\eta}u_{3,2} - \frac{1}{2}\frac{\mue \Lc}{\eta}P_{(13),12}+ \frac{1}{2}\frac{\mue \Lc}{\eta}P_{[13],12}- 2\frac{\mue + \mumic}{\eta}P_{(23)} + \frac{\mue \Lc}{\eta}P_{(23),11} + \frac{1}{2}\frac{\mue \Lc}{\eta} P_{(23),22} +  \frac{1}{2}\frac{\mue \Lc}{\eta} P_{[23],22}, \\
P_{[23],tt} &= -\frac{\muc}{\eta}u_{3,2} + \frac{1}{2}\frac{\mue \Lc}{\eta}P_{(13),12} - \frac{1}{2}\frac{\mue \Lc}{\eta}P_{[13],12} + \frac{1}{2}\frac{\mue \Lc}{\eta} P_{(23),22} - 2\frac{\muc}{\eta}P_{[23]} + \frac{\mue \Lc}{\eta}P_{[23],11}  +  \frac{1}{2}\frac{\mue \Lc}{\eta} P_{[23],22}  . 
\end{align*}
Collect the new variables as 
\begin{align}
v^1 &= \left(u_1, u_2, P_1^D, P_2^D, P^S, P_{(12)}, P_{[12]}\right)^T, \\
v^2 &= \left(u_3, P_{(13)}, P_{[13]}, P_{(23)}, P_{[23]}\right)^T.
\end{align}

\subsection{The matrices $A_1$ and $A_2$}\label{appendixRelaxed3}
\small
The form of the matrices being too complicated to fit in one page, we present both $A_1$ and $A_2$ in a column-wise sense.

\begin{align}
\begin{array}{cc}
A_{11} = \left( \begin{array}{c}
\left( k_1^2  \frac{2\mue + \lame}{\rho} + k_2^2 \frac{\mue + \muc}{\rho}\right)-\omega^2 \\
k_1 k_2 \frac{\mue - \muc + \lame}{\rho} \\
-2ik_1 \frac{\mue}{\rho}\\
0\\
-ik_1 \frac{3\lame + 2\mue }{\rho}\\
-2ik_2 \frac{\mue}{\rho}\\
2ik_2 \frac{\mue}{\rho}
  \end{array} \right)^T,&  A_{12} = \left( \begin{array}{c}
  k_1 k_2 \frac{\mue - \muc + \lame}{\rho} \\
  \left(k_1^2 \frac{\mue + \muc}{\rho} +k_2^2 \frac{2\mue + \lame}{\rho}\right)-\omega^2\\
  0\\
 - 2i k_2 \frac{\mue}{\rho}\\
  -i k_2 \frac{3\lame + 2\mue}{\rho}\\
  -2i k_1 \frac{\mue}{\rho}\\
  2i k_1 \frac{\mue}{\rho}
  \end{array} \right)^T, 
  \end{array}
  \end{align}
  \begin{align}
  \begin{array}{cc}
A_{13} = \left( \begin{array}{c}
\frac{4}{3}i k_1 \frac{\mue}{\eta}\\
-\frac{2}{3}i k_2 \frac{\mue}{\eta}\\
\left(k_1^2 \frac{\mue Lc}{3\eta}+ k_2^2 \frac{\mue \Lc}{\eta}\right)+2\frac{\mue + \mumic}{\eta}-\omega^2\\
k_2^2\frac{\mue \Lc}{3\eta}\\
\left(k_2^2\frac{\mue\Lc}{3\eta}-k_1^2\frac{\mue\Lc}{\eta}\right)\\
-k_1k_2 \frac{\mue\Lc}{3\eta}\\
-k_1 k_2 \frac{\mue\Lc}{\eta}
  \end{array} \right)^T, & A_{14} = \left( \begin{array}{c}
  -\frac{2}{3}i k_1 \frac{\mue}{\eta}\\
  +\frac{4}{3}ik_2\frac{\mue}{\eta}\\
  k_1^2\frac{\mue\Lc}{3\eta}\\
  \left(k_1^2\frac{\mue\Lc}{\eta}+k_2^2\frac{\mue\Lc}{3\eta}\right)+2\frac{\mue+\mumic}{\eta}-\omega^2\\
  \left(k_1^2\frac{\mue\Lc}{3\eta}-\frac{2}{3} k_2^2 \frac{\mue\Lc}{\eta}\right)\\
  -k_1k_2\frac{\mue\Lc}{3\eta}\\
  k_1k_2\frac{\mue\Lc}{\eta}
  \end{array} \right)^T,\\
A_{15}=\left(\begin{array}{c}
i k_1 \frac{2\mue+3\lame}{3\eta}\\
i k_2\frac{2\mue+3\lame}{3\eta}\\
-k_1^2\frac{\mue\Lc}{3\eta}\\
-k_2^2 \frac{\mue\Lc}{3\eta}\\
\frac{2}{3}(k_1^2+k_2^2)\frac{\mue\Lc}{\eta}+\frac{(2\mue+3\lame)+(2\mumic+3\lammic)}{\eta}-\omega^2\\
-\frac{2}{3}k_1k_2\frac{\mue\Lc}{\eta}\\
0
\end{array}\right)^T, & A_{16}=\left(\begin{array}{c}
ik_2\frac{\mue}{\eta}\\
ik_1\frac{\mue}{\eta}\\
-\frac{1}{2} k_1 k_2 \frac{\mue\Lc}{\eta}\\
-\frac{1}{2}k_1 k_2 \frac{\mue\Lc}{\eta}\\
-k_1 k_2 \frac{\mue\Lc}{\eta}\\
\frac{1}{2}(k_1^2+k_2^2)\frac{\mue\Lc}{\eta}+2\frac{\mue+\mumic}{\eta}-\omega^2\\
\frac{1}{2}(k_1^2-k_2^2)\frac{\mue\Lc}{\eta}
\end{array}\right)^T,\\
\end{array}
\end{align}
\begin{align}
A_{17}&=\left(\begin{array}{c}
ik_2\frac{\muc}{\eta}\\
-ik_1\frac{\muc}{\eta}\\
-\frac{1}{2}k_1k_2\frac{\mue\Lc}{\eta}\\
\frac{1}{2} k_1 k_2 \frac{\mue\Lc}{\eta}\\
0\\
\frac{1}{2}(k_1^2-k_2^2)\frac{\mue\Lc}{\eta}\\
\frac{1}{2}(k_1^2+k_2^2)\frac{\mue\Lc}{\eta}+2\frac{\muc}{\eta}-\omega^2
\end{array}\right)^T.
\end{align}
Then, the matrix $A_1$ is 
\begin{equation}
A_1=(A_{11},A_{12},A_{13},A_{14},A_{15},A_{16},A_{17})^T.
\end{equation}

As for $A_2$ we have:
\begin{align}
\begin{array}{cc}
A_{21}=\left(\begin{array}{c}
(k_1^2+k_2^2)\frac{\mue+\muc}{\rho}-\omega^2\\
2ik_1\frac{\mue}{\rho}\\
-2ik_1\frac{\muc}{\rho}\\
2ik_2\frac{\mue}{\rho}\\
-2ik_2\frac{\muc}{\rho}
\end{array}\right)^T&
A_{22}=\left(\begin{array}{c}
-ik_1\frac{\mue}{\eta}\\
\left(k_1^2\frac{\mue\Lc}{2\eta}+k_2^2\frac{\mue\Lc}{\eta}\right)+2\frac{\mue+\mumic}{\eta}-\omega^2\\
k_1^2\frac{\mue\Lc}{2\eta}\\
-k_1k_2\frac{\mue\Lc}{2\eta}\\
k_1k_2\frac{\mue\Lc}{2\eta}
\end{array}\right)^T \\
A_{23}=\left(\begin{array}{c}
ik_1\frac{\muc}{\eta}\\
k_1^2\frac{\mue\Lc}{2\eta}\\
\left(k_1^2\frac{\mue\Lc}{2\eta}+k_2^2\frac{\mue\Lc}{\eta}\right)+2\frac{\muc}{\eta}-\omega^2\\
k_1k_2\frac{\mue\Lc}{2\eta}\\
-k_1k_2\frac{\mue\Lc}{2\eta}
\end{array}\right)^T&
A_{24}=\left(\begin{array}{c}
-ik_2\frac{\mue}{\eta}\\
-k_1k_2\frac{\mue\Lc}{2\eta}\\
k_1k_2\frac{\mue\Lc}{2\eta}\\
\left(k_1^2\frac{\mue\Lc}{\eta}+k_2^2\frac{\mue\Lc}{2\eta}\right)+2\frac{\mue+\mumic}{\eta}-\omega^2\\
k_2^2\frac{\mue\Lc}{2\eta}
\end{array}\right)^T
\end{array}
\end{align}
\begin{align}
A_{25}&=\left(\begin{array}{c}
ik_2\frac{\muc}{\eta}\\
k_1k_2\frac{\mue\Lc}{2\eta}\\
-k_1k_2\frac{\mue\Lc}{2\eta}\\
k_2^2\frac{\mue\Lc}{2\eta}\\
\left(k_1^2\frac{\mue\Lc}{\eta}+k_2^2\frac{\mue\Lc}{2\eta}\right)+2\frac{\muc}{\eta}-\omega^2
\end{array}\right)^T.
\end{align}
Then, the matrix $A_2$ is 
\begin{equation}
A_2 = (A_{21},A_{22},A_{23},A_{24},A_{25})^T.
\end{equation}
\subsection{Dispersion curves analysis of the relaxed micromorphic model}\label{appendixRelaxed4}
\small
\subsubsection{In-plane variables}\label{appendixRelaxed41}
First, observe that after replacing the wave form \eqref{relaxedplanewave1} in \eqref{eq:relaxedeqns}, $A_1$ can be written as (see Appendix \ref{appendixRelaxed5} for a demonstration of these matrices):
\begin{equation}\label{eq:A1decomp}
A_1 = \kabs^2A_1^R-\omega^2\idmat -i\kabs B_1^R-C_1^R.
\end{equation}

We now investigate the behaviour of the polynomial $\det A_1$ for the two limiting cases $\kabs \to 0$ and $\kabs \to \infty$, the final goal being to be able to determine the cut-off frequencies and asymptotes of the dispersion curves. 

Letting $\kabs \to 0$ and using \eqref{eq:A1decomp}, the equation $A_1 \cdot v^1=0$ becomes 
\begin{equation}\label{eq:A1k0}
(\omega^2 \idmat + C^R_1)\cdot v^1 = 0. 
\end{equation}
Using expression \eqref{CR1} for $C^R_1$ given in Appendix \ref{appendixRelaxed}, we get 
\begin{equation}
\left( \begin{array}{ccccccc}
\omega^2 & 0 & 0 & 0 & 0 & 0 & 0 \\
0 & \omega^2 &0 & 0 & 0 & 0 & 0 \\
0 & 0 & \omega^2 - \omega_s^2 & 0 & 0 & 0 & 0 \\
0 & 0 & 0 & \omega^2 - \omega_s^2 & 0 & 0 & 0 \\
0 & 0 & 0 & 0 & \omega^2 - \omega_p^2 & 0 & 0 \\
0 & 0 & 0 & 0 & 0 & \omega^2 - \omega_s^2 & 0 \\
0 & 0 & 0 & 0 & 0 & 0 & \omega^2 - \omega_r^2 
\end{array} \right)\cdot v^1 =0.
\end{equation}
This allows us to deduce that for small values of $\kabs$, the cut-off frequencies of the dispersion curves are $\omega = 0$, $\omega = \omega_s$, $\omega = \omega_r$, $\omega = \omega_p$ as shown in Figure \ref{fig:dispA1}. 

As for when $\kabs \to \infty$, we consider the case where the ratio $\kabs/\omega$ remains finite and so instead of studying the whole system we can simply regard the reduced system 
%\[
%D_2\cdot v^1:=(\kabs^2 A^R_1 - \omega^2 \idmat)\cdot v^1 = 0.
%\]
%An explicit representation of $D_2$ can be found in Appendix \ref{appendixRelaxed}.
\begin{equation}
\left(\kabs^2 A^R_1 - \omega^2 \idmat\right)\cdot v^1 = 0.
\end{equation}

The determinant of $\kabs^2 A^R_1 - \omega^2 \idmat$ is
\begin{equation}\det \left(\kabs^2 A^R_1 - \omega^2 \idmat\right) = \omega^4 (\kabs^2 c_m^2 - \omega^2)^3(\kabs^2c_f^2-\omega^2)(\kabs^2 c_p^2 -\omega^2).\end{equation}
If we solve the  equation $\det \left(\kabs^2 A^R_1 - \omega^2 \idmat\right) = 0$, we find the solution $\omega = 0$, which has to be excluded since it violates the requirement $\kabs/\omega$ finite for $\kabs \to \infty$.  The other solutions are $\omega = c_m \kabs$, $\omega=c_f \kabs$, $\omega = c_p \kabs$; these are the asymptotes to the dispersion curves.

\begin{figure}[h!]
	\centering
	\begin{tabular}{ccc}
		\includegraphics[scale=0.65]{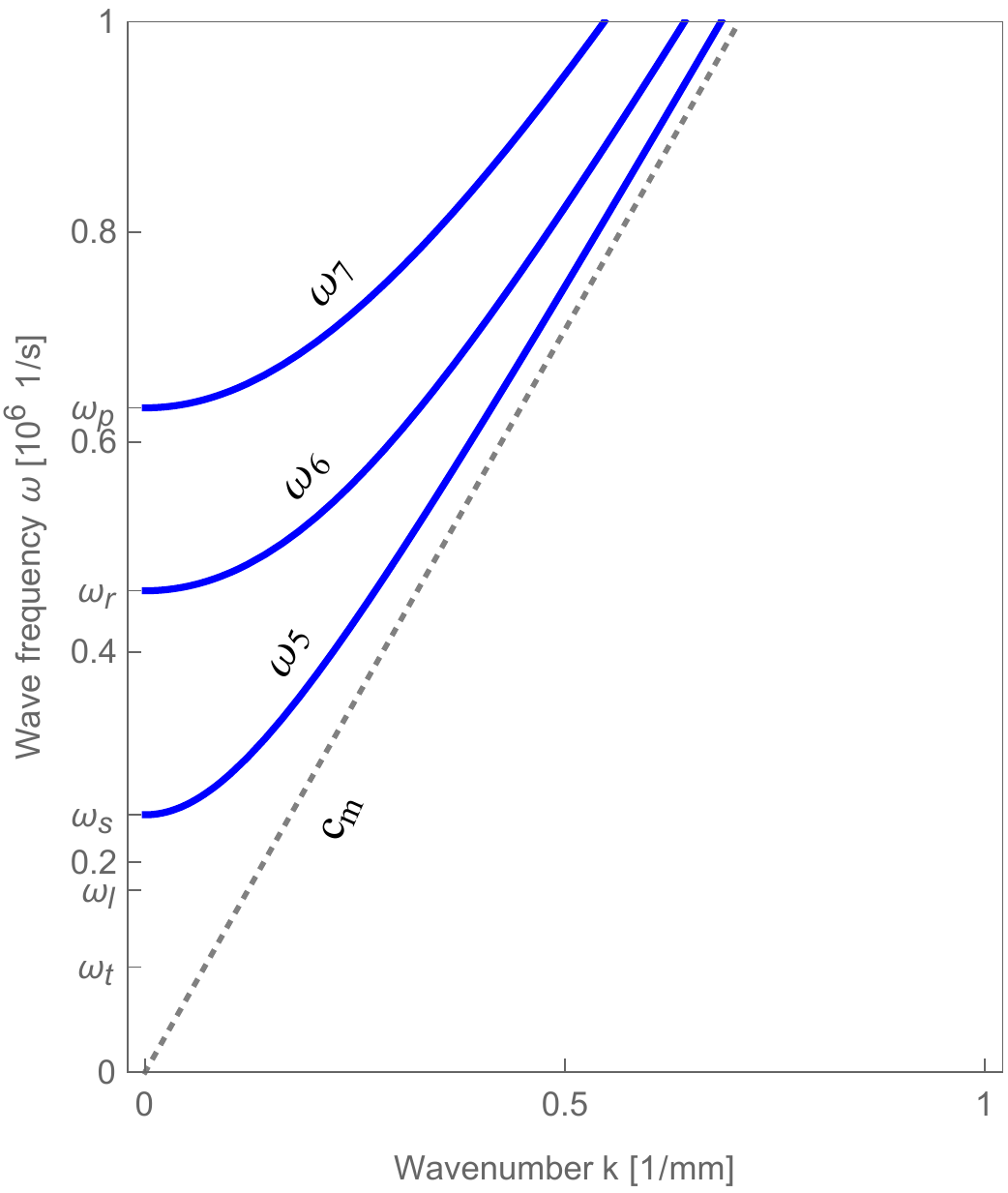} & 
		\includegraphics[scale=0.65]{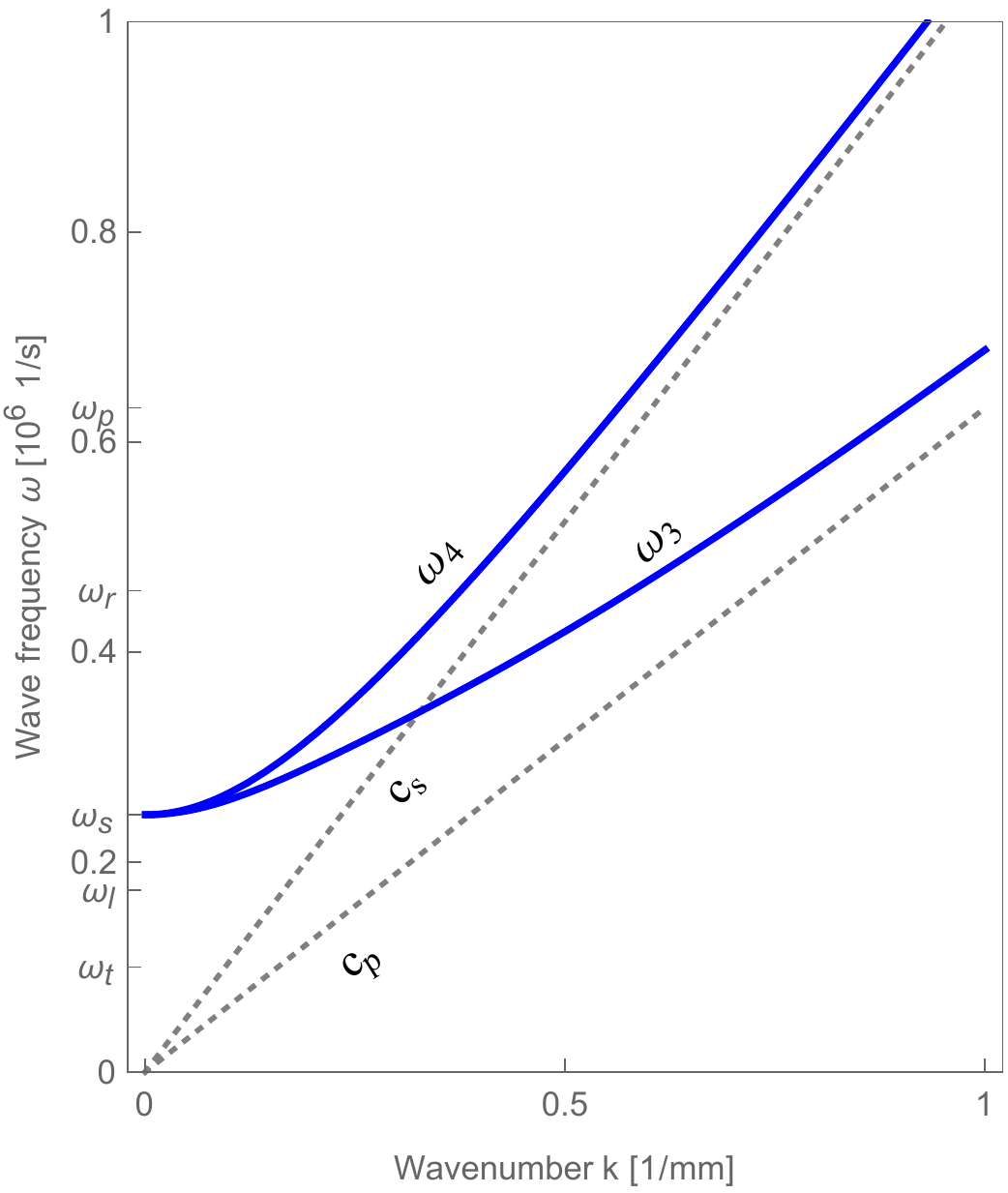}
		\tabularnewline
		(a) & (b)  
	\end{tabular}
	\caption{\small (a) Three distinct modes have distinct cut-off frequencies but a common asymptote with slope $c_m$, (b) Two distinct modes have the same cut-off frequencies but asymptotes with different slopes $c_f$ and $c_p$.}
	\label{fig:A1asymp}
\end{figure}
\begin{figure}[h!]
	\centering
	\includegraphics[scale=0.65]{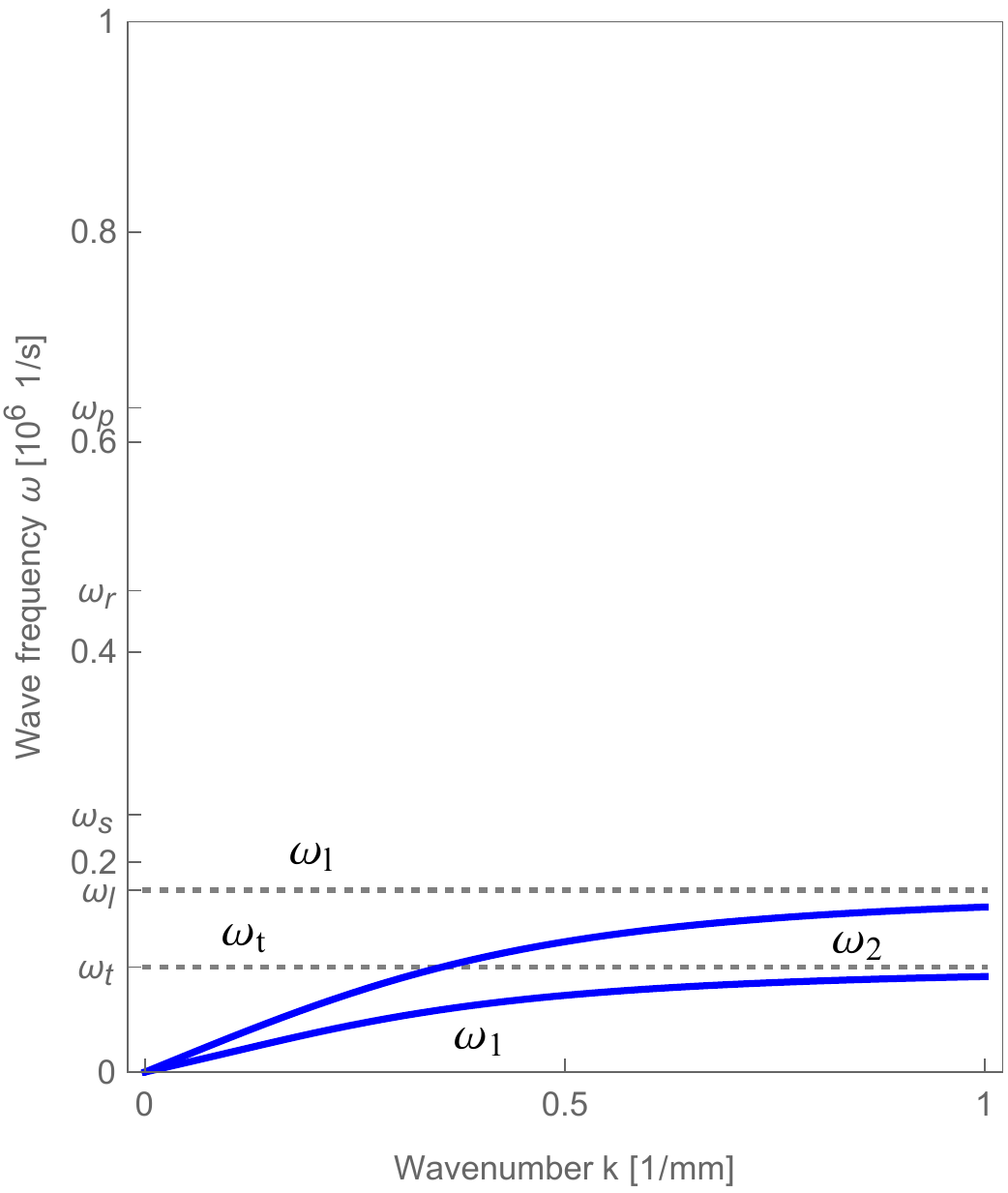} 
	\caption{\small Two modes with horizontal asymptotes $\omega_l$ and $\omega_t$.}
	\label{fig:A1horizasymp}
\end{figure}

Finally, the two horizontal asymptotes are $\omega = \omega_l$ and $\omega = \omega_t$ (see Section \ref{sec:horizasymptotes} for the analytical method of how to find them).

The first, second and fifth modes all have the same asymptote $\omega = c_m \kabs$ (Figure \ref{fig:A1asymp} (a)). Unfortunately, due to the very complicated expressions for $\omega_2(\kabs)$ and $\omega_3(\kabs)$, this cannot be analytically reaffirmed; numerically, however, we find:
%\[
%\text{there exists } \eps>0 \text{ such that: } |\omega_i(k) - c_m k|<\eps, \quad k\to\infty,
%\]
\[
\lim_{\kabs\to \infty} |\omega_i(\kabs) - c_m \kabs|=0,
\]
for $i=1,2,5$.

Figure \ref{fig:dispA1} summarizes the main characteristics of the dispersion curves that we find for an isotropic relaxed micromorphic medium (cut-off frequencies, horizontal and oblique asymptotes). Figures \ref{fig:A1asymp} (a), (b) and \ref{fig:A1horizasymp} show in more detail the oblique and horizontal asymptotes for the different modes.

\subsubsection{Out-of-plane variables}\label{appendixRelaxed42}
Similarly to the in-plane case, we can again write 
\begin{equation}
A_2 = |\widetilde{k}|^2A_2^R-\omega^2\idmat-i|\widetilde{k}|B_2^R-C_2^R.
\end{equation}
We now investigate the behavior of the polynomial $\det A_2$ for the two limiting cases $|\widetilde{k}| \to 0$ and $|\widetilde{k}| \to \infty$. 

Letting $|\widetilde{k}| \to 0$, equation $A_2 \cdot v^2 = 0$ becomes 
\begin{equation}
\left( \begin{array}{ccccc}
\omega^2 & 0 & 0 & 0 & 0 \\
0 & \omega^2 - \omega_s^2 & 0 & 0 & 0 \\
0 & 0 & \omega^2 - \omega_r^2 & 0 & 0 \\
0 & 0 & 0 & \omega^2 - \omega_s^2 & 0 \\
0 & 0 & 0 & 0 & \omega^2 -\omega_r^2
\end{array}\right) \cdot v^2 = 0.
\end{equation}
This allows us to deduce that for small values of $|\widetilde{k}|$, the cut-off frequencies of the dispersion curves are $\omega = 0$, $\omega = \omega_s$, $\omega = \omega_r$ as shown in Figure \ref{fig:dispA2}.

We now let $|\widetilde{k}| \to \infty$ and we once again assume that the ratio $|\widetilde{k}|/\omega$ remains finite. Then, the system reduces to 
\begin{equation}
E_2 \cdot v^2 := (|\widetilde{k}|^2 A^R_2 - \omega^2 \idmat)\cdot v^2 = 0,
\end{equation}
%This new matrix $E_2$\footnote{We present the form such a matrix takes for the out-of-plane variables in the main text since it has a more transparent form, being of smaller dimension.} can be written as 
%\begin{equation*}
%E_2=\left( \begin{array}{ccccc}
%|\widetilde{k}|^2 c_s^2 - \omega^2 & 0 & 0 & 0 & 0\\
%0 \\
%0 & & \mathlarger{\mathlarger{\widetilde{E}_2}} & &\\
%0  \\
%0
%\end{array} \right),
%\end{equation*}
where\footnote{We present the form such a matrix takes for the out-of-plane variables in the main text since it has a more transparent form, being of smaller dimension.} 
%\begin{equation*}
%\widetilde{E}_2=\left( \begin{array}{cccc}
%|\widetilde{k}|^2 c_m^2(- \frac{1}{2}\xi_1^2+\xi_2^2 ) & \frac{1}{2} |\widetilde{k}|^2 c_m^2  & -\frac{1}{2}|\widetilde{k}|^2 c_m^2 \xi_1\xi_2 &\frac{1}{2}|\widetilde{k}|^2 c_m^2 \xi_1\xi_2\\
%\frac{1}{2}|\widetilde{k}|^2 c_m^2 & |\widetilde{k}|^2 c_m^2(\frac{1}{2}\xi_1^2 + \xi_2^2) - \omega^2 & -\frac{1}{2}|\widetilde{k}|^2 c_m^2 \xi_1\xi_2 & -\frac{1}{2}|\widetilde{k}|^2 c_m^2 \xi_1\xi_2 \\
%-\frac{1}{2}|\widetilde{k}|^2 c_m^2 \xi_1\xi_2 & \frac{1}{2}|\widetilde{k}|^2 c_m^2 \xi_1\xi_2 & |\widetilde{k}|^2 c_m^2(\xi_1^2 + \frac{1}{2}\xi_2^2) & \frac{1}{2}|\widetilde{k}|^2 c_m^2\xi_2^2 \\
%\frac{1}{2}|\widetilde{k}|^2 c_m^2 \xi_1\xi_2 & -\frac{1}{2}|\widetilde{k}|^2 c_m^2 \xi_1\xi_2 & \frac{1}{2}|\widetilde{k}|^2 c_m^2\xi_2^2 & |\widetilde{k}|^2 c_m^2(\xi_1^2 + \frac{1}{2}\xi_2^2) -\omega^2
%\end{array} \right)
%\end{equation*}
\begin{equation}
E_2=\left( \begin{array}{ccccc}
|\widetilde{k}|^2 c_f^2 - \omega^2 & 0 & 0 & 0 & 0\\
0 & |\widetilde{k}|^2 c_m^2\left(\frac{1}{2}\xi_1^2+\xi_2^2\right)-\omega^2 & \frac{1}{2}|\widetilde{k}|^2 c_m^2\xi_1^2  & -\frac{1}{2}|\widetilde{k}|^2 c_m^2 \xi_1\xi_2 &\frac{1}{2}|\widetilde{k}|^2 c_m^2 \xi_1\xi_2\\
0 &\frac{1}{2}|\widetilde{k}|^2 c_m^2\xi_1^2 & |\widetilde{k}|^2 c_m^2\left(\frac{1}{2}\xi_1^2 + \xi_2^2\right)-\omega^2 & \frac{1}{2}|\widetilde{k}|^2 c_m^2 \xi_1\xi_2 & -\frac{1}{2}|\widetilde{k}|^2 c_m^2 \xi_1\xi_2 \\
0 &-\frac{1}{2}|\widetilde{k}|^2 c_m^2 \xi_1\xi_2 & \frac{1}{2}|\widetilde{k}|^2 c_m^2 \xi_1\xi_2 &|\widetilde{k}|^2 c_m^2(\xi_1^2 + \frac{1}{2}\xi_2^2)-\omega^2 & \frac{1}{2}\xi_|\widetilde{k}|^2 c_m^22^2 \\
0 &\frac{1}{2}|\widetilde{k}|^2 c_m^2 \xi_1\xi_2 & -\frac{1}{2}|\widetilde{k}|^2 c_m^2 \xi_1\xi_2 & \frac{1}{2}|\widetilde{k}|^2 c_m^2\xi_2^2 &|\widetilde{k}|^2 c_m^2( \xi_1^2 + \frac{1}{2}\xi_2^2 )-\omega^2
\end{array} \right). .
\end{equation}
%\small
%\begin{equation*}
%\widetilde{E}_2=|\widetilde{k}|^2 c_m^2\left( \begin{array}{cccc}
%\frac{1}{2}\xi_1^2+\xi_2^2 & \frac{1}{2}\xi_1^2  & -\frac{1}{2} \xi_1\xi_2 &\frac{1}{2} \xi_1\xi_2\\
%\frac{1}{2}\xi_1^2 & \frac{1}{2}\xi_1^2 + \xi_2^2 & \frac{1}{2} \xi_1\xi_2 & -\frac{1}{2} \xi_1\xi_2 \\
%-\frac{1}{2} \xi_1\xi_2 & \frac{1}{2} \xi_1\xi_2 &(\xi_1^2 + \frac{1}{2}\xi_2^2) & \frac{1}{2}\xi_2^2 \\
%\frac{1}{2} \xi_1\xi_2 & -\frac{1}{2} \xi_1\xi_2 & \frac{1}{2}\xi_2^2 & \xi_1^2 + \frac{1}{2}\xi_2^2 
%\end{array} \right) -\omega^2 \idmat_4
%\end{equation*}
%\normalsize
The determinant of $E_2$ is 
\begin{equation}
\det E_2 = -\omega^2 (|\widetilde{k}|^2c_m^2-\omega^2)^3 (|\widetilde{k}|^2c_f^2-\omega^2).
\end{equation}
The solutions of $\det E_2 =0$ obviously are $\omega = 0, \omega = c_m, \omega = c_f$. We have to exclude the first since it violates the condition that $k/\omega$ is finite for $|\widetilde{k}| \to \infty$ and so we conclude that the two non-horizontal asymptotes are the two remaining solutions. Finally, the only horizontal asymptote is $\omega = \omega_t$ (see Section \ref{sec:horizasymptotes}).
\begin{figure}[h!]
	\begin{centering}
		\begin{tabular}{ccc}
			\includegraphics[scale=0.65]{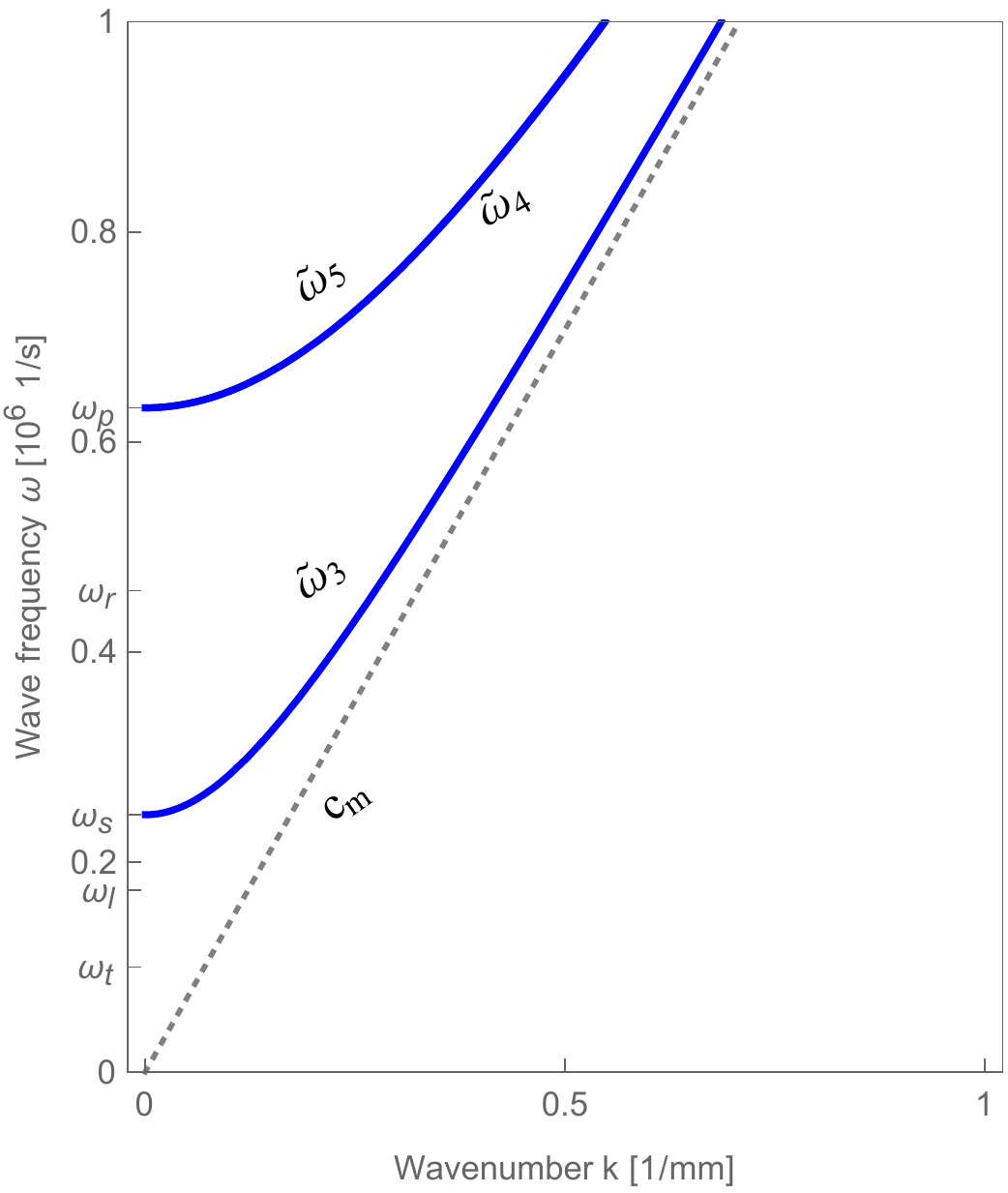} & &
			\includegraphics[scale=0.65]{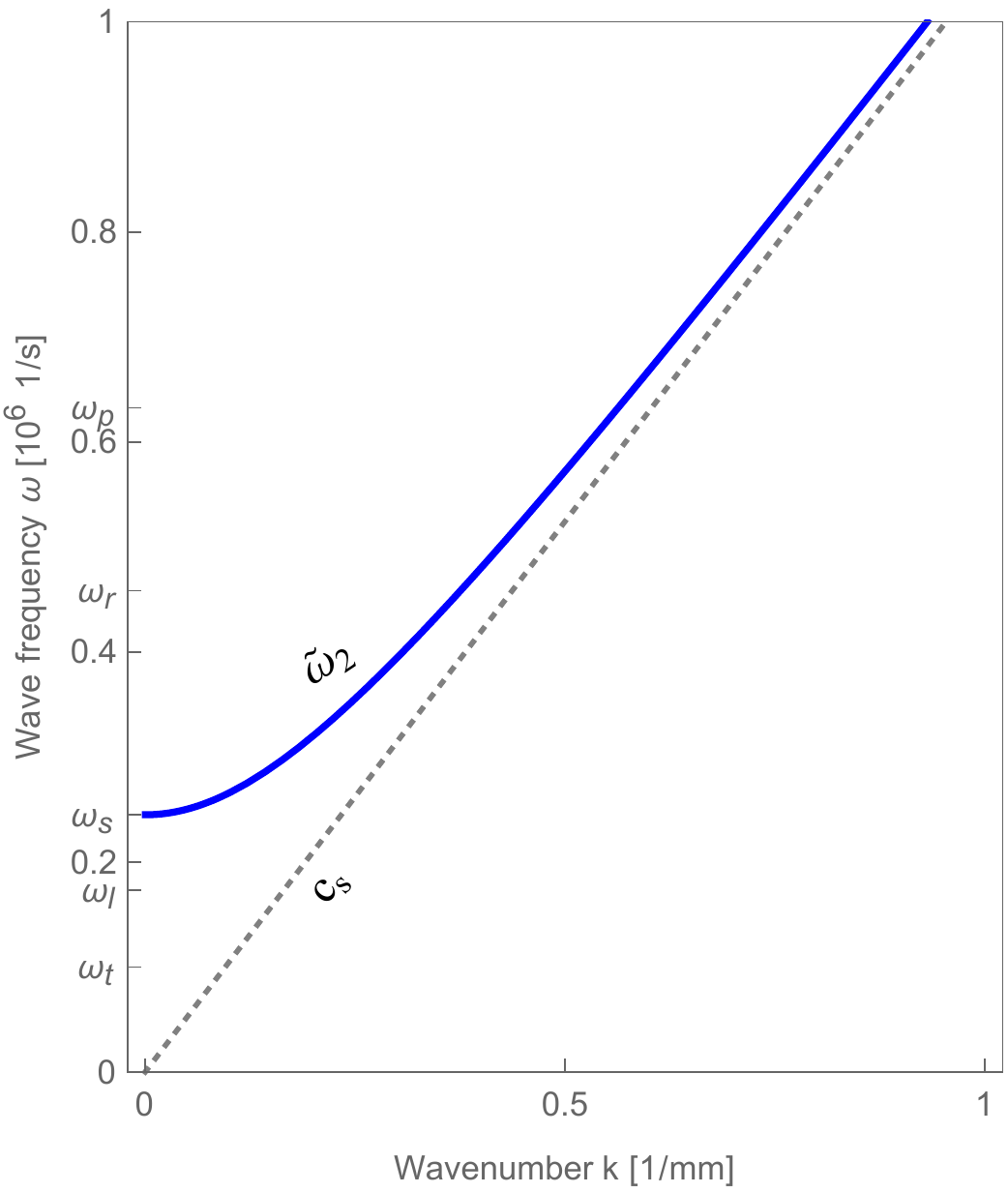}
			\tabularnewline
			(a) &  & (b)  \tabularnewline
		\end{tabular}
		\par\end{centering}
	\caption{\small (a) First mode with cut-off frequency $\omega_p$ and second and third modes which overlap for large values of $|\widetilde{k}|$, all have the same asymptote $c_m$, (b) The fourth mode has an asymptote with slope $c_f$.}
	\label{fig:A2oblique}
\end{figure}
\begin{figure}[h!]
	\begin{centering}
		\begin{tabular}{c}
			\includegraphics[scale=0.65]{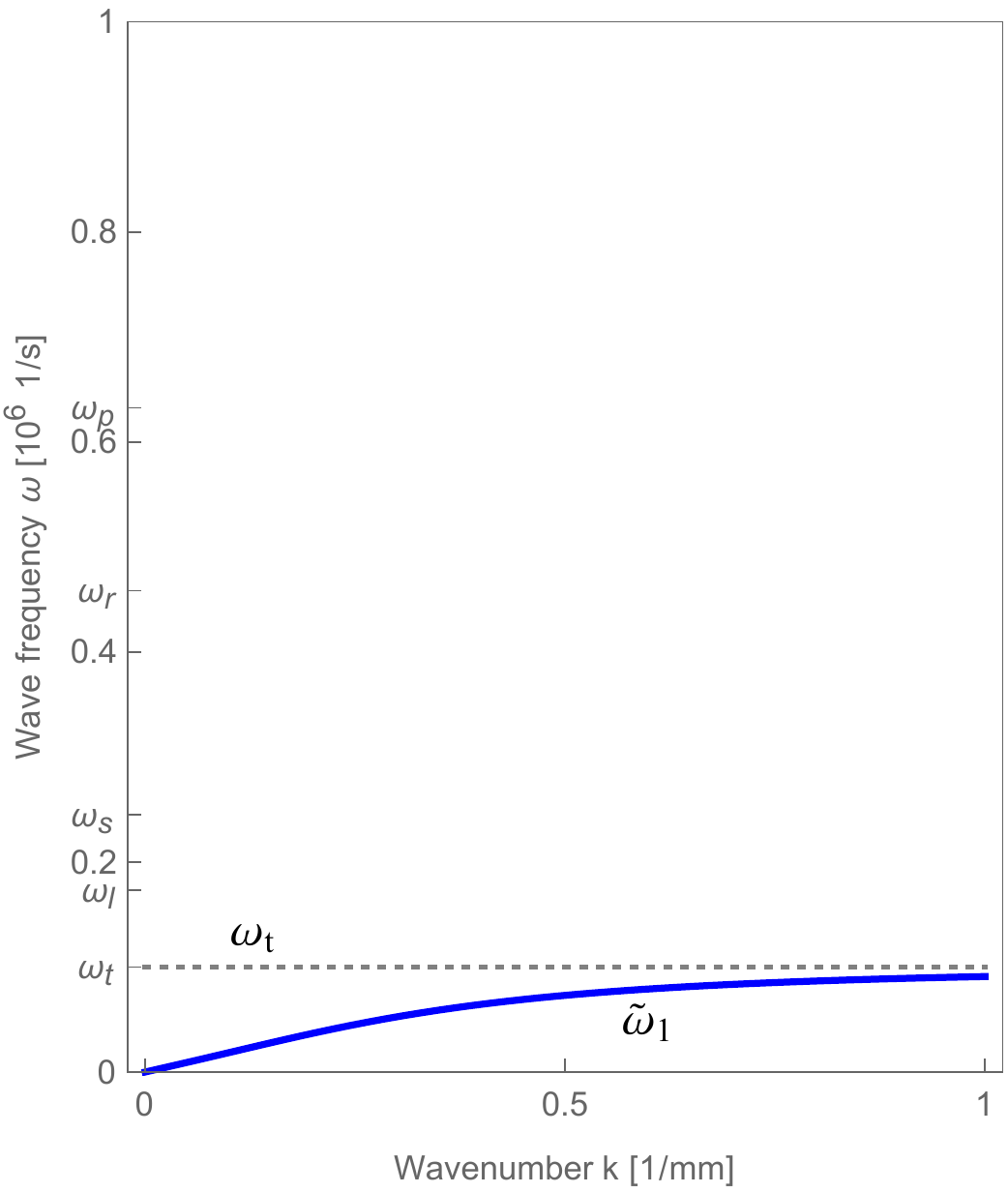} 
		\end{tabular}
		\par
	\end{centering}
	\caption{\small The fourth mode has the horizontal asymptote $\omega_t$.}
	\label{fig:A2horizontal}
\end{figure}

In this case, we have that as $|\widetilde{k}|\to\infty$, the two modes $\widetilde{\omega}_2$ and $\widetilde{\omega}_3$ overlap. Indeed, it can be checked (numerically) that
%\[
%\text{there exists }\eps>0 \text{ such that: } |\widetilde{\omega}_2(k)-\widetilde{\omega}_3(k)|<\eps, \quad k\to \infty.
%\]
\[
\lim_{|\widetilde{k}|\to \infty} |\widetilde{\omega}_2(|\widetilde{k}|)-\widetilde{\omega}_3(|\widetilde{k}|)|=0.
\]
Furthermore, we again see that the first, second and third modes have the same asymptote $\omega = c_m |\widetilde{k}|$, a fact which cannot be analytically shown due to the complicated expressions involved; however, numerically we find that
%\[
%\text{there exists } \eps>0 \text{ such that: } |\widetilde{\omega}_i(k) - c_m k|<\eps, \quad k\to\infty,
%\]
\[
\lim_{|\widetilde{k}|\to \infty} |\widetilde{\omega}_i(|\widetilde{k}|) - c_m |\widetilde{k}||=0,
\]
for $i=1,2,3$.

Figure \ref{fig:dispA2} summarizes the main characteristics of the second set of dispersion curves for the relaxed micromorphic medium. Figures \ref{fig:A2oblique} and \ref{fig:A2horizontal} show in more detail the oblique and horizontal asymptotes for different curves.

%Having computed the wavenumbers, which are solutions of $\det A_1=0$, we can  solve the equations $A_2(\widetilde{|\widetilde{k}|}(\omega),|\widetilde{k}|)\cdot \widetilde{h}^i$, for $i=1,\ldots,4$ in order to determine the unknown unit vectors $ \widetilde{h}^i$, $i=1,\ldots,4$, (as done in Sections \ref{sec:Cauchywaveprop} and \ref{sec:planewaverelaxedA1}) with respect to which we can write the solution of the out-of-plane variables. Finally, we have:
%\begin{equation}\label{v2}
%v^2(x,t) =\beta_1  \widetilde{h}^1e^{i(\widetilde{k}_1\left\langle x, \xi \right\rangle-\omega t)}+\beta_2  \widetilde{h}h^2e^{i(\widetilde{k}_2\left\langle x, \xi \right\rangle-\omega t)}+\beta_3  \widetilde{h}^3e^{i(\widetilde{k}_3\left\langle x, \xi \right\rangle-\omega t)}+\beta_4  \widetilde{h}^4e^{i(\widetilde{k}_4\left\langle x, \xi \right\rangle-\omega t)},
%\end{equation}
%%Computation of the aasymptotes 
%where $\beta_1,\ldots,\beta_4 \in \C$ are the unknown scalar amplitudes and $\widetilde{h}^1,\ldots, \widetilde{h}^4\in \C^5$ are the newly computed vectors, solutions to the equation $A_2(\widetilde{k}_i(\omega),\omega)\cdot  \widetilde{h}^i=0$ ($i=1,\ldots,4$).

\subsubsection{How to find the horizontal asymptotes}\label{sec:horizasymptotes}
In order to be sure that the horizontal asymptotes, computed before by splitting the whole problem of the relaxed model in two uncoupled problems, are $\omega_t$ and $\omega_l$, we follow what was done in \cite{dagostino2017panorama}.

We begin by constructing the following matrix:
\begin{equation}\label{eq:A0}
A_0=\left(
\begin{array}{cccccccccccc}
& & & & & & & 0 & 0 & 0 & 0 & 0 \\
& & & & & & & 0 & 0 & 0 & 0 & 0 \\
& & & & & & & 0 & 0 & 0 & 0 & 0 \\
& & & \mathlarger{\mathlarger{A_1}} & & & & 0 & 0 & 0 & 0 & 0 \\
& & & & & & & 0 & 0 & 0 & 0 & 0 \\
& & & & & & & 0 & 0 & 0 & 0 & 0 \\
& & & & & & & 0 & 0 & 0 & 0 & 0 \\
0 & 0 & 0 & 0 & 0 & 0 & 0 & & & & & \\
0 & 0 & 0 & 0 & 0 & 0 & 0 & & & & & \\
0 & 0 & 0 & 0 & 0 & 0 & 0 & & & \mathlarger{\mathlarger{A_2}} & & \\
0 & 0 & 0 & 0 & 0 & 0 & 0 & & & & & \\
0 & 0 & 0 & 0 & 0 & 0 & 0 & & & & & \\
\end{array}\right),
\end{equation}
which is the full matrix of coefficients of the governing equations \eqref{eq:relaxedeqns} after making the plane-wave ansatz. Furthermore, we define the two auxiliary matrices
\begin{equation}
\widehat{A}_1 = \left(\begin{array}{cc}
A_1 & 0 \\
0 & \idmat_5
\end{array}\right), \quad
\widehat{A}_2 = \left(\begin{array}{cc}
\idmat_7 & 0 \\
0 & A_2
\end{array}\right).
\end{equation}
We immediately see that
\begin{equation}
A_0=\widehat{A}_1 \widehat{A}_2,
\end{equation}
which implies that
\begin{equation}
\det A_0 = \det \widehat{A}_1 \det \widehat{A}_2 = \det A_1 \det A_2.
\end{equation}
Now, the polynomial $p(\kabs,\omega) = \det A_0$ is of degree $18$ in $\kabs$ and only involves even powers of $\kabs$. We can write it as 
\begin{equation}
p(\kabs,\omega) = \sum _{h=0}^{9}c_{2h}(\omega^2)\kabs^{2h}.
\end{equation}

Following the same proof as found in \cite{dagostino2017panorama}, we find again the necessary condition a horizontal asymptote has to satisfy is $c_{18}(\omega_{*}^2)=0$.\footnote{This result follows if we divide by $\kabs^{18}$ and let $\kabs\to \infty$.} This gives us the two solutions 
\begin{equation}\label{eq:horizasymptote}
\omega = \omega_t, \quad \omega = \omega_l.
\end{equation}

\subsection{Decompositions of $A_1$ and $A_2$}\label{appendixRelaxed5}
We recall that 
\begin{align}
A_1 &= \kabs^2 A^R_1-\omega^2 \idmat - i \kabs B^R_1 - C^R_1,\\
A_2 &= \kabs^2 A^R_2-\omega^2 \idmat - i \kabs B^R_2 - C^R_2.
\end{align}
The matrices $A^R_1, B^R_1, C^R_1 , A^R_2, B^R_2, C^R_2$, are presented here. 
\scriptsize
\begin{equation}
A^R_1=\left(\begin{array}{ccccccc}
\xi_1^2\frac{\lame+2\mue}{\rho}+\xi_2^2\frac{\muc+\mue}{\rho}&
\xi_1\xi_2\frac{\mue-\muc+\lame}{\rho}&0&0&0&0&0\\
-\xi_1\xi_2\frac{\mue-\muc+\lame}{\rho}&
\xi_1^2\frac{\muc+\mue}{\rho}+\xi_2^2\frac{\lame+2\mue}{\rho}&0&0&0&0&0\\
0&0&(\xi_1^2+3\xi_2^2)\frac{\mue\Lc}{3\eta}&
\xi_2^2\frac{\mue\Lc}{3\eta}&
(-2\xi_1^2+\xi_2^2)\frac{\mue\Lc}{3\eta}
&-\xi_1\xi_2\frac{\mue\Lc}{3\eta}
& -\xi_1\xi_2\frac{\mue\Lc}{\eta}\\
0&0&\xi_1^2\frac{\mue\Lc}{3\eta}&
(\xi_1^2-+\xi_2^2)\frac{\mue\Lc}{3\eta}&
(\xi_1^2-\xi_2^2)\frac{2\mue\Lc}{3\eta}&
-\xi_1\xi_2\frac{\mue\Lc}{3\eta}
&\xi_1\xi_2\frac{\mue\Lc}{\eta}\\
0&0&
-\xi_1^2\frac{\mue\Lc}{3\eta}&
-\xi_2^2\frac{\mue\Lc}{3\eta}&
\frac{2\mue\Lc}{3\eta}&
-\xi_1\xi_2\frac{2\mue\Lc}{3\eta}&
0\\
0&0&
-\xi_1\xi_2\frac{\mue\Lc}{2\eta}&
-\xi_1\xi_2\frac{\mue\Lc}{2\eta}&
-\xi_1\xi_2\frac{\mue\Lc}{\eta}&
\frac{\mue\Lc}{2\eta}&
(\xi_1^2-\xi_2^2)\frac{\mue\Lc}{2\eta}\\
0&0&
-\xi_1\xi_2\frac{\mue\Lc}{2\eta}&
\xi_1\xi_2\frac{\mue\Lc}{2\eta}&
0&
(\xi_1^2-\xi_2^2)\frac{\mue\Lc}{2\eta}&
\frac{\mue\Lc}{2\eta}
\end{array}\right),
\end{equation}
\small

\begin{equation}\label{BR1}
B^R_1=\left(\begin{array}{ccccccc}
0&0&-\xi_1\frac{2\mue}{\rho}&0&-\xi_1\frac{3\lame+2\mue}{\rho}&-\xi_2\frac{2\mue}{\rho}&-\xi_2\frac{2\mue}{\rho}\\
0&0&0&-\xi_2\frac{2\mue}{\rho}&-\xi_2\frac{3\lame+2\mue}{\rho}&-\xi_1\frac{2\mue}{\rho}&\xi_1\frac{2\mue}{\rho}\\
\xi_1\frac{4\mue}{3\eta}&-\xi_2\frac{2\mue}{3\eta}&0&0&0&0&0\\
-\xi_1\frac{2\mue}{3\eta}&\xi_2\frac{4\mue}{3\eta}&0&0&0&0&0\\
\xi_1\frac{3\lame+2\mue}{3\eta}&\xi_2\frac{3\lame+2\mue}{3\eta}&0&0&0&0&0\\
\xi_2\frac{\mue}{\eta}&\xi_1\frac{\mue}{\eta}&0&0&0&0&0\\
\xi_2\frac{\muc}{\eta}&-\xi_1\frac{\muc}{\eta}&0&0&0&0&0
\end{array}\right),
\end{equation}
\begin{equation}\label{CR1}
C^R_1=\left(\begin{array}{ccccccc}
0&0&0&0&0&0&0\\
0&0&0&0&0&0&0\\
0&0&-\frac{2(\mue+\mumic)}{\eta}&0&0&0&0\\
0&0&0&-\frac{2(\mue+\mumic)}{\eta}&0&0&0\\
0&0&0&0&-\frac{(2\mue+3\lame)+(2\mumic+3\lammic)}{\eta}&0&0\\
0&0&0&0&0&-\frac{2(\mue+\mumic)}{\eta}&0\\
0&0&0&0&0&0&-\frac{2\muc}{\eta}
\end{array}\right).
\end{equation}
Finally, we have that 
\begin{equation}\label{A1Decomp}
A_1= \kabs^2 A^R_1 -\omega^2 \idmat -i\kabs B^R_1 - C^R_1.
\end{equation}
As for $A_2$, it can be written in a similar fashion. We have
\begin{equation}\label{AR2}
A^R_2=\left(\begin{array}{ccccc}
\frac{\muc + \mue}{\rho} & 0 & 0 & 0 & 0\\
0 & \frac{\mue\Lc}{\eta}(\frac{1}{2}\xi_1^2 +\xi_2^2 ) & \frac{\mue\Lc}{2\eta}\xi_1^2 & -\frac{\mue\Lc}{2\eta}\xi_1\xi_2 & \frac{\mue\Lc}{2\eta}\xi_1\xi_2 \\
0& \frac{\mue\Lc}{2\eta}\xi_1^2 & \frac{\mue\Lc}{\eta}(\frac{1}{2}\xi_1^2+\xi_2^2) & \frac{\mue\Lc}{2\eta}\xi_1\xi_2 & -\frac{\mue\Lc}{2\eta}\xi_1\xi_2\\
0& -\frac{\mue\Lc}{2\eta}\xi_1\xi_2 & \frac{\mue\Lc}{2\eta}\xi_1\xi_2 &\frac{\mue\Lc}{\eta}(\xi_1^2 +\frac{1}{2}\xi_2^2 ) & \frac{\mue\Lc}{2\eta}\xi_2^2 \\
0 &  \frac{\mue\Lc}{2\eta}\xi_1\xi_2 &  -\frac{\mue\Lc}{2\eta}\xi_1\xi_2 & \frac{\mue\Lc}{2\eta}\xi_2^2 &\frac{\mue\Lc}{\eta}(\xi_1^2 + \frac{1}{2}\xi_2^2)
\end{array}\right),
\end{equation}
\begin{equation}\label{BR2CR2}
\begin{array}{cc}
B^R_2 = \left( \begin{array}{ccccc}
0 & -\frac{2\mue}{\rho}\xi_1 & \frac{2\muc}{\rho}\xi_1 & -\frac{2\mue}{\rho}\xi_2 &  \frac{2\muc}{\rho}\xi_2 \\
\frac{\mue}{\eta}\xi_1 & 0 & 0 & 0 & 0\\
-\frac{\muc}{\eta}\xi_1 & 0 & 0 & 0 & 0\\
\frac{\mue}{\eta}\xi_2 & 0 & 0 & 0 & 0\\
-\frac{\muc}{\eta}\xi_2 & 0 & 0 & 0 & 0
\end{array}\right),&
C^R_2=\left(\begin{array}{ccccc}
0 & 0 & 0 & 0 & 0\\
0 & -\frac{2(\mue+\mumic)}{\eta} & 0 & 0 & 0\\
0 & 0 & -\frac{2\muc}{\eta} & 0 & 0\\
0 & 0 & 0& -\frac{2(\mue + \mumic)}{\eta} & 0 \\
0 & 0 & 0 & 0 & -\frac{2\muc}{\eta}
\end{array}\right).
\end{array}
\end{equation}

And then, we can write $A_2$ as 
\begin{equation}\label{A2Decomp}
A_2 = \kabs^2 A^R_2 - \omega^2\idmat - i\kabs B^R_2 - C^R_2.
\end{equation}

\subsection{Energy flux matrices}\label{appendixRelaxed6}
\small 
The expression for the energy flux in the relaxed micromorphic model for the case of in-plane motion is given by 
\begin{equation*}
H_1=v^1_{,t}\cdot (H^{11}\cdot v^1_{,1} +H^{12}\cdot v^1_{,2}+H^{13}\cdot v^1), 
\end{equation*}
where 
\footnotesize
\begin{equation}\label{eq:H11}
H^{11}=\left( 
\begin{array}{ccccccc}
-2\mue - \lame  & 0 & 0 & 0 & 0 & 0 & 0 \\
0 & -\mue -\muc & 0 & 0 & 0 & 0 & 0 \\
0 & 0 & -\Lc \mue & -\Lc \mue & \Lc \mue & 0 & 0 \\
0 & 0 & -\Lc \mue & -2\Lc \mue & 0 & 0 & 0 \\
0 & 0 & \Lc \mue & 0 & -2\Lc \mue & 0 & 0\\
0 & 0 & 0 & 0& 0 & -\Lc \mue & -\Lc \mue \\
0 & 0 & 0 & 0& 0 & -\Lc \mue & -\Lc \mue
\end{array}\right),
\end{equation}
\begin{equation}\label{eq:H12}
\begin{array}{cc}
H^{12}= \left(
\begin{array}{ccccccc}
0 & -\lame & 0 & 0 & 0 & 0 & 0 \\
\muc - \mue & 0 & 0 & 0 & 0 & 0 & 0\\
0 & 0 & 0 & 0 & 0 & 0 & 0 \\
0 & 0 & 0 & 0 & 0 & \Lc \mue & -\Lc \mue \\
0 & 0 & 0 & 0 & 0 & \Lc \mue & -\Lc \mue \\
0 & 0 & \Lc \mue & 0 & \Lc \mue & 0 & 0 \\
0 & 0 & \Lc \mue & 0 & \Lc \mue & 0 & 0
\end{array}\right),&
H^{13} = \left(
\begin{array}{ccccccc}
0 & 0 & 2\mue & 0 &  2\mue + 3\lame & 0 & 0 \\
0 & 0 & 0 & 0 & 0 & 2\mue & -2\muc \\
0 & 0 & 0 & 0 & 0 & 0 & 0 \\
0 & 0 & 0 & 0 & 0 & 0 & 0 \\
0 & 0 & 0 & 0 & 0 & 0 & 0 \\
0 & 0 & 0 & 0 & 0 & 0 & 0 \\
0 & 0 & 0 & 0 & 0 & 0 & 0
\end{array} \right).
\end{array}
\end{equation}
%\begin{equation}\label{eq:H13}
%H^{13} = \left(
%\begin{array}{ccccccc}
%0 & 0 & 2\mue & 0 &  2\mue + 3\lame & 0 & 0 \\
%0 & 0 & 0 & 0 & 0 & 2\mue & -2\muc \\
%0 & 0 & 0 & 0 & 0 & 0 & 0 \\
%0 & 0 & 0 & 0 & 0 & 0 & 0 \\
%0 & 0 & 0 & 0 & 0 & 0 & 0 \\
%0 & 0 & 0 & 0 & 0 & 0 & 0 \\
%0 & 0 & 0 & 0 & 0 & 0 & 0
%\end{array} \right).
%\end{equation}
\small
In the case of out-of-plane motion, the expression for the energy flux in the relaxed micromorphic model is given by 
\footnotesize
\begin{equation*}
H_1=v^2_{,t}\cdot (H^{21}\cdot v^2_{,1} +H^{22}\cdot v^2_{,2}+H^{23}\cdot v^2), 
\end{equation*}
where 
\begin{equation}\label{H2122}
H^{21}=\left(
\begin{array}{ccccc}
-\muc-\mue & 0 & 0 & 0 & 0 \\
0 & 0 & 0 & 0 & 0 \\
0 & 0 & 0 & 0 & 0 \\ 
0 & 0 & 0 & 0 & 0 \\
0 & 0 & 0 & -\Lc \mue & \Lc \mue \\
0 & 0 & 0 & \Lc \mue & -\Lc \mue
\end{array}\right), \quad 
H^{22}=\left(
\begin{array}{ccccc}
0 & 0 & 0 & 0 & 0  \\
0 & 0 & 0 & 0 & 0 \\
0 & 0 & 0 & 0 & 0 \\ 
0 & 0 & 0 & 0 & 0 \\
0 & \Lc \mue & -\Lc \mue & 0 & 0  \\
0 & -\Lc \mue & \Lc \mue & 0 & 0 
\end{array}\right),
\end{equation}
%\begin{equation}\label{H22}
%H^{22}=\left(
%\begin{array}{ccccc}
%0 & 0 & 0 & 0 & 0  \\
%0 & 0 & 0 & 0 & 0 \\
%0 & 0 & 0 & 0 & 0 \\ 
%0 & 0 & 0 & 0 & 0 \\
%0 & \Lc \mue & -\Lc \mue & 0 & 0  \\
%0 & -\Lc \mue & \Lc \mue & 0 & 0 
%\end{array}\right),
%\end{equation}
\begin{equation}\label{H23}
H^{23}=\left(
\begin{array}{ccccc}
0 & 2\mue & -2\muc & 0 & 0  \\
0 & 0 & 0 & 0 & 0 \\
0 & 0 & 0 & 0 & 0 \\ 
0 & 0 & 0 & 0 & 0 \\
0 & 0 & 0 & 0 & 0  \\
0 & 0 & 0 & 0 & 0 
\end{array}\right).
\end{equation}

\end{document}